\newcolumntype{P}[1]{>{\centering\arraybackslash}p{#1}}
\newtheorem{thm}{Theorem}
\numberwithin{thm}{section}
\newtheorem{cor}[thm]{Corollary}
\newtheorem{lem}[thm]{Lemma}
\newtheorem{defn}[thm]{Definition}
\renewcommand{\thesection}{\arabic{section}}
\renewcommand{\thesubsection}{\thesection.\arabic{subsection}}
\renewcommand{\p@subsection}{}
\renewcommand{\p@subsubsection}{}
\newcommand\bea{\begin{eqnarray}}
\newcommand\eea{\end{eqnarray}}
\newcommand\be{\begin{equation}}
\newcommand\ee{\end{equation}}
\newcommand\bes{\begin{subequations}}
\newcommand\ees{\end{subequations}}
\newcommand\bed{\begin{displaymath}}
\newcommand\eed{\end{displaymath}}
\newcommand\beal{\begin{aligned}}
\newcommand\eeal{\end{aligned}}
\newcommand\bew{\begin{widetext}}
\newcommand\eew{\end{widetext}}
\newcommand\beit{\begin{itemize}}
\newcommand\eeit{\end{itemize}}
\def\bea{\begin{array}}
\def\eea{\end{array}}
\newcommand\been{\begin{enumerate}}
\newcommand\eeen{\end{enumerate}}
\newcommand{\ident}[0]{\mathds{1}}
\newcommand{\ii}{\mathrm{i}}
\newcommand{\transpose}[0]{\mathsf{T}}
\begin{document}

\title{Quantum memory at nonzero temperature in a thermodynamically trivial system}

\author{Yifan Hong}
\affiliation{Department of Physics and Center for Theory of Quantum Matter, University of Colorado, Boulder CO 80309, USA}

\author{Jinkang Guo}
\affiliation{Department of Physics and Center for Theory of Quantum Matter, University of Colorado, Boulder CO 80309, USA}

\author{Andrew Lucas}
\email{andrew.j.lucas@colorado.edu}
\affiliation{Department of Physics and Center for Theory of Quantum Matter, University of Colorado, Boulder CO 80309, USA}

\date{\today}

\begin{abstract}
Passive error correction protects logical information forever in the thermodynamic limit by updating the system based only on local information and few-body interactions. A paradigmatic example is the classical two-dimensional Ising model: a Metropolis-style Gibbs sampler retains the sign of the initial magnetization (a logical bit) for thermodynamically long times in the low-temperature phase.  Known models of passive quantum error correction similarly exhibit thermodynamic phase transitions to a low-temperature phase wherein logical qubits are protected by thermally stable topological order.  Here, in contrast, we show that certain families of constant-rate classical and quantum low-density parity check codes have no \emph{thermodynamic} phase transitions at nonzero temperature, but nonetheless exhibit \emph{ergodicity-breaking} dynamical transitions: below a critical nonzero temperature, the mixing time of local Gibbs sampling diverges in the thermodynamic limit.  Slow Gibbs sampling of such codes enables fault-tolerant passive quantum error correction using finite-depth circuits. This strategy is well suited to measurement-free quantum error correction, and may present a desirable experimental alternative to conventional quantum error correction based on syndrome measurements and active feedback. 
\end{abstract}

\maketitle

\section{Introduction}

There is a deep relationship between error correction and thermodynamics.  Data storage in magnetic devices (e.g. hard disk drives) relies on the fact that there are two stable magnetic configurations in a ferromagnet, such as the Ising model, in $d\ge 2$ spatial dimensions.  Whether the spins point up or down determines whether a classical bit of information is 0 or 1.  Crucially, the ferromagnetic phase is robust to thermal noise: at low temperatures, errors are \emph{passively} corrected by the environment itself! The time until the stored data becomes corrupted is exponentially long in the system size \cite{Thomas1989}.

The simplest analogy to the above story for a quantum error-correcting code is the four-dimensional toric code \cite{Dennis_2002}.  Intuitively, it corresponds to ``two copies of the 2D Ising model": since a quantum code must protect against both $X$ (bit-flip) and $Z$ (phase-flip) errors, half of the dimensions of the 4D toric code protect against each.  Similar to how the 2D Ising model exhibits ferromagnetic order below a nonzero critical temperature \cite{Onsager_1944}, the 4D toric code also has a nonzero critical temperature below which its thermal state contains \emph{topological order} \cite{alicki2008thermal}.  This topological order can protect logical quantum information forever in the thermodynamic limit, even in the presence of thermal noise.  In other words, the 4D toric code is a passive quantum memory.

In all known examples of passive quantum error correction, there is a thermodynamic phase transition at nonzero temperature to a topologically ordered phase \cite{alicki2008thermal, Yoshida_2011, Hastings_2011,Brown_2016,lieu}.  Such transitions are only known to exist in at least four spatial dimensions. Unfortunately, we live in three spatial dimensions.  Therefore, many existing attempts to realize quantum error correction in experiment involve \emph{active} decoding, where projective measurements and feedback are necessary to protect quantum information.  These active decoders typically require \emph{global} information about measurement outcomes, and so the cost of performing such classical computations will increase as one builds ever-larger quantum computers.  In contrast, a passively-decodable system corrects its own errors simply by thermalizing with a cold environment -- far simpler, at least in principle, to implement experimentally.

Motivated by the above challenges, we address the question of whether quantum memory can exist without any thermodynamic transition to a topologically ordered phase. An answer to this question is important, both because it will constrain the landscape of passively-correctable quantum codes, and because it is important to understand whether the limited nonlocality \cite{xu2023,Hong:2023trf} that can be realized in quantum hardware will enable passive quantum error correction.  

We will prove that a thermodynamic phase transition, at any nonzero temperature, is not necessary for finite-temperature self-correction in classical codes, and that a thermodynamic phase transition to a topologically ordered phase is not necessary for self-correction in quantum codes. Our analysis is of classical and quantum low-density parity-check (LDPC) codes \cite{Gallager_1962, Breuckmann_2021}.  ``Parity check" refers to the multi-bit generalizations of the ferromagnetic interactions in the Ising model; like in the Ising model, the interactions are frustration-free -- they can all be satisfied simultaneously if all bits are in the 0 state (all spins up).  Furthermore, the Hamiltonian of an LDPC code is $\mathsf{k}$-local -- each term in the Hamiltonian only couples $\mathsf{k}=\mathrm{O}(1)$ degrees of freedom in the thermodynamic limit. However, there is no constraint on the physical range of these interactions when the interacting degrees of freedom are embedded in physical (three-dimensional) space. In fact, a typical LDPC code cannot be locally embedded in any finite dimension.   The parity checks are ``low-density" because they are $\mathsf{k}$-local, and each bit only participates in a finite number of parity checks.  In this respect, these models are similarly ``few-body" to the simpler Ising model, but can avoid stringent constraints on quantum error correction in low spatial dimensions \cite{Bravyi_2010}.

The ``infinite-dimensionality" of LDPC codes naturally leads to two properties, each of which are critical for our story.  Firstly, LDPC codes can be non-redundant -- every single parity check is an \emph{independent} constraint from the others.  This is \emph{not} true in the two-dimensional Ising model, where the product of interactions around a square plaquette is always $+1$.  This non-redundancy ensures thermodynamic triviality of the random LDPC codes that we study.  Secondly, a LDPC codes can possess \emph{rapidly growing energy barriers} to small perturbations \cite{Sipser_1996}, implying that the ground states of its corresponding Hamiltonian reside in very deep local minima in the energy landscape.  The large energy barriers to escape these local minima ensure that Gibbs sampling \cite{Metropolis_1953, Hastings_1970}, a specific kind of passive error correction, is capable of trapping the system near its starting state, which enables the eventual decodability of the information.

Previous works on random, classical LDPC code ensembles have demonstrated both the thermodynamic triviality \cite{Montanari_2006} as well as the existence of a dynamical transition reminiscent to those in spin glasses \cite{Kirkpatrick_1987,Montanari_2006_2}. For completeness and pedagogy, we will derive these results explicitly.  We can then more directly explain the main result of this paper, which is the separation between quantum self-correction and thermodynamic phase transitions, which is qualitatively similar (but quantitatively different) to the classical story.


\section{Classical codes}

We first describe random classical LDPC codes, demonstrating both the simultaneous thermodynamic triviality and the self-correcting behavior.    Informally, a classical LDPC code is a collection of $n$ physical bits $Z_i = \pm 1$ and $m$ parity checks of the form $S = \lbrace s_1,\ldots, s_{\Delta_C}\rbrace$, which correspond to subsets of physical bits of size $\Delta_C$ in which the bits will interact via the following Hamiltonian: \begin{equation}
    \mathcal{H} = -\sum_{S=1}^m \prod_{j=1}^{\Delta_C} Z_{S_j} \, .
\end{equation}
Here $S_j$ denotes the $j^{\mathrm{th}}$ bit in $S$.  The $2^k \geq 2^{n-m}$ ground states of the above Hamiltonian are called logical codewords, generated by $k$ logical bits. Notice that one codeword ($\mathcal{H}=-m$) is the state $Z_i=1$ for all $i$.
We restrict our focus to codes in which each bit $Z_i$ is contained in exactly $\Delta_B$ checks, and take $\Delta_B<\Delta_C.$  It is conventional in the coding literature to organize the data above as follows: letting $\mathbb{F}_2=\lbrace 0,1\rbrace$, we consider an $\mathbb{F}_2$-valued $m\times n$ matrix $H$ such that
\begin{equation}\label{eq:parity check}
    H_{Si} = \left\lbrace\begin{array}{ll} 1 &\ \text{parity check $S$ contains bit $i$} \\ 0 &\ \text{otherwise}\end{array}\right..
\end{equation}
Defining vector $\mathbf{z}\in\mathbb{F}_2^n$ as $z_i = \frac{1}{2}(1-Z_i)$, \begin{equation} 
    \mathcal{H} = 2|H\mathbf{z}|-m.
\end{equation}
where $|\cdots |$ denotes the Hamming weight, or number of 1s, of any $\mathbb{F}_2$-valued vector. The code distance
\begin{equation}
    d = \min_{\text{codeword }\textbf{x}\ne\mathbf{0}} |\mathbf{x}|
\end{equation}
can be interpreted as the minimum number of bit flips to transition between codewords. These parameters are often conveniently packaged using the bracketed notation $[n,k,d]$.

Let us now review some key facts about \emph{randomly chosen} LDPC codes $H$ subject to the above constraints (formal statements and proofs are relegated to Appendix \ref{app:classical LDPC}) \cite{Sipser_1996,ModernCodingTheory}: (\emph{1}) almost surely as $n\rightarrow \infty$, for $\Delta_B\geq 3$, the rate of the code obeys
\begin{equation}
    \frac{k}{n} = \mathfrak{r} = 1-\frac{\Delta_B}{\Delta_C} \, . \label{eq:main_rate}
\end{equation}
In other words, the number of logical bits scales linearly with the system size $n$. (\emph{2}) There exist O(1) numbers $\gamma,\eta > 0$ such that for all  $\mathbf{z}$ obeying $0<|\mathbf{z}| \leq \gamma n$, $|H\mathbf{z}|\ge \eta |\mathbf{z}|$. This property is known as linear confinement, and it implies that a continuous path of bit flips between codewords will necessarily cross an extensive energy barrier. Since this confinement holds up to $\gamma n$, the code distance must also scale linearly with the system size.

Using these facts and following \cite{Montanari_2006, Yoshida_2011, freedman2013, Weinstein_2019}, we now show that this LDPC code has a trivial free energy in the thermodynamic limit.  Observe that since $k=n-m$, $H$ has no left null vectors.  Hence, given any $\mathbf{s}\in \mathbb{F}_2^m$, there are exactly $2^k$ candidate states $\mathbf{z}$ satisfying $\mathbf{s} = H\mathbf{z}$. It is then straightforward to evaluate 
\begin{equation}
    Z(\beta) = \sum_{\mathbf{z}} \mathrm{e}^{-\beta \mathcal{H}(\mathbf{z})}= \sum_{\mathbf{z}} \mathrm{e}^{-\beta (2 |H\mathbf{z}|-m)} = 2^k\sum_{\mathbf{s}} \mathrm{e}^{-\beta(2| \mathbf{s}|-m)} = 2^{\mathfrak{r}n} (2\cosh \beta)^{(1-\mathfrak{r})n}.
\end{equation}
The free energy density $f(\beta) = -n^{-1}\beta^{-1}\log Z(\beta)$ is clearly an analytic function for $0\le \beta < \infty$, which means that there is no thermodynamic phase transition. Appendix \ref{app:KW duality} provides another derivation of this fact based on generalizing classical Kramers-Wannier duality to generic parity-check codes with redundancy; yet another (quantum Kramers-Wannier) perspective on this result is found in \cite{Rakovszky:2023fng}.

Let us now study the dynamics of Gibbs sampling for this LDPC code.  We consider the discrete-time Metropolis algorithm which flips one bit at a time: \begin{equation}
    \mathbf{P}[\mathbf{z}\rightarrow \mathbf{z}^\prime \ne \mathbf{z}] = \frac{1}{n} \min\left(1, \mathrm{e}^{-\beta(\mathcal{H}(\mathbf{z}^\prime)-\mathcal{H}(\mathbf{z}))} \right) \, \mathbb{I}\left(|\mathbf{z}-\mathbf{z}^\prime|=1\right)  \, ,  \label{eq:db}
\end{equation}
where $\mathbb{I}(\cdot)$ is the indicator function which evaluates to 1 if its argument is true and 0 otherwise. Suppose that at time $t=0$, we start in the codeword $\mathbf{0}$.  To bound the time it takes for the Gibbs sampler to destroy information, it suffices to bound the time it takes for the Gibbs sampler to reach a state $\mathbf{z}_t \in E$, where the ``bottleneck" set $E$ consists of all $|\mathbf{z}_t| = \gamma n/2 < d/2$.   In thermal equilibrium, we observe that \begin{equation}
    \frac{\pi[E]}{\pi[\mathbf{0}]} \le \mathrm{e}^{-\beta \eta \gamma n} \,{n \choose \gamma n/2} \lesssim \mathrm{e}^{-(2\beta \eta -  \log(2/\gamma) - 1  )\gamma n/2} \, , \label{eq:piE/pi0}
\end{equation}
where $\pi$ denotes the Gibbs probability distribution: $\pi(\mathbf{x}) = \mathrm{e}^{-\beta \mathcal{H}(\mathbf{x})}/Z(\beta)$.  The Markov chain bottleneck theorem then shows that the Gibbs sampler, starting from codeword $\mathbf{0}$, will not reach the set $E$ before a typical time scale $t\gtrsim \mathrm{e}^{\mathrm{O}(n)}$ at low temperature. Below $E$, all states contain less than $d/2$ errors, and so are correctable in principle (under maximum-likelihood decoding). Specific decoders may require additional constraints; e.g., bit-flip can provably correct errors of weight up to $\gamma\eta\Delta_B^{-1} n$ for $\Delta_B\geq 5$ \cite{Sipser_1996}; these constant offsets do not change the overall behavior and can be accounted for by adjusting the definition of $E$. Since all transitions are equally probable at $T=\infty$ -- the system will fully mix after we consider flipping each spin once -- we deduce there must be a finite temperature $T_{\mathrm{c}}$ such that for $T<T_{\mathrm{c}}$, the system has (exponentially) slow mixing; our calculation provides a lower bound on $T_{\mathrm{c}}$.

\begin{figure*}[t]
\centering
\includegraphics[width=0.75\textwidth]{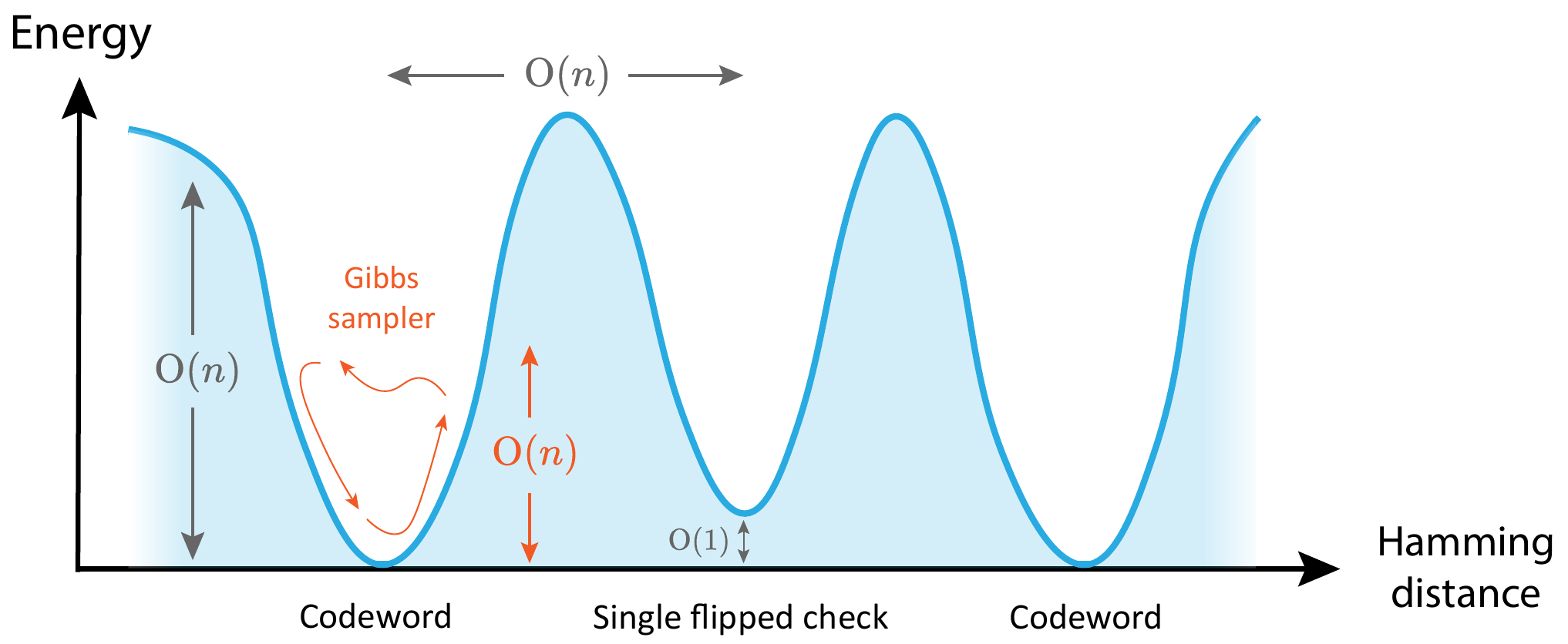}
\caption{A snapshot of the energy landscape of a typical good classical LDPC code is depicted. Extensively deep minima separate codewords and low-energy configurations consisting of single flipped parity checks. Below a critical temperature, the Gibbs sampler becomes trapped inside its initial minimum and is unable to explore the full state space, preventing the system from reaching thermal equilibrium.}
\label{fig:energy landscape}
\end{figure*}

How is it possible to have thermodynamic triviality, while the Gibbs sampler simultaneously protects a codeword?  The answer lies in the peculiar energy landscape of these LDPC codes, illustrated in Figure \ref{fig:energy landscape}.  Locally near each codeword, the linear confinement property guarantees a very deep minimum in the energy landscape. If we could restrict $Z(\beta)$  to only count states $\mathbf{x}$ with $|\mathbf{x}| \le \gamma n$, the Helmholtz free energy would exhibit a phase transition: above a critical temperature, the typical state would have $|\mathbf{x}|/n \rightarrow \gamma$, while below it the typical state is much closer to the codeword $\mathbf{0}$, as guaranteed by (\ref{eq:piE/pi0}).   In reality however, $Z(\beta)$ counts the \emph{whole} state space, and when we look beyond the local landscape near a given codeword, we see a ``swamp" of many very deep minima, some of which contain other codewords, but far more of which contain ``sick" configurations where, e.g., a single parity check is flipped \cite{Montanari_book, Rakovszky:2023fng, Rakovszky:2024iks}.  Indeed, we have seen that there must exist a state with a single flipped parity-check since $H$ has no left null vectors; at the same time, such a state will be far from all codewords, and it too will have an extremely deep energy well where the Gibbs sampler would be stuck.  The key point is the conceptual distinction between \emph{static} equilibrium and \emph{dynamical} equilibration -- the thermodynamic free energy detects \emph{all} of the exponentially many fake codewords, while the Gibbs sampler is stuck near whichever minimum it starts in.

To have a genuine thermodynamic phase transition, redundancy must be added to the parity checks, such that the sick configurations of Figure \ref{fig:energy landscape} gain finite energy density. The energetic suppression of these sick configurations is encapsulated by a closely related property to confinement known as soundness, which stipulates that small energy violations can always be produced by small errors. Note that the LDPC codes we have discussed are confined but not sound due to the low-energy sick configurations. The 2D Ising model, on the other hand, is both (sublinearly) confined and sound. The intimate relationship between redundancy and soundness has been studied in the context of locally testable codes (LTCs) \cite{Sasson_2009}, in which one can reliably determine whether a state is a codeword by only querying a small number of physical bits. ``Good" locally testable codes with constant rate, constant relative distance and constant query complexity ($c^3$-LTCs) have recently been found \cite{Panteleev_2022, lin2022, dinur2022LTC}. While redundancy and soundness is a feature of these codes, we have seen that they are unnecessary to have ergodicity-breaking at low temperatures.


\section{Quantum codes}

We now turn to a summary of analogous results for quantum LDPC codes \cite{Breuckmann_2021} of the Calderbank-Shor-Steane (CSS) type \cite{Calderbank_1996, Steane_1996}. We begin with two classical LDPC codes with $M\times N$ parity-check matrices $H_X$ and $H_Z$ respectively, subject to the orthogonality constraint $H_X H_Z^{\mathrm{T}} = 0$ (here the matrix multiplication uses $\mathbb{F}_2$ algebra).  For each row of $H_X$, we define an $X$-stabilizer which is the product of a Pauli $X$ acting on each qubit in the parity check; for $H_Z$, the stabilizers consist of Pauli $Z$s.  The orthogonality constraint above ensures that the $X$-type and $Z$-type stabilizers commute, and hence can be simultaneously diagonalized.  The $+1$ eigenspace of all stabilizers forms the Hilbert space of logical qubits, or codewords. Logical $X$/$Z$ operators correspond to products of Pauli $X$/$Z$ that commute with all $Z$/$X$-type stabilizers; if we have $K$ of each, then the code stores $K$ logical qubits.  The code distance $D$ is the length of the smallest such logical operator.  We define a quantum Hamiltonian to be the sum over these stabilizers: \begin{equation}
    \mathcal{H} = - \sum_{\text{$X$-stabilizer } S} \,\prod_{i\in S} X_i \;- \sum_{\text{$Z$-stabilizer } S} \,\prod_{i\in S}Z_i \, .
\end{equation}

It is quite challenging to find a quantum LDPC code of $N$ qubits in which the code distance $D=\mathrm{O}(N)$, although recently some have been found \cite{Panteleev_2022, qTanner_codes, dinur2022good}.  To keep the discussion as transparent as possible, we will instead discuss the simplest family of quantum LDPC codes, called hypergraph product (HGP) codes \cite{HGP}.  HGP codes are formed by a graph product of two classical codes, each of which we take to be a good random LDPC code of the kind previously mentioned with parameters $[n,k=\mathrm{O}(n),d=\mathrm{O}(n)]$.  The HGP construction is illustrated in Figure \ref{fig:HGP construction}, while explicit formulas are relegated to Appendix \ref{app:HGP codes}.  One can prove the following important facts about such random HGP codes, each of which holds almost surely in the thermodynamic limit: (\emph{1}) they are constant-rate codes with $N=n^2+(n-k)^2$, $D=d$, and $K/N = \mathfrak{r}^2$, with $\mathfrak{r}$ inherited from the classical LDPC code and given in (\ref{eq:main_rate}).  There are $M=n(n-k)$ checks of both $X$ and $Z$ type, and there are no redundant checks. (\emph{2}) Like classical LDPC codes, they can exhibit linear confinement \cite{Leverrier_2015}: given any binary string $\mathbf{x}\in\mathbb{F}^N_2$ corresponding to an $X$-type or $Z$-type error with weight $|\mathbf{x}|\le \gamma n$, we have $|H_X\mathbf{x}|,|H_Z\mathbf{x}|\ge \eta |\mathbf{x}|$ for some $\eta=\mathrm{O}(1)$. (\emph{3}) Every stabilizer involves exactly $\Delta_B+\Delta_C$ qubits, and each qubit is involved in at most $\Delta_C$ $X$-stabilizers and $\Delta_C$ $Z$-stabilizers.

\begin{figure*}[t]
\centering
\includegraphics[width=0.85\textwidth]{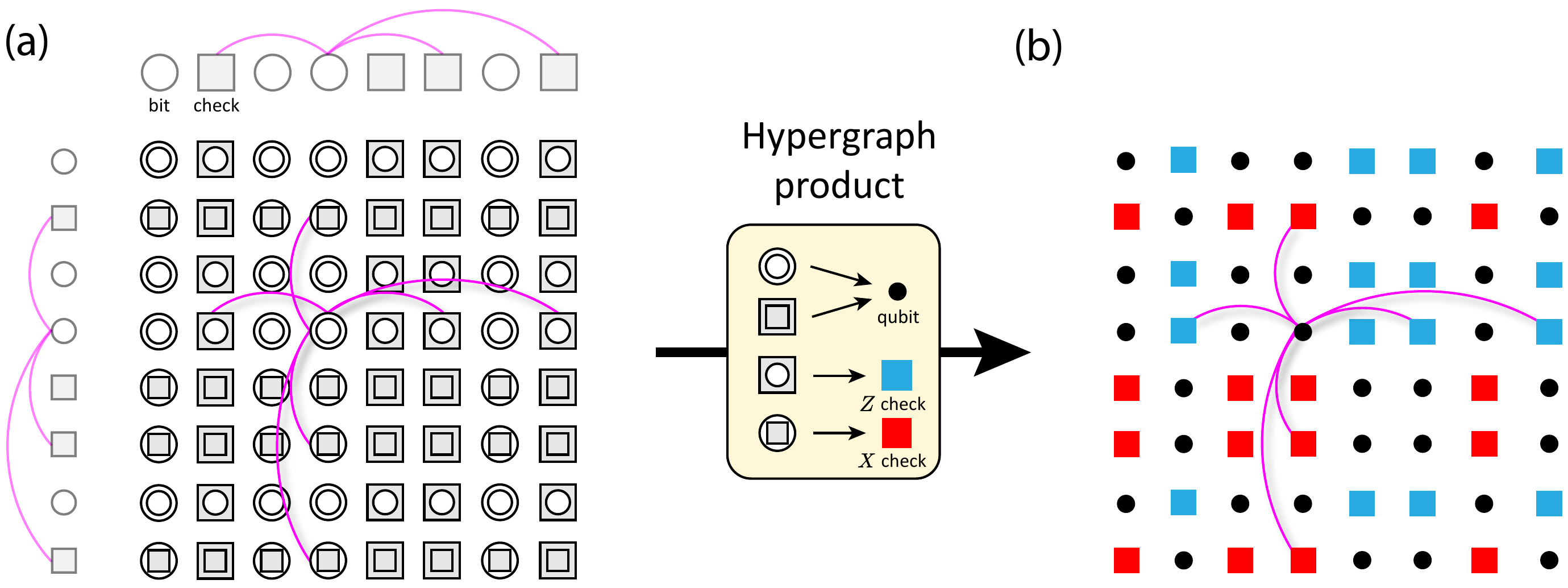}
\caption{The hypergraph product construction is illustrated. (a) The Euclidean graph product is taken between the Tanner graph of a classical parity-check code and itself, producing four types of vertices. (b) The hypergraph product maps the four types of vertices in the product graph to qubits, $X$-checks and $Z$-checks, thus producing a quantum CSS code. The interactions in the hypergraph product code resemble those of its classical input code (in magenta).}
\label{fig:HGP construction}
\end{figure*}

With these facts in hand, it is straightforward to show that there is no finite-temperature thermodynamic phase transition in such models. The proof is exactly analogous to that of the classical models: as the stabilizers mutually commute, we can evaluate the partition function in subspaces of fixed stabilizer eigenvalues, each of which has Hilbert space dimension $2^K$. Therefore, since the stabilizers are linearly independent,
\begin{equation}
  Z(\beta)=  \mathrm{tr}\left(\mathrm{e}^{-\beta \mathcal{H}}\right) = 2^{K} \left(2\cosh \beta\right)^{N-K}.
\end{equation}
The free energy density $F(\beta) = -N^{-1}\beta^{-1}\log Z$ is again an analytic function for $0\le \beta < \infty$, so there is no thermodynamic phase transition.

However, once again, Gibbs sampling is exponentially slow below a critical temperature.  Since the Hamiltonian $\mathcal{H}$ is a sum of independent $X$-checks and $Z$-checks, our quantum Gibbs sampler will simply measure the set of e.g. $X$-stabilizers containing a given qubit $i$, and with suitable probability analogous to (\ref{eq:db}), apply Pauli $Z_i$ to the state afterwards. We prove in Appendix \ref{app:quantum thermal decoding} that the time it takes to reach a state for which known decoders might fail to decode is $t \gtrsim \mathrm{e}^{\mathrm{O}(\sqrt{N})}$ for $\beta$ above some critical temperature.  The proof strategy is similar to the one for classical LDPC codes, but is slightly more involved as for HGP codes we do not have $D\sim N$.  We use the fact that errors typically form disconnected clusters, and these clusters rarely conspire to fool an eventual decoder.  States with such adversarial clusters are, at low temperature, found in the Gibbs ensemble with probability $\mathrm{e}^{-\mathrm{O}(\sqrt{N})}$, and so it will take us a long time to reach one. This is analogous to Peierls's argument that the two-dimensional Ising model is ordered at low temperatures because we are unlikely to find long domain walls \cite{Peierls_1936}.  Numerical simulations confirming the slow dynamics are presented in Appendix \ref{app:numerics}.

We understand the existence of this ergodicity-breaking in Gibbs sampling in exactly the same way as illustrated for classical LDPC codes.  The quantum LDPC codes have a very complex energy landscape, such that the Gibbs sampler will get stuck in a very deep energy minimum for exponentially long times; in contrast, the \emph{thermodynamics} is sensitive to the ``swamp" of exponentially many deep local minima corresponding to very large errors that flip few parity checks.   A thermal phase transition requires redundancy which our HGP codes do not have.  Numerical simulations had previously suggested that passive decoding might be possible even above a nonzero critical temperature for topological order \cite{melko}; we stress that in our construction, the model is completely thermodynamically trivial. Nonvanishing energy barriers have been the primary features of Haah's cubic code \cite{haah1} and welded codes \cite{Michnicki_2014}. Nonetheless, the lifetime of logical information in these codes is finite at any positive temperature due to entropic effects \cite{haah2, Siva_2017}.

Despite the qualitative similarity between the coexistence of thermodynamic triviality, and low-temperature self-correction, classical self-correction likely does \emph{not} directly imply self-correction in a quantum code formed from the hypergraph product of the classical codes: we provide both analytical and numerical arguments in Appendix \ref{app:tree codes} that there exist classical self-correcting codes whose hypergraph product is not a quantum self-correcting memory. This is because, as we explain, confinement in classical codes does not directly correspond to confinement in the hypergraph product quantum code.


\section{Measurement-free quantum error correction}

Due to the spatial nonlocality of the $\mathsf{k}$-local LDPC codes, it may be challenging to simply ``let them thermalize" with a cold bath: physical qubits must be arranged in three-dimensional space, and the most efficient way to implement the LDPC interactions may require dynamical motion of the physical qubits. Nevertheless, the existence of passive decoders enable new experimental paradigms for error correction when compared to spatially local (e.g. surface) codes.  One strategy uses ``measurement-free quantum error correction" (MFQEC) \cite{Ahn_2002, Sarovar_2005}, in which the user can only apply unitary gates and a specific ``reset" dissipative quantum channel to the physical qubits. In the many-body setting, MFQEC typically proceeds by engineering local dissipation \cite{Pastawski_2011}. More formally, given density matrix $\rho$ for the physical qubits, we can only apply $\rho \rightarrow U\rho\, U^\dagger$ for unitary $U$, or apply $\rho \rightarrow R_i[\rho]$ where channel $R_i$ acts on single qubit $i$ as $R_i[\rho_i] = |0\rangle\langle 0|$; here we defined $Z|0\rangle = |0\rangle$.  In addition to these processes, we also have errors occurring in the system, which we take to be independent and few-qubit, albeit occurring at a constant rate per qubit. The strategy behind MFQEC is to introduce ancilla qubits in addition to the $N$ physical qubits of the code.  At the beginning of each error correction round, we apply the reset channel such that the state of the system is $|\psi_{\mathrm{total}}\rangle = |\psi_{\mathrm{phys}}\rangle \otimes |\mathbf{0}\rangle_{\mathrm{anc}}$.
Next, we encode the state of stabilizers into the $Z$-eigenstate ancilla qubits. In conventional quantum error correction, we would then measure these ancilla qubits in the $Z$-basis, and apply feedback based on the outcomes. In MFQEC,  we instead just directly apply feedback to the physical qubits based on multi-qubit gates which couple them again to the ancillas. After this feedback is applied, we reset the ancillas.  Using Stinespring's formulation of measurement \cite{Stinespring,Friedman:2022vqb} wherein measurement outcomes are stored in auxiliary qubits in Hilbert space such that the measurement process becomes unitary, a round of MFQEC is mathematically equivalent to a round of conventional QEC.

If passive decoders rely only on local feedback, then the decoding circuit for MFQEC has \emph{finite depth} and requires only finite ancillas per physical qubit to implement. In Appendix \ref{app:implementation}, we discuss details of how to Gibbs sample with MFQEC and show that such a Gibbs sampler implies the existence of a fault-tolerant decoder. The crucial observation is that a Gibbs sampler is local \emph{and} always introduces errors into the system with non-zero probability. If the net error rate (including errors in syndrome measurement and decoding) $p_{\mathrm{error}} \le \mathrm{e}^{-2\beta\Delta_C}$, where $\beta>\beta_{\mathrm{c}}$ implies that we are below the critical temperature where Gibbs sampling is slow, and $\Delta_C$ is the maximal number of stabilizers that a single qubit error can flip, then we can always \emph{add} additional errors into the system (for qubits where some stabilizers were already violated) such that our error correction scheme realizes a perfect Gibbs sampler at low temperature, which we have proven is a passive decoder.  

In practice, we expect the Gibbs sampler can be replaced by a greedy decoder, such as bit-flip \cite{Sipser_1996}, which always tries to lower the energy of the system. Such a local decoder will \emph{not} in general drive the system \emph{exactly} to its logical codeword.  When information is to be retrieved, it can be done using conventional decoders. Gibbs sampling thus serves as a particular kind of ``single-shot decoder", where a single round of noisy syndrome measurement suffices to suppress errors to a correctable level \cite{Bombin_2015, Campbell_2019}.  Single-shot decoders exist for a variety of quantum LDPC code families \cite{Quintavalle_2021, Higgott_2023}, with some decoders also requiring only local syndrome information \cite{Fawzi_2018, Gu:2023pzw}. The single-shot decoders in \cite{Quintavalle_2021, Higgott_2023} still require global syndrome information, and so direct implementation with MFQEC is not practical. $\mathsf{k}$-locality in passive decoding is crucial to make MFQEC tractable; slow, computationally expensive decoding can be done once information is retrieved out of memory. Gibbs samplers lead to MFQEC decoders with significantly lower circuit depth than the single-shot decoders in \cite{Fawzi_2018, Gu:2023pzw}.

Perhaps the platform where MFQEC-based passive decoding is most desirable is neutral-atom quantum computing \cite{Saffman2010, Kaufman_2021, Cong_2022, Bluvstein_2022, Jenkins_2022, Bluvstein_2023}.  (\emph{1}) In this platform, the reset operation can be about 100 times faster than mid-circuit measurement and feedback \cite{Lis_2023, Norcia_2023, Huie_2023}.  The circuit depth needed for MFQEC is a constant-factor $\lesssim 3$ larger than that needed for QEC based on mid-circuit measurement and feedback (at the cost of adding $\mathrm{O}(N)$ extra ancillas), so we expect that a passive decoder would operate faster.  The primary disadvantage of the passive decoder -- a possibly slow implementation of nonlocal atomic motion -- would be just as problematic for the syndrome measurements themselves.  (\emph{2}) HGP codes are well-suited to the hardware capabilities of atom arrays \cite{xu2023,Hong:2023trf}, in which acousto-optical deflectors can perform row and column permutations nonlocally in space relatively quickly (at least for near-term system sizes).    Hence, spatial nonlocality of the LDPC code is not necessarily prohibitive of its implementation in the absence of a large quantum-repeater network \cite{Fowler_2010, Azuma_2023}. (\emph{3}) Neutral atom platforms have increased their number of physical qubits \cite{Bluvstein_2023} more rapidly than other platforms, which suggests that the additional ancilla qubit overhead for the fastest passive decoding strategy is less burdensome.  Lastly, our numerical simulations suggest that the rigorous bounds on threshold error rates are far more stringent than needed in practice.  A detailed cost-benefit analysis of measurement-free vs. syndrome-based error correction in this platform remains to be performed.

\section{Outlook}

We have proved that classical and quantum LDPC codes with linear confinement are self-correcting memories, in which $N$ physical (qu)bits which undergo local, memoryless, time-reversal-symmetric Gibbs-sampling dynamics at low but nonzero temperature,  protecting $\mathrm{O}(N)$ logical (qu)bits for infinite time in the thermodynamic ($N\rightarrow\infty$) limit.  This property holds even as the codes are thermodynamically trivial -- they have no thermodynamic phase transitions at nonzero temperature.   Our result provides a deeper understanding of the surprising connections between statistical physics and error correction: contrary to prior intuition, thermal phase transitions are unnecessary for passive error correction, which can be performed by sampling the thermal Gibbs state, in either classical or quantum codes. Although our analysis is of random codes, see \cite{Capalbo_2002, Golowich_2023} for explicit constructions with linear confinement. We do not know whether these explicit codes also share the same thermodynamic triviality as their random counterparts.

An immediate application of our results is a local Markovian (memoryless) decoder based on Gibbs sampling that can passively protect quantum information in (expander) hypergraph product codes, which presents an alternative ``decoding'' strategy for error correction compared to existing ones \cite{Leverrier_2015, Panteleev_2021, Quintavalle_2022, Delfosse_2022}; note that while our Gibbs sampler is not guaranteed to return the noisy state exactly back to the codespace, it is guaranteed to keep the noisy state sufficiently close to its initial codeword so that eventually one can successfully decode the final state using one of the existing (full) decoders. Especially for MFQEC, the complexity of the thermal decoder is substantially reduced relative to prior examples.

The self-correcting capability of LDPC codes has a number of intriguing implications and connections with other research thrusts in physics. The LDPC codes we described in this paper are arguably the simplest (albeit high-dimensional) examples of models exhibiting dynamical state/Hilbert space shattering in the thermodynamic limit: the system is dynamically trapped in an exponentially small region of the state space, even in the absence of any protecting symmetry or even simply frustration or Hilbert space constraints.  Our result thus makes a sharp connection between quantum error-correcting codes and recent models of ergodicity breaking in (quantum) statistical physics \cite{sala, Khemani_2020, Hart_2022, Stephen_2024, stahl2023, Han:2024yfm}. In the above models, the systems are closed and evolve according to their own Hamiltonians. The dynamics we study, on the other hand, are local spin flips resembling open system dynamics. For closed-system dynamics of LDPC codes, we refer the reader to a related model exhibiting explicit eigenstate localization \cite{LDPC_MBL}, which is an even more drastic breakdown of ergodicity (to infinite time). The classical LDPC code Hamiltonians we have described  \cite{Montanari_2006,Montanari_2006_2} are also qualitatively different from many common models of spin glasses \cite{Edwards_1975, Sherrington_1975,parisi} such as the Sherrington-Kirkpatrick model, and random instances of many NP-hard optimization problems \cite{krzakala}, which have analogous ergodicity-broken dynamical regimes at low temperature, yet also have frustration and thermodynamically-detectable phase transitions.

The existence of locally testable, good quantum LDPC codes ($c^3$-qLTCs) is still an open question. These codes are nonlocal generalizations of the 4D toric code, similar to $c^3$-LTCs and the 2D Ising model. Importantly, they will contain linear dependencies amongst their parity checks in a way such that \emph{all} low-energy states will necessarily be close to codewords, with ``close'' being quantified by the codes' soundness functions \cite{Aharonov_2015}. For example, in the 4D toric code, small error clusters form closed loops of violated checks whose area scales quadratically with its perimeter, which defines the soundness function. Thus, just like the 4D toric code, we expect $c^3$-qLTCs to possess genuine thermodynamic phase transitions. Historically, since all the prototypical examples of passive quantum memories contained thermodynamic phase transitions, the natural object to study for a constant-rate, passive quantum memory would be a qLTC. In contrast to this intuition, we have shown that existing (non-LTC) constructions of constant-rate quantum codes, namely the hypergraph product codes, possess the sufficient ingredients for passive error correction. In addition, since the existing good (constant-rate $K\sim N$ and linear-distance $D\sim N$) quantum LDPC codes \cite{Panteleev_2022, qTanner_codes, dinur2022good} are likely not locally testable, we conjecture that the non-extensive redundancy of random hypergraph product codes is also shared by these code families. If true, our proof immediately generalizes to such models and implies that the circuit complexity of preparing states is \emph{not} a universal characteristic of a thermodynamic phase -- namely, that the circuit depth needed to prepare a state at various energy densities will transition from finite to divergent without crossing any thermal phase transition.  After all, such codes exhibit the no-low-energy-trivial-states (NLTS) property \cite{freedman2013, Anshu_2023} -- arbitrary low-energy states of a Hamiltonian, up to a finite energy density, are topologically ordered: they cannot be constructed from a product state via a finite-depth circuit. Based on a similar intuition as passive memories mentioned above, it used to be expected that NLTS was closely related to local testability of qLTCs \cite{Eldar_2017, Panteleev_2022}; the surprising result that local testability was not needed  \cite{Anshu_2023} may be deeply related to our own observation that thermodynamics and self-correction are not necessarily related.  We currently define two pure states to be in the same phase only when they are related by a finite-depth unitary circuit \cite{Chen_2010}, with the analogy for mixed states less clear  \cite{rakovszky_stable, sang2023, Chen:2023auj}.  We conclude that NLTS in non-redundant codes would thus reveal two states in the same trivial thermodynamic phase, yet which cannot be efficiently prepared from each other. A better understanding of such a discrepancy is an interesting problem for the future, and may sharpen our understanding of how to define phases of matter in quantum many-body systems.

\emph{Note added.---} In forthcoming work, other authors have independently investigated thermodynamics vs. dynamics in LDPC codes \cite{Breuckmann_2024}.

\section*{Acknowledgements}

We thank Oliver Hart, Mike Hermele, Vedika Khemani, Anthony Leverrier, Rahul Nandkishore, Zohar Nussinov, Marvin Qi and Charles Stahl for helpful discussions, Xun Gao for a careful reading of the manuscript, and especially Adam Kaufman for many insights on neutral atoms.  This work was supported by the Alfred P. Sloan Foundation through Grant FG-2020-13795 (AL), the Air Force Office of Scientific Research via Grants FA9550-21-1-0195 and FA9550-24-1-0120 (YH, JG, AL), and the Office of Naval Research via Grant N00014-23-1-2533 (YH, AL). 


 


\appendix
\renewcommand{\thesubsection}{\thesection.\arabic{subsection}}

\section{Classical LDPC codes}

\subsection{Linear parity-check codes}\label{app:classical LDPC}
 A classical linear code $\mathcal{C}$ encodes $k$ logical bits inside $n$ physical bits using a $m \times n$ parity-check matrix with binary $\mathbb{F}_2 \simeq \mathbb{Z}_2$ coefficients (written as $H\in \mathbb{F}_2^{m\times n}$). From $H$ we can construct a $k \times n$ generator matrix $G$ which spans the \textbf{codespace} $\ker(H)$, the set of right binary null vectors, or \textbf{codewords}, of $H$. $\ker(H)$ is a binary vector space of dimension $k\ge 0$ -- we say that $k$ is the number of \textbf{logical bits} stored in the code. Defining the \textbf{Hamming weight} of a binary vector $\mathbf{x} \in \mathbb{F}^n_2$ as the number of nonzero entries, and denoting this number with $\abs{\mathbf{x}}$, we say that the \textbf{code distance} $d$ is defined as
 \begin{equation}
    d = \min_{\mathbf{x}\in \mathcal{C}\backslash \mathbf{0}} \abs{\mathbf{x}} \, .
\end{equation}
The code parameters are often packaged using the bracketed notation $[n,k,d]$. By the rank-nullity theorem, the parity-check matrix $H$ will have $r$ linearly independent nontrivial left null vectors, where \begin{equation}
    r = m-n+k \ge 0.
\end{equation}
If $r=0$, corresponding to a full-rank parity-check matrix $H$, we say that the code is \textbf{non-redundant}, and the equation
\begin{equation}
    \mathbf{s} = H \mathbf{e}
\end{equation}
has $2^k$ solutions $\mathbf{e}$, which differ by logical codewords (alternatively, the solution is unique in $\mathbb{Z}_2^n/\ker(H)$).

Let $B$ be the vector space of all physical bits, with $|B|=n$, and let $S$ be the vector space of all parity checks with $|S|=m$. One can form a 3-term chain complex
\begin{equation} \label{eq:3-term chain}
    B \xlongrightarrow{H} S \xlongrightarrow{M} R \, ,
\end{equation}
where $R$ denotes the spaces of redundancies, and the boundary maps $H$ and $M$ are the parity-check matrix and the matrix which collects its left null vectors respectively. These left null vectors can be interpreted as parity constraints amongst the checks themselves, commonly referred to as metachecks. $R$ is defined as the vector space of metachecks. 

We say that $H$ is $(\Delta_B, \Delta_C)$ \textbf{LDPC (low-density parity-check)} if the maximum Hamming weights of its rows and columns are $\Delta_C, \Delta_B = O(1)$ respectively. In other words, every check acts on a \emph{constant} number of physical bits, and every bit only participates in a \emph{constant} number of checks. If both $k = \Theta(n)$ (constant rate) and $d = \Theta(n)$ (linear distance), then we say the code has asymptotically ``good" parameters. For most of this paper, our interest is on LDPC codes, although some of the results we derive below are not specific to this case. 

Good LDPC codes are often constructed from regular, bipartite expander graphs; such codes are commonly referred to as \emph{expander codes} \cite{Gallager_1962, Sipser_1996}. We now review some relevant results about expander codes. 

\begin{defn}[Bipartite expansion \cite{Sipser_1996}]\label{defn:bipartite expansion}
    Given a regular bipartite graph $\mathcal{G} = (B \cup C,E)$ with uniform left-right degrees $\Delta_B, \Delta_C = O(1)$, we say that $\mathcal{G}$ is $(\gamma,\delta)$ left-expanding if for any subset $B'\subset B$ with volume $\abs{B'} \leq \gamma n$, the size of its boundary $\partial B' \subset C$ obeys
    \begin{align}\label{eq:left expansion}
        \abs{\partial B'} \geq (1-\delta) \Delta_B \abs{B'} \, .
    \end{align}
    The definition of right-expansion follows analogously by swapping the roles of $B$ and $C$. If $\mathcal{G}$ is both left and right-expanding, then we simply say it is a $(\gamma,\delta)$ expander.
\end{defn}

Def. \ref{defn:bipartite expansion} is also referred to as (one-sided or two-sided) lossless expansion in the literature, with $\delta$ quantifying the amount of ``loss'' from maximum expansion. We can define a $(\Delta_B,\Delta_C)$-LDPC expander code $\mathcal{C}_\mathcal{G}$ by identifying the left-nodes $B$ with physical bits and the right-nodes $C$ with parity checks. The parity-check matrix $H$ is given by the bipartite adjacency matrix of $\mathcal{G}$. Using the above definition of a bipartite expander, one can immediately derive lower bounds on the code parameters of $\mathcal{C}$.

\begin{thm}[Constant rate \cite{Gallager_1962}]\label{thm:constant rate}
    If $\Delta_B < \Delta_C$, then the rate of $\mathcal{C}_\mathcal{G}$ is at least
    \begin{align}
        k/n \geq 1 - \Delta_B/\Delta_C \equiv \mathfrak{r}_{\rm des} \, ,
    \end{align}
    where $\mathfrak{r}_{\rm des} \equiv 1-\Delta_B/\Delta_C$ is called the \emph{design rate}.
\end{thm}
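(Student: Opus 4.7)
The plan is to prove the rate bound by a straightforward counting argument followed by rank-nullity; the expansion hypothesis is not actually needed for this particular statement, only the regularity of the underlying bipartite graph.

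First I would use the bi-regularity of $\mathcal{G}$ to count the number $|E|$ of edges in two ways. Summing the degree $\Delta_B$ over the $n$ bit-nodes gives $|E| = n\Delta_B$, while summing the degree $\Delta_C$ over the $m$ check-nodes gives $|E| = m\Delta_C$. Equating these yields the relation
\begin{equation}
    m = \frac{\Delta_B}{\Delta_C}\, n,
\end{equation}
so the number of parity checks is a fixed fraction of the number of physical bits, and the hypothesis $\Delta_B<\Delta_C$ is precisely what makes $m<n$.

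Next I would apply rank-nullity to the parity-check matrix $H\in \mathbb{F}_2^{m\times n}$. By definition $k = \dim\ker(H) = n - \mathrm{rank}(H)$, and since the row-rank of $H$ cannot exceed the number of rows, $\mathrm{rank}(H)\le m$. Combining these two facts,
\begin{equation}
    k \;=\; n - \mathrm{rank}(H) \;\ge\; n - m \;=\; n\left(1 - \frac{\Delta_B}{\Delta_C}\right),
\end{equation}
which upon dividing by $n$ gives $k/n \ge \mathfrak{r}_{\rm des}$ as claimed.

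There is no real obstacle in this argument: the theorem is essentially immediate from double-counting edges in a bi-regular bipartite graph and the trivial upper bound on matrix rank. The only conceptual remark worth including is that this gives only a lower bound on the rate, since $H$ may have left null vectors (redundancies among the checks); equality $k/n = \mathfrak{r}_{\rm des}$ holds precisely when $H$ has full row-rank, i.e.\ when the code is non-redundant, which is the case of interest elsewhere in the paper and is established almost surely for random LDPC ensembles via a separate argument.
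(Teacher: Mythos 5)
Your proof is correct and takes exactly the same route as the paper: double-count the edges of the bi-regular Tanner graph to get $n\Delta_B = m\Delta_C$, then apply rank-nullity together with $\mathrm{rank}(H)\le m$. Your closing remark about equality holding precisely for non-redundant codes is accurate and matches the role Lemma 1.8 plays elsewhere in the paper.
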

\begin{proof}
    Let $n = \abs{B}$ and $m = \abs{C}$ be the number of bits and checks respectively. The number of edges emanating out of all bits and checks must be equal for a bipartite graph, so we have that $n\Delta_B = m\Delta_C$. Since the rank of $H$ is upper-bounded by the number of rows $m$, by the rank-nullity theorem, we conclude that
    \begin{align}
        \frac{k}{n} \geq \frac{n-m}{n} = 1 - \frac{m}{n} = 1 - \frac{\Delta_B}{\Delta_C} = \mathfrak{r}_{\rm des} \, .
    \end{align}
\end{proof}

\begin{lem}[Unique neighbor expansion \cite{Sipser_1996}]\label{lem:unique neighbors}
    Suppose $G$ is left-expanding with $\delta < 1/2$. Then for any subset of bits $B' \subset B$ with size $\abs{B'} \leq \gamma n$, the number of unique neighbors $\partial_u B' \subset \partial B' \subset C$, checks which have only one edge connecting to $B'$, is at least
    \begin{align}
        \abs{\partial_u B'} \geq \Delta_B (1-2\delta) \abs{B'} \, .
    \end{align}
\end{lem}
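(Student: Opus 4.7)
The plan is to use a double-counting argument on the edges emanating from $B'$. Every bit in $B'$ has exactly $\Delta_B$ edges, so the total number of edges from $B'$ into $C$ is exactly $\Delta_B |B'|$. Each such edge lands in $\partial B'$, and I want to split $\partial B'$ into unique neighbors $\partial_u B'$ (checks connected to $B'$ by exactly one edge) and non-unique neighbors $\partial B' \setminus \partial_u B'$ (checks connected by at least two edges).

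The next step is to bound the edge count from below using this split. Writing $U = |\partial_u B'|$ and $N = |\partial B'| - U$, the unique neighbors contribute exactly $U$ edges from $B'$, while the non-unique neighbors contribute at least $2N$ edges. Hence
\begin{equation}
    \Delta_B |B'| \;\geq\; U + 2N \;=\; 2|\partial B'| - U \, .
\end{equation}
Rearranging gives $U \geq 2|\partial B'| - \Delta_B|B'|$.

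Finally, I would plug in the left-expansion hypothesis from Def.~\ref{defn:bipartite expansion}. Since $|B'| \leq \gamma n$, the assumption $|\partial B'| \geq (1-\delta)\Delta_B |B'|$ applies, yielding
\begin{equation}
    |\partial_u B'| \;\geq\; 2(1-\delta)\Delta_B |B'| - \Delta_B |B'| \;=\; \Delta_B(1-2\delta)|B'| \, ,
\end{equation}
which is the claimed bound, and is nonvacuous precisely when $\delta < 1/2$ (matching the hypothesis).

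There is no real obstacle here, as this is a classical and elementary argument from the expander-codes literature. The only thing to be careful about is the bookkeeping that the inequality $|\partial B'| \geq (1-\delta)\Delta_B|B'|$ only holds for $|B'| \leq \gamma n$, which is exactly the regime of the statement, so no additional case analysis is needed. The role of $\delta < 1/2$ is simply to ensure that the lower bound $(1-2\delta)\Delta_B|B'|$ is a positive (and hence useful) quantity.
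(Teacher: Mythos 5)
Your proof is correct and follows essentially the same double-counting argument as the paper: the paper allocates one edge to each neighbor and bounds the "leftover" edges by $\delta\Delta_B|B'|$ before invoking the pigeonhole principle, whereas you partition edges directly into contributions from unique versus non-unique neighbors, but the arithmetic and the use of the expansion hypothesis are identical.
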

\begin{proof}
    Since $\mathcal{G}$ is regular, the number of edges emanating out of $B'$ is $\Delta_B B'$. Since $\mathcal{G}$ is also left-expanding with $\delta<1/2$, from \eqref{eq:left expansion} we have $\abs{\partial B'} \geq (1-\delta) \Delta_B \abs{B'}$ for $\abs{B'} \leq \gamma n$. Each of the neighboring checks in $\partial B'$ must be connected by at least one edge from the definition of being a neighbor, so the number of remaining edges is $\Delta_B B' - (1-\delta) \Delta_B \abs{B'} = \delta \Delta_B \abs{B'}$. By the pigeon-hole principle, the number of neighboring checks connected by only one edge is at least $\abs{\partial_u B'} \geq (1-\delta) \Delta_B \abs{B'} - \delta \Delta_B \abs{B'} = \Delta_B (1-2\delta) \abs{B'} > 0$.
\end{proof}

We see that if $\mathcal{G}$ exhibits sufficient expansion $(\delta<1/2)$, then large subsets of bits will be connected to large amounts of unique checks. The above notion of unique neighbor expansion immediately gives us a confinement property \cite{Quintavalle_2021}, which loosely speaking, says that small errors produce small syndromes. We can quantify the confinement of expander codes as follows.

\begin{cor}[Linear confinement]\label{cor:confinement}
    Suppose $\mathcal{G}$ is left-expanding with $\delta < 1/2$. Then for any error $\mathbf{e}$ with weight $\abs{\mathbf{e}} \leq \gamma n$, the weight of its syndrome $\mathbf{s}(\mathbf{e}) = H \mathbf{e}$ is at least
    \begin{align}
        \abs{\mathbf{s}(\mathbf{e})} \geq \Delta_B (1-2\delta) \abs{\mathbf{e}} \, .
    \end{align}
\end{cor}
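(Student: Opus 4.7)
The plan is to deduce this corollary as an almost immediate consequence of \cref{lem:unique neighbors}, by identifying unique neighbors of the error support with guaranteed-violated syndrome bits. First I would let $B' \subset B$ denote the support of $\mathbf{e}$, so that $|B'| = |\mathbf{e}| \le \gamma n$. The hypothesis of \cref{lem:unique neighbors} is then satisfied, and applying it yields $|\partial_u B'| \ge \Delta_B(1-2\delta)|\mathbf{e}|$.

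Next I would translate the combinatorial statement about unique neighbors into an $\mathbb{F}_2$ statement about the syndrome $H\mathbf{e}$. For any check $c \in C$, the entry $(H\mathbf{e})_c = \sum_{i : (i,c)\in E} e_i \pmod 2$ counts, mod $2$, the number of edges connecting $c$ to $B'$. If $c$ is a unique neighbor of $B'$, that count is exactly $1$, so $(H\mathbf{e})_c = 1$. Hence the support of $H\mathbf{e}$ contains $\partial_u B'$, and so
\begin{equation}
|\mathbf{s}(\mathbf{e})| = |H\mathbf{e}| \ge |\partial_u B'| \ge \Delta_B(1-2\delta)|\mathbf{e}|,
\end{equation}
which is the claim.

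There is no real obstacle beyond checking the mod-$2$ observation above, since the hard combinatorial work is already packed into \cref{lem:unique neighbors}. The only subtlety worth flagging is that shared (non-unique) neighbors contribute unpredictably to the syndrome weight, so the bound must come entirely from unique neighbors; this is precisely why the hypothesis $\delta < 1/2$ is invoked and inherited here. No further structural assumption on $H$ beyond left-expansion is needed.
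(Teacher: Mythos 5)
Your argument is exactly the intended one: the paper states \cref{cor:confinement} as an immediate consequence of \cref{lem:unique neighbors} without a written proof, and the step you spell out—that a unique neighbor of $\mathrm{supp}(\mathbf{e})$ receives exactly one edge from the error support and hence contributes a $1$ to the $\mathbb{F}_2$ syndrome $H\mathbf{e}$—is precisely the unwritten translation the authors have in mind. Your proof is correct and matches the paper's approach.
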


We note that the linear confinement property of expander codes has a physical interpretation in terms of macroscopic energy barriers between codeword states of the corresponding Hamiltonian. We can also use Lemma \ref{lem:unique neighbors} to derive a lower bound on the code distance.

\begin{thm}[Linear distance \cite{Sipser_1996}]\label{thm:linear distance}
    If $G$ is left-expanding with $\delta < 1/2$, then the code distance of $\mathcal{C}_\mathcal{G}$ is at least
    \begin{align}\label{eq:linear distance}
        d \geq 2\gamma (1-\delta) n \, .
    \end{align}
\end{thm}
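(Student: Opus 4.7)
The plan is to argue by contradiction: assume a nonzero codeword $\mathbf{x}\in \ker(H)$ exists with $|\mathbf{x}|<2\gamma(1-\delta)n$ and derive a check that fails on $\mathbf{x}$. I would split into two regimes depending on whether $|\mathbf{x}|$ is below or above the expansion threshold $\gamma n$.

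First, in the easy regime $|\mathbf{x}|\le \gamma n$, I would apply Lemma \ref{lem:unique neighbors} directly to the support $B'=\mathrm{supp}(\mathbf{x})\subseteq B$. Since $\delta<1/2$, the lemma furnishes at least $\Delta_B(1-2\delta)|\mathbf{x}|>0$ unique neighbors in $C$. Each unique-neighbor check is incident to exactly one bit in $\mathrm{supp}(\mathbf{x})$, so its parity-check constraint evaluates to $1$ on $\mathbf{x}$, i.e.\ $(H\mathbf{x})$ has nonzero entries. This contradicts $\mathbf{x}\in\ker(H)$.

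Second, in the boosting regime $\gamma n<|\mathbf{x}|<2\gamma(1-\delta)n$, I would choose an arbitrary subset $B'\subseteq \mathrm{supp}(\mathbf{x})$ with $|B'|=\lfloor\gamma n\rfloor$ and again apply Lemma \ref{lem:unique neighbors} to $B'$, obtaining $|\partial_u B'|\ge \Delta_B(1-2\delta)|B'|$ unique-neighbor checks. Because $\mathbf{x}$ is a codeword, every check (hence every unique-neighbor check of $B'$) must see an even number of bits in $\mathrm{supp}(\mathbf{x})$; so each check in $\partial_u B'$ must be incident to at least one additional bit in $\mathrm{supp}(\mathbf{x})\setminus B'$. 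Counting edges from $\mathrm{supp}(\mathbf{x})\setminus B'$ to $\partial_u B'$ and using that each bit has degree exactly $\Delta_B$, I would obtain
\begin{equation}
    \Delta_B\,|\mathrm{supp}(\mathbf{x})\setminus B'| \;\ge\; |\partial_u B'| \;\ge\; \Delta_B(1-2\delta)|B'| \, ,
\end{equation}
so $|\mathrm{supp}(\mathbf{x})\setminus B'|\ge (1-2\delta)|B'|$ and therefore $|\mathbf{x}|\ge |B'|+(1-2\delta)|B'|=2(1-\delta)|B'|\approx 2\gamma(1-\delta)n$, contradicting the assumed upper bound (up to the floor, which I would absorb by a standard integer-rounding argument or by phrasing the bound as $d\ge 2\gamma(1-\delta)n - O(1)$ and noting $d=\Theta(n)$).

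The main subtlety I expect is the edge-counting step in the boosting regime: one has to be careful that ``each unique neighbor needs a second bit in $\mathrm{supp}(\mathbf{x})\setminus B'$'' only uses the codeword condition modulo $2$, and that the bound $|\partial_u B'|\ge \Delta_B(1-2\delta)|B'|$ from Lemma \ref{lem:unique neighbors} applies to $|B'|\le \gamma n$, which is why I choose $|B'|$ exactly at (the floor of) $\gamma n$ rather than something larger. The rest is bookkeeping.
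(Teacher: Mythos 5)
Your proposal is correct and follows essentially the same reasoning as the paper's: a case split at $|\mathbf{x}|\le\gamma n$ versus $\gamma n<|\mathbf{x}|<2\gamma(1-\delta)n$, with Lemma \ref{lem:unique neighbors} applied to an exactly-$\gamma n$-sized subset in the second case, and the same edge-counting observation that the remaining bits have too few edges to neutralize all unique-neighbor checks. The only cosmetic difference is that you phrase the edge-count as a direct lower bound $|\mathbf{x}|\ge 2(1-\delta)|B'|$, while the paper phrases it as a pigeon-hole contradiction; both use the identical partition of $\mathrm{supp}(\mathbf{x})$.
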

\begin{proof}
    We will prove by contradiction that the weight of any nonzero codeword must be at least $2\gamma (1-\delta) n$. Suppose we have a codeword $\mathbf{x} \neq \mathbf{0}$ whose nonzero support is the set $\mathrm{supp}(\mathbf{x}) = B' \subset B$ with $\abs{B'} < 2\gamma(1-\delta)n$. If $\abs{B'} \leq \gamma n$ then Lemma $\ref{lem:unique neighbors}$ tells us that we have at least $\abs{\partial_u B'} \geq \Delta_B (1-2\delta) \abs{B'}$ unique neighbors, and hence unsatisfied checks, and so it is impossible for $\mathbf{x}$ to be a codeword. Now assume $\abs{B'} > \gamma n$. Partition $B'$ into a subset $S\subset B'$ of size $\abs{S} = \gamma n$ and its complement $\bar{S} \equiv B' \,\backslash\, S \subset B'$ of size $\abs{\bar{S}} < \gamma(1-2\delta) n$. Lemma \ref{lem:unique neighbors} tells us that $S$ has at least $\abs{\partial_u S} \geq \Delta_B (1-2\delta) \abs{S} = \gamma \Delta_B (1-2\delta) n$ unsatisfied checks. On the other hand, there are at most $\Delta_B \abs{\bar{S}} < \gamma\Delta_B(1-2\delta) n \leq \abs{\partial_u S}$ edges emanating out of the complement $\bar{S}$. So there are not enough edges in $\bar{S}$ to pair up all of the unsatisfied checks of $S$. As a consequence, $\mathbf{x}$ will violate at least one check, which contradicts the assumption that $\mathbf{x}$ is a codeword. Thus we require $\abs{\mathbf{x}} \geq 2\gamma(1-\delta) n$ for all nonzero codewords $\mathbf{x} \in \ker(H)$, and so by definition, the code distance satisfies \eqref{eq:linear distance}.
\end{proof}

Now that we have reviewed how bipartite expander graphs can lead to good LDPC codes, the problem boils down to constructing bipartite graphs with the required expansion properties. Explicit, algebraic constructions of lossless expanders are known to exist \cite{Capalbo_2002, Golowich_2023} with the required properties. On the other hand, it also turns out that random bipartite graphs are lossless expanders with high probability \cite{ModernCodingTheory}. For simplicity of analysis, we will focus on regular ensembles, where the target degree distributions are singular around $\Delta_B,\Delta_C$; we call such an ensemble the $(\Delta_B,\Delta_C)$-LDPC ensemble. The success of the random construction can be quantified by the following result.

\begin{thm}[Random expansion; Theorem 8.7 of \cite{ModernCodingTheory}]
\label{thm:random expansion}
    Let $\mathcal{G}$ be chosen uniformly at random from the $(\Delta_B, \Delta_C)$-LDPC ensemble with fixed $n$ and $\Delta_B > 1/\delta$. Let $\gamma_{\rm max}>0$ be the solution of the equation
    \begin{align}
        \frac{\Delta_B-1}{\Delta_B}h_2(\gamma_{\rm max}) - \frac{1}{\Delta_C}h_2\big(\Delta_C\gamma_{\rm max}(1-\delta)\big) - \Delta_C\gamma_{\rm max}(1-\delta)\, h_2\left( \frac{1}{\Delta_C(1-\delta)} \right) = 0 \, ,
    \end{align}
    where $h_2(x) \equiv -x\log x - (1-x)\log(1-x)$ is the binary entropy function. Then for $0<\gamma<\gamma_{\rm max}$ and $\alpha \equiv \Delta_B \delta - 1$, 
    \begin{align} \label{eq:randomexpansion}
        \mathbf{P} \left\{ \text{$\mathcal{G}$ \emph{is a} $(\gamma, \delta)$ \emph{left-expander}} \right\} = 1 - O(n^{-\alpha}) \, .
    \end{align}
    The analogy for right-expansion follows from switching the roles of $B \leftrightarrow C$.
\end{thm}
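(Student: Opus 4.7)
\emph{Proof plan.} The plan is a first-moment argument over ``bad'' subsets. A subset $B'\subset B$ with $|B'|=s\le \gamma n$ violates left-expansion if and only if the $s\Delta_B$ edges leaving $B'$ all land in some $C'\subset C$ of size $t\equiv\lfloor(1-\delta)\Delta_B s\rfloor$. Let $N_s$ denote the number of such bad subsets. Working in the configuration (socket-pairing) model, in which the $n\Delta_B$ bit-sockets are uniformly matched to the $m\Delta_C=n\Delta_B$ check-sockets, a union bound over $B'$ and $C'$ gives
\begin{equation}
\mathbf{E}[N_s] \;\le\; \binom{n}{s}\binom{m}{t}\frac{\binom{t\Delta_C}{s\Delta_B}}{\binom{m\Delta_C}{s\Delta_B}}.
\end{equation}
Transferring back to the uniform $(\Delta_B,\Delta_C)$-LDPC ensemble costs only a $1+o(1)$ factor, since the number of multi-edges in the configuration model is $O(1)$ in expectation \cite{ModernCodingTheory}.

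For $s=\gamma n$ with $\gamma=\Theta(1)$, I apply Stirling in the form $\log\binom{N}{K}=N\,h_2(K/N)+O(\log N)$ to the above and substitute $m=n\Delta_B/\Delta_C$ and $t=(1-\delta)\Delta_B s$; collecting the entropy contributions yields, to leading order in $n$,
\begin{equation}
\log\mathbf{E}[N_{\gamma n}]
\;\le\; -n\Delta_B\!\left[\tfrac{\Delta_B-1}{\Delta_B}h_2(\gamma) - \tfrac{1}{\Delta_C}h_2\!\bigl(\Delta_C(1-\delta)\gamma\bigr) - \Delta_C(1-\delta)\gamma\,h_2\!\bigl(\tfrac{1}{\Delta_C(1-\delta)}\bigr)\right].
\end{equation}
The bracketed function is precisely the left-hand side of the equation defining $\gamma_{\rm max}$. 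As $\gamma\to 0^+$ it is positive (this is where the hypothesis $\Delta_B\delta>1$, i.e.\ $\alpha>0$, enters: the $-\gamma\log\gamma$ from the first term dominates the analogous singularity of the second), and $\gamma_{\rm max}$ is by definition its first positive root, so for every fixed $\gamma\in(0,\gamma_{\rm max})$ the extensive-$s$ contribution is $\mathrm{e}^{-\Omega(n)}$.

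The small-$s$ tail produces the $n^{-\alpha}$ rate. For $s=O(1)$, Stirling is too lossy, so I evaluate the binomials directly: $\binom{n}{s}\binom{m}{t}=\Theta(n^{s+t})$, $\binom{t\Delta_C}{s\Delta_B}=\Theta(1)$, and $\binom{m\Delta_C}{s\Delta_B}=\binom{n\Delta_B}{s\Delta_B}=\Theta(n^{s\Delta_B})$, giving $\mathbf{E}[N_s]=\Theta(n^{s+t-s\Delta_B})=\Theta(n^{-s\alpha})$; summing over $s\ge 1$ yields $O(n^{-\alpha})$, matching \eqref{eq:randomexpansion}. The right-expansion statement follows by the symmetric argument with $B\leftrightarrow C$. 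The main obstacle is merging the two regimes into a single bound uniform in $1\le s\le \gamma n$: the cleanest remedy is to split the sum at some threshold (say $s=\log^2 n$), applying the polynomial estimate below it and the entropy bound above it, and exploiting monotonicity and convexity of the entropy exponent in $\gamma$ to control the crossover. A small amount of additional care handles the sub-exponential $O(\log N)$ Stirling error and the configuration-to-uniform model conversion, both of which are standard.
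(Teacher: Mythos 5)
Two things worth noting. First, the paper does not prove this result: it states it as a citation (``Theorem 8.7 of [ModernCodingTheory]''), with no accompanying proof block, so there is no internal proof to compare against. Second, your proposal is essentially the standard first-moment/union-bound argument that Richardson and Urbanke themselves use for this theorem, so you have reconstructed the intended proof rather than a different route.

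On correctness: the expectation bound $\mathbf{E}[N_s]\le\binom{n}{s}\binom{m}{t}\binom{t\Delta_C}{s\Delta_B}/\binom{m\Delta_C}{s\Delta_B}$ is the correct configuration-model probability (I re-derived it), and your Stirling computation with $m=n\Delta_B/\Delta_C$ and $t=(1-\delta)\Delta_B s$ reproduces the bracketed entropy function exactly, with the correct sign. Your claim that this function is positive as $\gamma\to 0^+$ precisely when $\Delta_B\delta>1$ also checks out: the leading singularity is $\bigl[(1-\delta)-\tfrac{\Delta_B-1}{\Delta_B}\bigr]\gamma\log\gamma+O(\gamma)$, and the prefactor is negative iff $\delta>1/\Delta_B$. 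The $s=O(1)$ computation $\mathbf{E}[N_s]=\Theta(n^{s+t-s\Delta_B})$ is right, and the dominant $s=1$ term gives the $n^{-\alpha}$ rate (with the small caveat that because $t=\lfloor(1-\delta)\Delta_B s\rfloor$ is an integer, the true exponent for $s=1$ is $-\lfloor\Delta_B\delta\rfloor\le-\alpha$, which only improves the bound and is consistent with the claimed $O(n^{-\alpha})$). Your acknowledged gap---stitching the $s=O(1)$ and $s=\Theta(n)$ regimes into a bound uniform in $s$---is the genuine technical content of the full R\&U proof, and the splitting-plus-convexity remedy you sketch is the standard way to close it. One small simplification: R\&U's $(\Delta_B,\Delta_C)$-LDPC ensemble \emph{is} the configuration/socket-pairing ensemble (their Definition 3.15), so your back-and-forth conversion between ``uniform ensemble'' and ``configuration model'' is likely unnecessary; if one instead wants the uniform-over-simple-graphs ensemble, your $O(1)$-multi-edge remark is the standard fix.
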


As a consequence, simply constructing a random parity-check matrix $H$ with target column and row weights $3 \leq \Delta_B < \Delta_C$ will result in a good LDPC code with probability 1 in the thermodynamic limit $n\rightarrow\infty$. We note that in practice, the code distances associated with these codes are often much higher than the theoretical guarantees. Theorem \ref{thm:constant rate} provides a lower bound on the rate of the code. The next result shows that the random construction produces codes whose rate is within $O(1/n)$ of the lower bound with high probability:

\begin{lem}[Rate approaches design rate; Lemma 3.27 of \cite{ModernCodingTheory}]\label{lem:rate vs design rate}
    Let $\mathcal{C}$ be chosen uniformly at random from the $(\Delta_B, \Delta_C)$-LDPC ensemble with fixed $n$ and $2 \leq \Delta_B < \Delta_C$. Let $\mathfrak{r}_{\rm des} = 1 - \Delta_B/\Delta_C$ denote the design rate. Then the actual rate $\mathfrak{r} \equiv k/n$ satisfies
    \begin{align}\label{eq:random nonredundancy}
        \mathbf{P} \left\{ n\mathfrak{r} = n\mathfrak{r}_{\rm des} + \nu \right\} = 1 - o_n(1) \, ,
    \end{align}
    where $\nu = 1$ if $\Delta_B$ is even and $\nu = 0$ otherwise.
\end{lem}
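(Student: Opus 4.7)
The plan is to prove the equivalent statement $\dim \ker H^{\mathrm{T}} = \nu$ with probability $1 - o_n(1)$, since by rank-nullity and the edge-counting identity $m\Delta_C = n\Delta_B$ we have $n\mathfrak{r} = n - \mathrm{rank}(H) = n\mathfrak{r}_{\rm des} + \dim \ker H^{\mathrm{T}}$. The argument has two halves: identify the forced (deterministic) left null vectors, and then rule out all others by a first-moment bound.

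For the deterministic half, when $\Delta_B$ is even every column of $H$ has even Hamming weight, so the all-ones row vector satisfies $\mathbf{1}_m^{\mathrm{T}} H \equiv \mathbf{0} \pmod{2}$ identically, forcing $\dim \ker H^{\mathrm{T}} \geq 1$. When $\Delta_B$ is odd no analogous dependency is forced by the ensemble constraints, and we aim to show $\dim \ker H^{\mathrm{T}} = 0$.

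For the probabilistic half, let $N_w$ denote the number of weight-$w$ vectors $\mathbf{v}\in\mathbb{F}_2^m$ with $\mathbf{v}^{\mathrm{T}} H = \mathbf{0}$. It suffices to show
\[
\sum_{w\in W} \mathbb{E}[N_w] = o(1),
\]
with $W = \{1,\dots,m\}$ when $\Delta_B$ is odd and $W = \{1,\dots,m-1\}$ when $\Delta_B$ is even; Markov's inequality then pins the left null space down to exactly its deterministic part. By the permutation symmetry of the configuration model on $(\Delta_B,\Delta_C)$-biregular graphs, $p_w := \Pr[\mathbf{v}^{\mathrm{T}} H = \mathbf{0}]$ depends only on $w$, and a direct half-edge pairing count (pairing the $w\Delta_C$ ``active'' check half-edges uniformly with the $n\Delta_B$ bit half-edges while demanding an even number of active edges incident to every bit) gives
\[
p_w = \binom{n\Delta_B}{w\Delta_C}^{-1} [x^{w\Delta_C}] \left(\frac{(1+x)^{\Delta_B} + (1-x)^{\Delta_B}}{2}\right)^{n}.
\]

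The main obstacle is the ensuing saddle-point / Laplace analysis of $\binom{m}{w} p_w$. Writing $\omega = w/m$, Stirling and coefficient extraction yield $\log\bigl[\binom{m}{w} p_w\bigr] = m\,\phi(\omega) + O(\log m)$ for an explicit rate function $\phi$, and the task is to verify $\phi(\omega) < 0$ strictly on $(0,1)$, with zeros (absorbed by sub-exponential prefactors) only at $\omega=0$ and, in the even-$\Delta_B$ case, $\omega=1$. This is the standard Gallager weight-enumerator computation; the hypothesis $\Delta_B\geq 2$ is essential because for $\Delta_B=1$ the even-part generating factor collapses to $1$ and the bound fails. One must also treat the endpoint regions $w = O(1)$ and $m-w = O(1)$ separately — there the $O(\log m)$ corrections are not negligible — and verify that the small-$w$ contribution is $o(1)$ and that the large-$w$ contribution near $w=m$ is accounted for entirely by the single deterministic null vector $\mathbf{1}_m$.
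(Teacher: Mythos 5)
The paper gives no proof of this statement: it is quoted verbatim as Lemma~3.27 of Richardson and Urbanke and used as a black box, so there is no in-paper argument to compare against. Measured against the standard route (which is what Richardson and Urbanke follow), your setup is correct. The identity $n\mathfrak r = n\mathfrak r_{\rm des} + \dim\ker H^{\transpose}$ is exact once you use $m\Delta_C = n\Delta_B$; the all-ones row vector $\mathbf 1_m$ is indeed a forced left null vector precisely when $\Delta_B$ is even, since each column of $H$ sums to $\Delta_B \bmod 2$; and your half-edge-pairing formula $p_w = \binom{n\Delta_B}{w\Delta_C}^{-1}[x^{w\Delta_C}]\bigl(\tfrac{(1+x)^{\Delta_B}+(1-x)^{\Delta_B}}{2}\bigr)^n$ is the correct Gallager even-weight-enumerator computation for the configuration model, and you restrict $W$ appropriately so as not to double-count $\mathbf 1_m$.

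However, the proof is not complete — it is a plan. The step you yourself label ``the main obstacle,'' namely writing out the rate function $\phi(\omega)$ from Stirling plus coefficient extraction, verifying $\phi(\omega)<0$ strictly on the interior of $(0,1)$ for all admissible $(\Delta_B,\Delta_C)$, and separately controlling the $w=O(1)$ and $m-w=O(1)$ regimes where the $O(\log m)$ corrections matter, \emph{is the lemma}. You describe it without performing any part of it: $\phi$ is never written in closed form, its sign is never checked for a single pair of degrees, and the endpoint bounds that pin the $o(1)$-order contributions to exactly $w=0$ (and $w=m$ when $\Delta_B$ is even) are left as assertions. Until those are done nothing is proved. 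As a minor side point, your remark that the bound ``fails'' for $\Delta_B=1$ is not quite right: there $E(x)=1$, so $p_w = 0$ for all $w>0$ and $\dim\ker H^{\transpose}=0$ trivially; the hypothesis $\Delta_B\geq 2$ is about where the saddle-point analysis becomes non-degenerate, not where the conclusion ceases to hold.
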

The extra constant $\nu=0,1$ comes from the fact that when $\Delta_B$ is even, there is an automatic redundancy coming from the constraint that the sum of all checks (mod 2) is zero. Lemma \ref{lem:rate vs design rate} asserts that all other checks are linearly independent with high probability.

An interesting property of non-redundant codes is that all $2^m = 2^{n-k}$ syndromes have a corresponding physical error up to the action of a codeword. To see this fact, first note that $H$ has full rank because all its rows are linearly independent. Because row and column ranks are equal, we can thus select $m=n-k$ linearly independent columns of $H$ to form a new, invertible $m \times m$ matrix $\tilde{H}$. For any syndrome $\mathbf{s}$, a corresponding error $\mathbf{e}(\mathbf{s})$ can be computed using $\mathbf{e}(\mathbf{s}) = \tilde{H}^{-1} \mathbf{s}$. As a consequence, there exists error vectors which correspond to single-flipped parity checks (excitations). Nonetheless, from the expansion properties of the underlying Tanner graph, we can show that such low-energy excitations necessarily have extended support. Intuitively, one can think of a low-energy excitation as a codeword of a modified code consisting of the original code but with the flipped check removed. Since removing a single check is unlikely to change the global expansion properties of the Tanner graph, the modified code will also have macroscopic distance, which implies that our low-energy excitation is far from all original codewords. We formalize this argument in the following Corollary.

\begin{cor}[Extended low-energy excitations]\label{cor:extended excitation}
    Suppose we have a $(\Delta_B,\Delta_C)$-LDPC code $\mathcal{C}$ whose Tanner graph $\mathcal{G}$ is left-expanding with $\delta<1/2$. Then for any nonzero syndrome $\mathbf{s} \neq \mathbf{0}$ with weight $\abs{\mathbf{s}} < \Delta_B (1-2\delta)$, its corresponding errors $\mathbf{e}(\mathbf{s})$ have weight at least
    \begin{align}\label{eq:extended excitation}
        \abs{\mathbf{e}(\mathbf{s})} \geq 2\gamma(1-\delta) n \, .
    \end{align}
    Moreover, if we have $\delta \leq 1/4$, then the above cutoff on the syndrome weight can be increased to $\abs{\mathbf{s}} < \Delta_B$.
\end{cor}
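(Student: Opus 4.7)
The plan is to extend the argument for Theorem \ref{thm:linear distance} by quantifying, rather than merely detecting, the number of violated checks produced by a low-weight error. I will prove the contrapositive: any error $\mathbf{e}$ with $H\mathbf{e} = \mathbf{s} \neq \mathbf{0}$ and $|\mathbf{e}| < 2\gamma(1-\delta)n$ must produce a syndrome of weight $|\mathbf{s}| \geq \Delta_B(1-2\delta)$ in general, and $|\mathbf{s}| \geq \Delta_B$ when $\delta \leq 1/4$. Writing $B' = \mathrm{supp}(\mathbf{e})$, I split on $|B'|$.

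\textbf{Case 1}, $|B'| \leq \gamma n$: Corollary \ref{cor:confinement} gives $|\mathbf{s}| \geq \Delta_B(1-2\delta)|B'| \geq \Delta_B(1-2\delta)$ whenever $|\mathbf{e}| \geq 1$, which immediately contradicts the first hypothesis. For the strengthened hypothesis $|\mathbf{s}| < \Delta_B$ with $\delta \leq 1/4$, I treat $|B'| = 1$ separately: $\Delta_B$-regularity of $\mathcal{G}$ forces $|\mathbf{s}| = \Delta_B$ exactly for a single-bit error, since no cancellations among its $\Delta_B$ neighboring checks are possible. For $|B'| \geq 2$ with $\delta \leq 1/4$, the same confinement bound gives $|\mathbf{s}| \geq 2\Delta_B(1-2\delta) \geq \Delta_B$. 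Either way we contradict the strengthened hypothesis.

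\textbf{Case 2}, $\gamma n < |B'| < 2\gamma(1-\delta)n$: I mimic the partition strategy of Theorem \ref{thm:linear distance}. Decompose $B' = S \sqcup \bar{S}$ with $|S| = \gamma n$ and $|\bar{S}| = |B'| - \gamma n < \gamma(1-2\delta)n$. Lemma \ref{lem:unique neighbors} guarantees $|\partial_u S| \geq \Delta_B(1-2\delta)\gamma n$ unique neighbors of $S$. Any such unique neighbor that is \emph{not} adjacent to $\bar{S}$ is connected to $B'$ by exactly one edge, and is therefore violated. Because $\bar{S}$ emits at most $\Delta_B |\bar{S}|$ edges, the number of surviving violated checks obeys
\[
|\mathbf{s}| \;\geq\; |\partial_u S| - \Delta_B|\bar{S}| \;\geq\; \Delta_B\bigl[2\gamma(1-\delta)n - |B'|\bigr].
\]
For $|B'| < 2\gamma(1-\delta)n$ this lower bound is at least $\Delta_B$ (up to subleading $O(1)$ corrections from integer-rounding of $\gamma n$), which contradicts both hypotheses simultaneously.

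The main obstacle is the careful bookkeeping in Case 2: one must show that enough unique neighbors of $S$ survive cancellation from the edges out of $\bar{S}$. Because $|S|$ is chosen at the maximum size admitted by left-expansion, the cancellation budget $\Delta_B|\bar{S}|$ catches up to $|\partial_u S|$ exactly as $|B'|$ approaches $2\gamma(1-\delta)n$, which is precisely why the low-energy excitation threshold coincides with the linear distance threshold of Theorem \ref{thm:linear distance}. No structural input beyond left-expansion is needed; the corollary is simply a sharpening of the distance bound that tracks the syndrome weight along the way.
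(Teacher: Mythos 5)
Your proof takes essentially the same route as the paper: both split on $|\mathbf{e}|$ versus $\gamma n$, apply Corollary \ref{cor:confinement} in the small-weight regime, and reuse the partition $B'=S\sqcup\bar S$ with $|S|=\gamma n$ from Theorem \ref{thm:linear distance} to bound surviving unique neighbors in the medium-weight regime. One small improvement on your part: you explicitly isolate the $|B'|=1$ case (where $\Delta_B$-regularity forces $|\mathbf{s}|=\Delta_B$ exactly), whereas the paper's final sentence ``$|\mathbf{s}|\geq 2\Delta_B(1-2\delta)$'' tacitly assumes $|\mathbf{e}|\geq 2$; your version is a touch more airtight.
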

\begin{proof}
    We will prove \eqref{eq:extended excitation} by contradiction. Suppose we have a syndrome $\mathbf{s}$ with weight $\abs{\mathbf{s}} < \Delta_B (1-2\delta)$. First we will assume that the weights of its corresponding errors satisfy $\abs{\mathbf{e}} < \gamma n$. Since $\delta<1/2$, Corollary \ref{cor:confinement} tells us that the syndrome weight is lower-bounded by $\abs{\mathbf{s}} \geq \Delta_B(1-2\delta)\abs{\mathbf{e}}$. For a nonzero syndrome, we must have $\abs{\mathbf{e}} \geq 1$, and so we arrive at $\abs{\mathbf{s}} \geq \Delta_B(1-2\delta)$, a contradiction. Now assume $\gamma n \leq \abs{\mathbf{e}} < 2\gamma(1-\delta)n$. Following the linear distance proof of Theorem \ref{thm:linear distance}, we partition $\mathrm{supp}(\mathbf{e}) = S \sqcup \bar{S}$ with $\abs{S} = \gamma n$ and $\abs{\bar{S}} < \gamma(1-2\delta)n$. Lemma \ref{lem:unique neighbors} tells us that $S$ has at least $\abs{\partial_u S} \geq \Delta_B(1-2\delta)\abs{S} = \gamma\Delta_B(1-2\delta)n$ unique neighbors. On the other hand, we also know that $\bar{S}$ has at most $\Delta_B\abs{\bar{S}} \leq \gamma\Delta_B(1-2\delta)n - \Delta_B$ emanating edges. Consequently, the syndrome weight is at least $\abs{\mathbf{s}} \geq \Delta_B \geq \Delta_B(1-2\delta)$, a contradiction. Thus, for syndromes with weight $\abs{\mathbf{s}} < \Delta_B (1-2\delta)$, we must require \eqref{eq:extended excitation}. Finally, if $\delta < 1/4$, then we have $\abs{\mathbf{s}} \geq 2\Delta_B (1-2\delta) \geq \Delta_B$.
\end{proof}

As a consequence of Theorem \ref{thm:random expansion}, Lemma \ref{lem:rate vs design rate} and Cor. \ref{cor:extended excitation}, a random LDPC code with $\Delta_C > \Delta_B \geq 5$ will, with high probability, contain extended low-energy excitations associated with low-weight syndromes $\abs{\mathbf{s}} < \Delta_B$. Since Cor. \ref{cor:extended excitation} holds for \emph{all} valid error configurations $\mathbf{e}$ satisfying $\mathbf{s} = H\mathbf{e}$, we can also conclude that these low-energy excitations are necessarily far (in Hamming distance) from \emph{all} codewords. We hence refer to such low-energy states as \emph{non-decodable} states. By linearity, we can employ the confinement property (Corollary \ref{cor:confinement}) to say that the energy barriers for each of these minima are also macroscopic. As we will see in the next section, the lack of redundant checks in random LDPC codes also has crucial implications on the thermodynamics of their corresponding Hamiltonians.

\subsection{Thermodynamics and Kramers-Wannier duality}
\label{app:KW duality}

As was recently emphasized in \cite{Rakovszky:2023fng}, and as is well-known in the classical coding literature itself, there exists a \textbf{Kramers-Wannier-dual code} $\mathcal{C}_{\rm KW}$ for any classical linear code $\mathcal{C}$. The KW-dual code is found by looking at the co-chain complex associated with (\ref{eq:3-term chain}); its parity-check matrix is given by $M^\transpose$. 

Using the parity-check matrix $H$, we can define a Hamiltonian $\mathcal{H}$ acting on a system of $n$ spins $z_i = \pm 1$ by associating each row of $H$ as a multi-spin interaction:
\begin{align}\label{eq:Hamiltonian}
    \mathcal{H}(z) = - \sum_{c\in S} \prod_{i:H_{ci}=1} z_i \, .
\end{align}
The thermodynamic partition function of a code is then defined as 
\begin{equation}\label{eq:Z(beta)}
    Z_H(\beta) = \sum_{z \in \lbrace \pm 1 \rbrace^n} \mathrm{e}^{-\beta \mathcal{H}(z)}
\end{equation}
The subscript $H$ reminds us that the classical state space of the partition function is the output space of $H$: $\mathbb{F}_2^n$, the set of all physical bit strings -- this notation will shortly justify itself.

We now show that the thermodynamic partition function of the KW-dual code is identical (in terms of analytic behavior) to that of the original code, at a modified temperature.  The following theorem generalizes the classic \emph{self-duality} of the two-dimensional Ising model on a square lattice with periodic boundary conditions \cite{KW_duality}.  While we think it quite elegant in its own right, it also has crucial implications for the thermodynamics of LDPC codes.
\begin{thm}[Generalized Kramers-Wannier duality]\label{thm:Generalized KW}
    The thermodynamic partition function $Z_H(\beta)$ for a code and that of its KW-dual $Z_M(\beta)$ are related by
    \begin{align} \label{eq:KW}
        Z_H(\beta) &= 2^{n} \left(\cosh\beta\right)^m \mathrm{e}^{-\beta' r} \sum_{q\in\mathbb{Z}_2^R} \exp\Bigg[\beta^\prime \sum_{c\in S} \prod_{i:M^\transpose_{ci}=1} q_i \Bigg] = 2^{n} \left(\cosh\beta\right)^m \mathrm{e}^{-\beta' r} \times Z_M(\beta^\prime) \, ,
    \end{align}
    where \begin{equation}\label{eq:modified temperature}
        \mathrm{e}^{-2\beta^\prime} = \tanh \beta \, .
    \end{equation}
\end{thm}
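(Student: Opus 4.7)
The proof plan is a standard high-temperature (character) expansion adapted to general parity-check codes, generalizing the classical self-duality argument for the 2D Ising model. I first rewrite $\mathcal{H}(z) = -\sum_c s_c$ with $s_c(z) = \prod_{i:H_{ci}=1} z_i \in \{\pm 1\}$, and use the identity $e^{\beta s_c} = \cosh\beta\,(1 + s_c \tanh\beta)$ valid for any $s_c = \pm 1$. Expanding the product over the $m$ checks gives
\begin{equation*}
    Z_H(\beta) = \cosh^m\beta \sum_{\mathbf{t}\in\mathbb{F}_2^m} (\tanh\beta)^{|\mathbf{t}|} \sum_z \prod_c s_c(z)^{t_c}.
\end{equation*}
Collecting the powers of each $z_i$, the check product becomes $\prod_i z_i^{(H^\transpose\mathbf{t})_i}$, so the inner sum over $z \in \{\pm 1\}^n$ evaluates to $2^n$ whenever $H^\transpose\mathbf{t} = 0$ in $\mathbb{F}_2$ and vanishes otherwise. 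Only the left null vectors of $H$ contribute.

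By the chain complex (\ref{eq:3-term chain}) and the assumption that the rows of $M$ form a basis for the left null space of $H$, every contributing $\mathbf{t}$ is uniquely of the form $\mathbf{t} = M^\transpose \mathbf{q}$ with $\mathbf{q} \in \mathbb{F}_2^r$. This reduces the partition function to
\begin{equation*}
    Z_H(\beta) = 2^n \cosh^m\beta \sum_{\mathbf{q}\in\mathbb{F}_2^r}(\tanh\beta)^{|M^\transpose\mathbf{q}|}.
\end{equation*}

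I then convert the weight back into an exponential of the dual Hamiltonian by passing to the $\pm 1$ representation $Q_i = (-1)^{q_i}$. Setting $\sigma_c(\mathbf{q}) = \prod_{i:M^\transpose_{ci}=1} Q_i = (-1)^{(M^\transpose\mathbf{q})_c}$, the Hamming weight obeys $|M^\transpose\mathbf{q}| = \tfrac{1}{2}\sum_c (1 - \sigma_c(\mathbf{q}))$. Defining $\beta'$ by $\tanh\beta = e^{-2\beta'}$ so that $(\tanh\beta)^{|M^\transpose\mathbf{q}|} = e^{-2\beta' |M^\transpose\mathbf{q}|}$, the weight factorizes into a $\mathbf{q}$-independent constant times $\exp[\beta'\sum_c \sigma_c(\mathbf{q})]$. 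The $\mathbf{q}$-dependent piece reproduces $Z_M(\beta')$ in the form stated, and gathering all $\mathbf{q}$-independent prefactors yields (\ref{eq:KW}). The temperature relation (\ref{eq:modified temperature}) is forced on us by matching $\tanh\beta$ with $e^{-2\beta'}$, which swaps low and high temperatures.

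The only substantive obstacle is purely bookkeeping: carefully tracking the additive constants that appear when converting between the $\mathbb{F}_2$-valued Hamming weight $|M^\transpose\mathbf{q}|$ and the $\pm 1$-valued sum $\sum_c \sigma_c(\mathbf{q})$, and confirming that the rows of $M$ are genuinely linearly independent so that the reparametrization $\mathbf{t}\mapsto \mathbf{q}$ is a bijection and no redundancy factor is over- or under-counted. Conceptually, the chain-complex structure plays the role that geometric lattice duality plays for the 2D Ising model, with the redundancy space $R$ of (\ref{eq:3-term chain}) replacing the dual lattice, and the character-sum restriction to left null vectors replacing the geometric requirement that domain walls form closed surfaces.
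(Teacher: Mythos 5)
Your plan is essentially the same as the paper's: a check-by-check high-temperature expansion, an orthogonality argument over $z\in\{\pm 1\}^n$ that restricts the surviving terms $\mathbf{t}$ to left null vectors of $H$, a reparametrization $\mathbf{t} = M^\transpose\mathbf{q}$ (valid because the rows of $M$ are a basis of the left null space, which you correctly flag), and a conversion of $(\tanh\beta)^{|M^\transpose\mathbf{q}|}$ back to exponential form. The only cosmetic difference is that the paper effects the expansion by introducing auxiliary $\pm 1$ variables $s_c$ packaged with the delta function $\frac{1}{2}(1+s_c\prod_i z_i)$, rather than expanding $\mathrm{e}^{\beta s_c(z)} = \cosh\beta\,(1+s_c(z)\tanh\beta)$ directly as you do; these are interchangeable.

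However, you explicitly set aside the prefactor bookkeeping as ``the only substantive obstacle'' and then assert, without doing it, that it ``yields (\ref{eq:KW}).'' If you actually finish it you do \emph{not} recover (\ref{eq:KW}) as printed. Since $|M^\transpose\mathbf{q}| = \tfrac{1}{2}\sum_{c\in S}\bigl(1-\sigma_c(\mathbf{q})\bigr)$ and $|S|=m$, one has
\begin{equation}
(\tanh\beta)^{|M^\transpose\mathbf{q}|} = \mathrm{e}^{-2\beta'|M^\transpose\mathbf{q}|} = \mathrm{e}^{-\beta' m}\,\exp\Bigl[\beta'\sum_{c\in S}\sigma_c(\mathbf{q})\Bigr]\,,
\end{equation}
so the $\mathbf{q}$-independent constant is $\mathrm{e}^{-\beta' m}$, not the $\mathrm{e}^{-\beta' r}$ appearing in (\ref{eq:KW}). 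This agrees with the final display of the paper's own proof, and with the sanity check $r=0$: for a non-redundant code $Z_M(\beta')=\mathrm{e}^{\beta' m}$ (each dual ``check'' is an empty product equal to $+1$), and $2^n(\cosh\beta)^m\,\mathrm{e}^{-\beta' m}\,\mathrm{e}^{\beta' m} = 2^n(\cosh\beta)^m = 2^k(2\cosh\beta)^{n-k}$ matches the main-text partition function, whereas the prefactor $\mathrm{e}^{-\beta' r}$ would leave a spurious $\mathrm{e}^{\beta' m}$. So the factor $\mathrm{e}^{-\beta' r}$ in (\ref{eq:KW}) is a typo for $\mathrm{e}^{-\beta' m}$; a complete derivation must carry the arithmetic far enough to notice this rather than simply asserting agreement with the stated formula. (The downstream corollaries are unaffected, since the discrepancy is an analytic prefactor.)
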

\begin{proof}
Borrowing notation from \eqref{eq:3-term chain}, we write the corresponding partition function as
\begin{align}
    Z_H(\beta) = \sum_{z\in\mathbb{Z}_2^B} \exp\Bigg[\beta \sum_{c\in S} \prod_{i:H_{ci}=1} z_i\Bigg] = \sum_{z\in\mathbb{Z}_2^B} \sum_{s\in\mathbb{Z}_2^S} \prod_{c\in S} \mathrm{e}^{\beta s_c} \, \frac{1}{2} \Bigg( 1+s_c\prod_{i:H_{ci}=1} z_i \Bigg) \, ,
\end{align}
where $h_c = \mathbb{F}^n_2$ is the $c$th row of $H$, and $\beta$ is the usual inverse temperature. We now perform the sum over the physical spins $z_i = \pm 1$, noticing that terms only vanish if there is an even number of $z_i$ terms for all $i$.   The combinations of syndromes $s_c$ that survive are either (\emph{1}) $s_c=1$ or (\emph{2}) those which cannot be expressed in the form $\mathbf{s} = H\mathbf{e}$; namely, the set of all $s_c$ we wish to consider are precisely the  left null vectors of $H$.  Hence we may write
\begin{align}
    Z_H(\beta) &= 2^{n-m} \sum_{s\in\mathbb{Z}^S_2} \prod_{c\in S} e^{\beta s_c}  \times \prod_{\bm{\ell}\in M} \Bigg( 1 + \prod_{c:\bm{\ell}_c=1} s_c \Bigg) = 2^{n-m} \sum_{\mathbf{q}\in\mathbb{F}^R_2} \sum_{s\in\mathbb{Z}^S_2} \prod_{c\in S} e^{\beta s_c} \times \prod_{c:(M^\transpose \mathbf{q})_c=1} s_c  \notag \\
    &=  2^{n-m} \sum_{\mathbf{q}\in\mathbb{F}^R_2} \Bigg( \prod_{c:(M^\transpose \mathbf{q})_c=0} \sum_{s_c=\pm 1} e^{\beta s_c} \Bigg) \Bigg( \prod_{c:(M^\transpose \mathbf{q})_c=1} \sum_{s_c=\pm 1} s_c e^{\beta s_c} \Bigg) = 2^{n-m} \sum_{\mathbf{q}\in\mathbb{F}^R_2} \left( 2\cosh{\beta} \right)^{m-\abs{M^\transpose \mathbf{q}}} \left( 2\sinh{\beta} \right)^{\abs{M^\transpose \mathbf{q}}}  \notag \\
    &= 2^{n} \left( \cosh{\beta} \right)^m \sum_{\mathbf{q}\in\mathbb{F}^R_2} \left( \tanh{\beta} \right)^\abs{M^\transpose \mathbf{q}}  = 2^n \left(\cosh{\beta}\right)^m \sum_{q\in\mathbb{Z}^R_2} \exp\left[ -\frac{1}{2} \log\left( \frac{1}{\tanh{\beta}}\right) \sum_{c\in S} \Bigg( 1 - \prod_{i:M^\transpose_{ci}=1} q_i \Bigg) \right]
\end{align}
where in the second equality in the last line we switched from $\lbrace 0,1\rbrace$ (``$\mathbb{F}_2$-valued") to $\lbrace 1, -1\rbrace$ (``$\mathbb{Z}_2$-valued") coefficients in the sum over redundancies. Now defining $\beta^\prime$ using (\ref{eq:modified temperature}), we arrive at (\ref{eq:KW}). 
\end{proof}

We see that our original partition function is related to one where the ``spins" are the redundancies $q$, the ``Hamiltonian" is given by the (transposed) metacheck matrix $M^\transpose$, and the inverse temperature is given by \eqref{eq:modified temperature}. By far, the most important implications of this theorem for the present paper are the following two corollaries.
\begin{cor}\label{cor:no phase}
    Let $\mathcal{C}$ be a linear code chosen uniformly at random from the $(\Delta_B,\Delta_C)$-LDPC ensemble with $2 \leq \Delta_B < \Delta_C$, and let its corresponding Hamiltonian be given by \eqref{eq:Hamiltonian}.  As $n\rightarrow \infty$, there is almost surely no thermodynamic phase transition; namely, the free energy density
    \begin{equation}\label{eq:free energy density}
        f_H(\beta) = \lim_{n\rightarrow \infty}\left[-\frac{1}{n\beta}\log Z_H(\beta)\right]
    \end{equation}
    exists and is smooth for $0<\beta<\infty$. 
\end{cor}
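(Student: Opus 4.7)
The plan is to reduce the corollary to an explicit closed-form computation of $Z_H(\beta)$, leveraging the near-non-redundancy of a random LDPC check matrix. The essential structural input is Lemma \ref{lem:rate vs design rate}: with probability $1-o_n(1)$, the check matrix $H$ has rank $m-\nu$, where $\nu=1$ if $\Delta_B$ is even and $\nu=0$ otherwise. I would condition on this high-probability event throughout; the rest is algebra.

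Under this conditioning, the map $\mathbf{z} \mapsto H\mathbf{z}$ is a $2^k$-to-one map from $\mathbb{F}_2^n$ onto the subspace $\mathrm{Im}(H) \subseteq \mathbb{F}_2^m$ of dimension $m-\nu$. Substituting $\mathcal{H}(\mathbf{z}) = 2|H\mathbf{z}|-m$ into \eqref{eq:Z(beta)} and fibering the sum over syndromes gives
\begin{equation}
    Z_H(\beta) \;=\; 2^k \, \mathrm{e}^{\beta m} \sum_{\mathbf{s}\in \mathrm{Im}(H)} \mathrm{e}^{-2\beta|\mathbf{s}|}.
\end{equation}
When $\nu=0$, $\mathrm{Im}(H)=\mathbb{F}_2^m$ and the inner sum factorizes coordinatewise to yield $Z_H(\beta) = 2^k(2\cosh\beta)^m$. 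When $\nu=1$, the image is cut out by the single all-ones metacheck $\sum_c s_c = 0$ that is present whenever every bit lies in an even number of checks; a one-line character-sum calculation then produces $Z_H(\beta) = 2^{k-1}\bigl[(2\cosh\beta)^m + (2\sinh\beta)^m\bigr]$, which is strictly positive and real-analytic on $(0,\infty)$.

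Taking logarithms, dividing by $-n\beta$, and using $k/n \to \mathfrak{r}_{\rm des}$ and $m/n \to 1-\mathfrak{r}_{\rm des}$ from Theorem \ref{thm:constant rate} and Lemma \ref{lem:rate vs design rate}, I would read off
\begin{equation}
    f_H(\beta) \;=\; -\frac{\mathfrak{r}_{\rm des}\log 2}{\beta} - \frac{(1-\mathfrak{r}_{\rm des})\log(2\cosh\beta)}{\beta},
\end{equation}
which is manifestly $C^\omega$ on $(0,\infty)$. There is no serious obstacle: non-redundancy, the key ingredient, is delivered wholesale by Lemma \ref{lem:rate vs design rate} and collapses the problem to a product of independent single-check partition functions. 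The only point deserving a moment of care is the $\nu=1$ branch, where one must verify that the bracketed sum $(2\cosh\beta)^m + (2\sinh\beta)^m$ is strictly positive on $(0,\infty)$ so that its logarithm is smooth; this is immediate since $\cosh\beta > \sinh\beta > 0$ there, and in any case this correction contributes only an $O(n^{-1})$ term to $(n\beta)^{-1}\log Z_H$ that vanishes in the thermodynamic limit, leaving the displayed $f_H(\beta)$ unchanged.
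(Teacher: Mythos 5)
Your proof is correct. It differs from the paper's Appendix proof of this corollary in route, though not in its essential input. The paper proves the corollary by first establishing the generalized Kramers--Wannier duality (Theorem \ref{thm:Generalized KW}), which rewrites $Z_H(\beta)$ in terms of a dual partition function $Z_M(\beta')$ over the $r$-dimensional space of redundancies; it then appeals to Lemma \ref{lem:rate vs design rate} to conclude $r\in\{0,1\}$ almost surely, so the dual sum has at most two terms and contributes nothing to $n^{-1}\log Z_H$ in the limit. You instead fiber the partition sum directly over syndromes in $\mathrm{Im}(H)$ and evaluate it in closed form, which is exactly the calculation the paper performs in the main text (the chain $Z(\beta)=\cdots=2^{\mathfrak{r}n}(2\cosh\beta)^{(1-\mathfrak{r})n}$). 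Both proofs rest equally on Lemma \ref{lem:rate vs design rate}. Your direct route is more elementary and self-contained, and you do a bit more than the paper by explicitly writing out the $\nu=1$ (even $\Delta_B$) case via the character sum giving $2^{k-1}\bigl[(2\cosh\beta)^m+(2\sinh\beta)^m\bigr]$; the paper's KW route is slightly heavier machinery here but pays for itself by immediately yielding the stronger Corollary \ref{cor:need O(n) redundancy} that a phase transition requires $r=\Omega(n)$, which is invisible to a single explicit evaluation.
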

\begin{proof}
    This result follows immediately from Lemma \ref{lem:rate vs design rate}, which asserts that as $n\rightarrow \infty$, the probability that a random LDPC code has either $r=0$ or $r=1$ redundant checks approaches 1.  As a consequence, for a random LDPC code, we use (\ref{eq:KW}) to find  that almost surely \begin{equation}
        f_H(\beta) = -\frac{1}{\beta} \left[\log 2 + \frac{\Delta_B}{\Delta_C} \log\left( \cosh \beta\right) \right] \equiv f_{r=0}(\beta) \, .
    \end{equation}
    Clearly this function is smooth in the desired domain.
\end{proof}

\begin{cor}[Extensive redundancy needed for phase transition \cite{Yoshida_2011, freedman2013, Weinstein_2019}]\label{cor:need O(n) redundancy}
    For Hamiltonians \eqref{eq:Hamiltonian} of any linear code, if there is a phase transition in $Z_H(\beta)$ as $n\rightarrow \infty$, then $r=\Omega(n)$.
\end{cor}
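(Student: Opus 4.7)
The plan is to prove the contrapositive: if $r=o(n)$, then the free energy density $f_H(\beta)$ is analytic on $(0,\infty)$, hence no thermodynamic phase transition. The key tool is the just-proved generalized Kramers--Wannier duality (Theorem~\ref{thm:Generalized KW}), which converts the problem about $Z_H$ into a question about $Z_M(\beta')$, a partition function on only $r$ spins.

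First I would substitute the duality identity into $\log Z_H(\beta)$ to obtain
\begin{equation}
    \tfrac{1}{n}\log Z_H(\beta) = \log 2 + \tfrac{m}{n}\log\cosh\beta - \tfrac{r\beta'}{n} + \tfrac{1}{n}\log Z_M(\beta'),
\end{equation}
where $\beta'(\beta) = -\tfrac{1}{2}\log\tanh\beta$ is a smooth, strictly positive function of $\beta$ on $(0,\infty)$. The first two terms are manifestly smooth in $\beta$ for every finite $n$, and the third term $-r\beta'/n$ is smooth and vanishes in the limit $n\to\infty$ whenever $r=o(n)$. The only term that could harbor a nonanalyticity in the thermodynamic limit is $\tfrac{1}{n}\log Z_M(\beta')$, and this is exactly where the smallness of $r$ must be exploited.

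Next I would sandwich $Z_M(\beta')$ between two elementary bounds. Since $Z_M(\beta') = \sum_{q\in\{\pm 1\}^r}\exp[\beta'\sum_{c\in S}\prod_{i:M^\transpose_{ci}=1}q_i]$ has $2^r$ terms, each exponent satisfying $|\sum_c\prod_i q_i^{M^\transpose_{ci}}|\le m$, the trivial upper bound gives $\log Z_M(\beta') \le r\log 2 + m\beta'$. For the lower bound, taking $q=\mathbf{1}$ makes every product equal to $+1$ and the exponent equal to $m$, yielding $\log Z_M(\beta')\ge m\beta'$ (using $\beta'>0$ for $0<\beta<\infty$). Combining:
\begin{equation}
    0 \le \log Z_M(\beta') - m\beta' \le r\log 2,
\end{equation}
uniformly in $\beta$. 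Substituting this back,
\begin{equation}
    \tfrac{1}{n}\log Z_H(\beta) = \log 2 + \tfrac{m}{n}\bigl[\log\cosh\beta + \beta'(\beta)\bigr] + \tfrac{1}{n}\varepsilon_n(\beta),
\end{equation}
with $|\varepsilon_n(\beta)| \le r\log 2 + r\beta'(\beta) = O(r)$ on every compact subset of $(0,\infty)$.

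Finally I would conclude: the bracketed function $\log\cosh\beta + \beta'(\beta)$ is real-analytic on $(0,\infty)$, so the leading-in-$n$ expression is smooth for every $n$, and the remainder $\varepsilon_n/n$ tends to zero uniformly on compacta whenever $r=o(n)$. Therefore the free energy density $f_H(\beta)$, if it exists, must coincide with this smooth limit and hence cannot exhibit a singularity: no phase transition. By contraposition, the presence of a thermodynamic singularity forces $r=\Omega(n)$, which is the claim. The main obstacle I anticipate is keeping the bound on $\log Z_M(\beta')-m\beta'$ uniform in $\beta$ (rather than just pointwise) so that analyticity is genuinely preserved under the $n\to\infty$ limit; this is handled cleanly by the temperature-independent upper bound $r\log 2$ above, which is why using the trivial $2^r$ count rather than any finer estimate is adequate.
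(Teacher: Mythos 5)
Your proof is correct and follows essentially the same route as the paper: apply the generalized Kramers--Wannier duality and observe that the only potentially singular contribution, the dual partition function over $r$ spins, is too small to matter when $r=o(n)$. Your explicit sandwich $0\le \log Z_M(\beta')-m\beta'\le r\log 2$ (equivalently $1\le\sum_q(\tanh\beta)^{|M^\transpose q|}\le 2^r$) makes precise the paper's terser remark that the dual sum has ``$2^r$ terms of bounded magnitude,'' so you have simply filled in the estimate the paper leaves implicit.
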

\begin{proof}
    We evaluate the free energy density \eqref{eq:free energy density}, using \eqref{eq:KW}, to obtain
    \begin{align}
        f_H(\beta) = f_{r=0}(\beta) -\frac{1}{\beta} \lim_{n\rightarrow\infty} \left[ \frac{r}{n} + \frac{1}{n} \log Z_M(\beta'(\beta)) \right] \, .
    \end{align}
    Any non-analytic contribution to $f_H(\beta)$ must come from $Z_M(\beta')$. Since the sum in $Z_M(\beta')$ \eqref{eq:KW} involves $2^r$ terms of bounded magnitude, and $\beta'$ is a smooth function of $\beta$ \eqref{eq:modified temperature}, the limit vanishes unless $r = \Omega(n)$. We conclude that $r=\Omega(n)$ is necessary for any phase transition.
\end{proof}

There is a physically intuitive picture for why a low or non-redundant code does not exhibit a phase transition. Consider the low-energy landscape of such a code. This landscape is dominated by basins surrounding not only codewords but also non-decodable low-energy states corresponding to the extended excitations from Cor. \ref{cor:extended excitation}. These non-decodable basins ``dilute" the Gibbs probability measure and prevent its condensation around codewords.  The abrupt condensation of the typical states in the ensemble near codewords is necessary to have a phase transition; indeed this is precisely the ferromagnetic transition of the 2D Ising model.  Without redundancy, the low-energy non-decodable (sick) states provably prevent such condensation -- there is a crossover upon lowering the temperature in which the Gibbs measure favors being increasingly close to codewords \emph{or} the sick states, until at $\beta=\infty$ we find only the true codewords.

\subsection{Lower bound on the classical mixing time}

We now prove that LDPC codes with sufficient underlying expansion, despite having no redundant checks and consequently no thermodynamic phase transition, exhibit exponentially slow dynamics under a local Gibbs sampler (e.g. Metropolis). The idea will be to use the confinement property of expander codes to show the existence of a macroscopic energy barrier between codewords. One can then show that the average time for a local Gibbs sampler to reach states on this energy barrier grows exponentially with the system size.

\begin{thm}[Slow classical Gibbs sampling]
\label{thm:exp mixing time classical}
    Let $\mathcal{C}$ be a $(\Delta_B,\Delta_C)$-LDPC code of length $n$ whose Tanner graph is a $(\gamma=\Omega(1),\delta<1/2)$ left-expander, and let $Z(\beta)$ be the corresponding thermal partition function according to \eqref{eq:Z(beta)}. Then there exists a critical inverse temperature
    \begin{align}
        \beta_c \leq \frac{1+\log\left(2/\gamma\right)}{2\Delta_B(1-2\delta)} \, ,
    \end{align}
    above which the average hitting time of a local Gibbs sampler to a state with $d/2$ errors obeys
    \begin{align}\label{eq:t_mix classical}
        t_{\rm hit} \geq \mathrm{e}^{\alpha n}
    \end{align}
    for some constant $\alpha(\beta)>0$ independent of $n$.
\end{thm}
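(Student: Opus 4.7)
The plan is to apply a Markov-chain bottleneck argument. I take the bottleneck set to be $E := \{\mathbf{z} : |\mathbf{z}| = \lfloor \gamma n/2 \rfloor\}$. Since Metropolis flips a single bit per step, any trajectory starting from the codeword $\mathbf{0}$ and reaching a state of weight $\geq d/2 \geq \gamma(1-\delta) n > \gamma n/2$ (using Theorem \ref{thm:linear distance}) must pass through $E$; hence it suffices to lower bound the hitting time $\tau_E$ starting at $\mathbf{0}$.

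The first step is to control the Gibbs weight ratio $\pi(E)/\pi(\mathbf{0})$. By linear confinement (Corollary \ref{cor:confinement}), every $\mathbf{z} \in E$ obeys $|H\mathbf{z}| \geq \eta |\mathbf{z}| = \eta\gamma n/2$ with $\eta = \Delta_B(1-2\delta)$, so $\mathcal{H}(\mathbf{z}) - \mathcal{H}(\mathbf{0}) = 2|H\mathbf{z}| \geq \eta \gamma n$. Combined with the binary-entropy cardinality estimate $|E| \leq \binom{n}{\lfloor\gamma n/2\rfloor} \leq \exp[\gamma n (1+\log(2/\gamma))/2]$, this yields $\pi(E)/\pi(\mathbf{0}) \leq \exp[-(\gamma n/2)(2\beta\eta - \log(2/\gamma) - 1)]$, which is exponentially small whenever $\beta > \beta_c := (1 + \log(2/\gamma))/(2\eta)$, matching the claimed upper bound on the critical inverse temperature.

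The second step is to convert this static suppression into a dynamical hitting-time bound. Let $A := \{\mathbf{z} : |\mathbf{z}| < \gamma n/2\}$. Then $\mathbf{0} \in A$, and every boundary transition out of $A$ lands in $E$. Reversibility of Metropolis gives the flux bound $Q(A, E) := \sum_{\mathbf{x}\in A,\, \mathbf{y}\in E}\pi(\mathbf{x}) P(\mathbf{x},\mathbf{y}) \leq \pi(E)$, while $\pi(A) \geq \pi(\mathbf{0})$. A standard bottleneck/hitting-time inequality then produces $\mathbb{E}_{\mathbf{0}}[\tau_E] \gtrsim \pi(\mathbf{0})/\pi(E) \geq e^{\alpha n}$ for some $\alpha(\beta) > 0$ whenever $\beta > \beta_c$.

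The main obstacle is this last step: the cleanest Cheeger/conductance statements bound a total-variation mixing time rather than a hitting time from a specific initial state. One way to close the gap is to note that at low temperature $\pi(\mathbf{0})$ dominates $\pi(A)$ up to subexponential corrections, so starting at $\mathbf{0}$ is equivalent, for exponential-order bounds, to starting in a $\pi$-typical state on $A$, for which conductance directly controls the first exit time from $A$. Alternatively, one may project the dynamics onto a one-dimensional birth-death chain tracking the Hamming weight $|\mathbf{z}_t|$; confinement forces the effective drift inward on $A$ at low temperature, and the classical birth-death first-passage formula then delivers the exponential lower bound directly.
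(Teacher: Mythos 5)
Your static analysis coincides exactly with the paper's: same bottleneck set $E$, same combination of linear confinement with the binomial estimate, and the same critical inverse temperature. The gap lies in passing from $\pi[E]/\pi[\mathbf{0}]$ to a hitting-time bound from the specific start $\mathbf{0}$, and both remedies you propose have genuine problems. Your first fix asserts $\pi[\mathbf{0}]\approx\pi[A]$ up to subexponential corrections, which is false: confinement gives $\pi[\mathbf{x}]/\pi[\mathbf{0}]\leq\mathrm{e}^{-2\beta\eta|\mathbf{x}|}$ with $\eta=\Delta_B(1-2\delta)$, so
\[
\frac{\pi[A]}{\pi[\mathbf{0}]}\leq\sum_{w=0}^{\lfloor\gamma n/2\rfloor}\binom{n}{w}\,\mathrm{e}^{-2\beta\eta w},
\]
and since $\binom{n}{w}\approx\mathrm{e}^{n h_2(w/n)}$ the summand peaks near $w/n=(1+\mathrm{e}^{2\beta\eta})^{-1}>0$, making the sum $\mathrm{e}^{\Theta(n)}$ at every fixed $\beta$. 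A $\pi|_A$-typical state carries a nonzero \emph{density} of scattered errors and is not exponentially close to the codeword, so the conductance-style escape bound for $\pi|_A$ does not transfer to the start at $\mathbf{0}$ for free. Your second fix projects onto $|\mathbf{z}_t|$, but this is not a Markov process -- the Metropolis acceptance probability for a bit flip depends on which checks are currently violated, not merely on the Hamming weight -- so birth-death first-passage formulas cannot be applied without an additional stochastic-domination lemma that you do not supply.

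The paper sidesteps both issues with a shorter argument that you should adopt. Since the one-step Metropolis kernel is reversible with respect to $\pi$, so is every $t$-step kernel, giving
\[
\mathbf{P}_t[\mathbf{0}\rightarrow\mathbf{x}]=\frac{\pi[\mathbf{x}]}{\pi[\mathbf{0}]}\,\mathbf{P}_t[\mathbf{x}\rightarrow\mathbf{0}]\leq\frac{\pi[\mathbf{x}]}{\pi[\mathbf{0}]}
\]
for every $t$. Summing over $\mathbf{x}\in E$ yields $\mathbf{P}[\mathbf{z}_t\in E]\leq\pi[E]/\pi[\mathbf{0}]$ uniformly in $t$; a union bound over $1\leq t\leq T$ then gives $\mathbf{P}[\tau_E\leq T]\leq T\,\pi[E]/\pi[\mathbf{0}]$, and taking $T\sim\pi[\mathbf{0}]/\pi[E]$ produces the stated hitting-time bound. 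This step requires no conductance inequality and no comparison of $\pi[\mathbf{0}]$ with $\pi[A]$; you should replace your final paragraph with it.
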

\begin{proof}
    Let $H$ denote the parity-check matrix and $\mathcal{H}$ its corresponding Hamiltonian according to \eqref{eq:Hamiltonian}. Since $\delta<1/2$, Corollary \ref{cor:confinement} tells us that we have linear confinement of errors: for any error $\mathbf{e}\in\mathbb{F}^n_2$ such that $\abs{\mathbf{e}}\le \gamma n$, the weight of its syndrome $\abs{\mathbf{s}(\mathbf{e})} = |H\mathbf{e}|\ge \eta \abs{\mathbf{e}}$ where $\eta = \Delta_B(1-2\delta)$. Theorem \ref{thm:linear distance} also tells us that the Hamming distance of all nonzero codewords is at least $d \geq 2\gamma(1-\delta)n > \gamma n$. Choose the bottleneck set $E$ to be the cut set of states with $\abs{\mathbf{x}} = \gamma n/2$. Note that a local Gibbs sampler must traverse $E$ in order to reach a state with at least $d/2$ errors. Since $\mathcal{C}$ is a linear code, without loss of generality, we can start in the ground state corresponding to the $\mathbf{0}$ codeword.

    We begin by rewriting the code's Hamiltonian \eqref{eq:Hamiltonian} in the form
    \begin{align}\label{eq:F2 Hamiltonian}
        \mathcal{H}(\mathbf{x}) = -m + 2\abs{H\mathbf{x}}
    \end{align}
    for states $\mathbf{x} \in \mathbb{F}^n_2$. From the confinement property (Corollary \ref{cor:confinement}), we know that the energy landscape of \eqref{eq:F2 Hamiltonian} around $\mathbf{x}=\mathbf{0}$ is comprised of a deep potential well with an energy barrier of height at least $\mathcal{H}(\mathbf{x}\in E) \geq \gamma\eta n-m$. Now recall that any local Gibbs sampler must satisfy the detailed balance condition
    \begin{align}\label{eq:detailed balance}
        \frac{\mathbf{P}_t[\mathbf{x}_1 \rightarrow \mathbf{x}_2]}{\mathbf{P}_t[\mathbf{x}_2 \rightarrow \mathbf{x}_1]} = \frac{\pi[\mathbf{x}_2]}{\pi[\mathbf{x}_1]} = \exp \big[\beta\left( \mathcal{H}(\mathbf{x}_1) - \mathcal{H}(\mathbf{x}_2) \right)\big] \, ,
    \end{align}
    where $\pi[\mathbf{x}]$ is the equilibrium (Gibbs) probability of $\mathbf{x}$. Importantly, because the code distance obeys $d > \gamma n$ (Theorem \ref{thm:linear distance}), we know that the local sampler \emph{must} traverse the bottleneck set $E$ in order to reach the other ground states. We then bound the transition probability from our codeword state $\mathbf{0}$ to any state on $E$: after any time $t$ of Gibbs sampler evolution:
    \begin{align}
        \mathbf{P}_t[ \mathbf{0}\rightarrow E ] &= \sum_{\mathbf{x}\in E} \mathbf{P}_t[\mathbf{0}\rightarrow\mathbf{x}] \leq \sum_{\mathbf{x}\in E} \mathrm{e}^{-\beta [\mathcal{H}(\mathbf{x}) - \mathcal{H}(\mathbf{0})]}   \notag \\
        &\leq {n \choose \gamma n/2}\, \mathrm{e}^{-\beta\eta\gamma n} \leq \left( \frac{2\mathrm{e}n}{\gamma n} \right)^{\gamma n/2} \mathrm{e}^{-\beta\eta\gamma n}  \leq  \exp\left[ \frac{1}{2}\gamma n \left( 1+\log(2/\gamma) - 2\beta\eta \right) \right] \, ,
    \end{align}
    where we used the detailed balance condition \eqref{eq:detailed balance} in the second inequality and the fact that $\mathbf{P}_t(\mathbf{x}\rightarrow \mathbf{0}) \le 1$. Inverting the above expression for the transition probability gives us a lower bound on the average hitting time of the bottleneck set:
    \begin{align}\label{eq:t_hit bound long}
        t_{\rm hit}[\mathbf{0}\rightarrow E ] \geq \exp \left[ \frac{1}{2}\gamma n \left( 2\beta\eta - 1 - \log(2/\gamma) \right) \right] \, .
    \end{align}
    Thus, the average hitting time will be exponentially long in the system size as long as
    \begin{align}
        \beta > \frac{1+\log(2/\gamma)}{2\eta} \, ,
    \end{align}
    which completes the proof.
\end{proof}

By the Markov chain bottleneck theorem, our hitting time bound is also a lower bound for the mixing time. Combining Theorem \ref{thm:random expansion}, Corollary \ref{cor:need O(n) redundancy} and Theorem \ref{thm:exp mixing time classical}, and using the simple fact that if $\beta=0$ the Gibbs sampler is a random walk on the hypercube $\mathbb{F}_2^n$ (which does not mix slowly \cite{levin_markovchains}), we obtain the following Corollary.

\begin{cor}[Typical classical LDPC memory]
\label{cor:classicalphases}
If $\Delta_C>\Delta_B\geq 3$, then a code chosen uniformly at random from the $(\Delta_B,\Delta_C)$-LDPC ensemble at fixed $n$ has, with probability one (as $n\rightarrow \infty$): (\emph{1}) a \emph{dynamical phase transition} corresponding to a non-analyticity in $\log t_{\mathrm{mix}}(\beta)$, the mixing time of the Metropolis Gibbs sampler,  as a function of $\beta$; (\emph{2}) no thermodynamic phase transition (non-analyticity in $\log Z(\beta)$). 
\end{cor}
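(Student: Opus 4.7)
The plan is to combine three ingredients already established: the random expansion bound (Theorem \ref{thm:random expansion}), the slow-mixing theorem (Theorem \ref{thm:exp mixing time classical}), and the non-redundancy/KW-triviality pair (Lemma \ref{lem:rate vs design rate} and Corollary \ref{cor:no phase}). Since $\Delta_B\ge 3$, I would first pick a constant $\delta\in(1/3,1/2)$ so that simultaneously $\delta<1/2$ (the hypothesis needed by the distance, confinement, and slow-mixing theorems) and $\Delta_B\delta>1$ (so the exponent $\alpha$ in \eqref{eq:randomexpansion} is strictly positive). Theorem \ref{thm:random expansion} then yields an $n$-independent constant $\gamma>0$ such that, with probability $1-O(n^{-\alpha})$, the Tanner graph of a random $(\Delta_B,\Delta_C)$-LDPC code is a $(\gamma,\delta)$ left-expander; in particular this holds almost surely as $n\to\infty$.

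For claim \emph{(1)}, I would apply Theorem \ref{thm:exp mixing time classical} to this (almost sure) left-expander to obtain a finite $\beta_\star\le (1+\log(2/\gamma))/(2\Delta_B(1-2\delta))$ such that for every $\beta>\beta_\star$ the Metropolis hitting time from the $\mathbf{0}$ codeword to a state of weight $d/2$ is at least $\mathrm{e}^{\alpha(\beta)n}$ with $\alpha(\beta)>0$, and hence $t_{\mathrm{mix}}(\beta)\ge \mathrm{e}^{\alpha(\beta)n}$ by the bottleneck inequality. At the opposite end, when $\beta=0$ the Metropolis chain of \eqref{eq:db} reduces to the lazy simple random walk on the hypercube $\mathbb{F}_2^n$, whose mixing time is $O(n\log n)$ (see e.g.\ \cite{levin_markovchains}); by continuity of the acceptance probabilities, the same bound (up to constants) persists in a neighborhood of $\beta=0$. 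Hence the quantity
\begin{equation}
    \psi(\beta) \;\equiv\; \limsup_{n\to\infty}\, \frac{1}{n}\log t_{\mathrm{mix}}(\beta,n)
\end{equation}
vanishes at high temperature and is strictly positive at low temperature, so it cannot be everywhere analytic on $[0,\infty)$. Defining the dynamical critical point as $\beta_{\mathrm c}\equiv \inf\{\beta:\psi(\beta)>0\}$, we obtain a bona fide non-analyticity at a finite nonzero $\beta_{\mathrm c}\in(0,\beta_\star]$; this is precisely the dynamical phase transition asserted in \emph{(1)}.

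For claim \emph{(2)}, I would invoke Lemma \ref{lem:rate vs design rate}, which asserts that with probability $1-o_n(1)$ the random LDPC code is either fully non-redundant ($r=0$) or has exactly one redundancy ($r=1$, when $\Delta_B$ is even). In either case the generalized Kramers-Wannier identity \eqref{eq:KW} of Theorem \ref{thm:Generalized KW} reduces $Z_H(\beta)$ to the closed analytic form given in the proof of Corollary \ref{cor:no phase}: $f_H(\beta)=-\beta^{-1}\big[\log 2+(\Delta_B/\Delta_C)\log\cosh\beta\big]$, which is $C^\infty$ on $(0,\infty)$. Hence, almost surely as $n\to\infty$, $\log Z(\beta)$ has no non-analyticity, proving the absence of a thermodynamic phase transition.

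The main technical subtlety, and the only one not already packaged in a cited lemma, is the precise formulation of the dynamical phase transition: $t_{\mathrm{mix}}(\beta,n)$ for each finite $n$ depends only piecewise-analytically on $\beta$, so the ``non-analyticity'' must be interpreted in the thermodynamic limit via an order parameter such as $\psi(\beta)$ above. The argument that $\psi$ jumps from $0$ to a positive value is quick given the two ingredients (polynomial mixing at $\beta=0$ versus exponential mixing for $\beta>\beta_\star$); nothing more delicate is required because analyticity on a connected interval would force $\psi$ to be constant equal to its value near $0$.
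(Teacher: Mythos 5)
Your proposal is correct and follows essentially the same route as the paper's own proof: Theorem \ref{thm:random expansion} gives almost-sure $(\gamma,\delta)$ expansion, Theorem \ref{thm:exp mixing time classical} supplies the exponential low-temperature mixing bound, and non-redundancy (Lemma \ref{lem:rate vs design rate} via Corollary \ref{cor:no phase}) gives the closed-form analytic free energy. The one place you are actually a bit more careful than the paper is in flagging that a non-analyticity in $\psi(\beta)\equiv\limsup_n n^{-1}\log t_{\mathrm{mix}}$ requires $\psi$ to vanish on an \emph{open} interval near $\beta=0$, not merely at $\beta=0$ (since an analytic function like $\beta\mapsto\beta^2$ vanishes at the origin yet is positive elsewhere); the paper's proof only invokes fast mixing at $\beta=0$ exactly, whereas a rigorous argument would support your "neighborhood of $\beta=0$" claim with a standard high-temperature uniqueness criterion (e.g.\ Dobrushin) applicable to bounded-degree LDPC Hamiltonians.
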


The physical interpretation of Corollary \ref{cor:classicalphases} is rather interesting: a typical classical LDPC code gives rise to a problem which is thermodynamically trivial (no nonzero temperature phase transitions) and yet exhibits dynamical \emph{ergodicity breaking} -- the Gibbs sampler gets stuck near codewords (or near a non-decodable low-energy state) for exponentially long times in system size.

We also emphasize that Corollary \ref{cor:classicalphases} implies that the thermal Gibbs sampler can be justifiably called a \emph{thermal/passive memory}.  While it is true that a \emph{typical} system will not be found in a codeword state after finite time due to a nonzero density of small errors, it is the case that one can use standard decoders such as bit flip \cite{Sipser_1996} or message passing \cite{BP_decoding} to decode the LDPC code at any time before the hitting time of Theorem \ref{thm:exp mixing time classical} and obtain the exact ground state (codeword). From the error correction perspective therefore, these thermal systems serve as excellent memories of the stored information, which can be efficiently decoded.


\section{Quantum LDPC codes}
\label{app:quantum LDPC}

We now turn our attention to the study of quantum LDPC codes.  To the extent possible, our presentation and results will closely mirror the classical story above.  A nice physics perspective on such codes can be found in \cite{Rakovszky:2023fng,Rakovszky:2024iks}.

\subsection{CSS codes and the hypergraph product}
\label{app:HGP codes}

A quantum stabilizer code is a $2^K$-dimensional subspace of $N$ qubits which is the simultaneous $+1$ eigenspace of a commuting set of Pauli operators, a generalization of parity checks to qubits \cite{gottesman1997}. A Calderbank-Shor-Steane (CSS) code is a particular instance of a stabilizer code in which the checks are strictly $X$-type or $Z$-type Paulis \cite{Calderbank_1996, Steane_1996}. CSS codes can be defined by two classical binary linear codes $\mathcal{C}_X$ and $\mathcal{C}_Z$ whose parity-check matrices satisfy the orthogonality condition $H^{}_X H^\transpose_Z = 0$, corresponding to the commutativity of the checks. This orthogonality condition induces a 5-term chain complex
\begin{align}\label{eq:5-term chain}
    R_X \xlongrightarrow{M^\transpose_X} S_X \xlongrightarrow{H^\transpose_X} Q \xlongrightarrow{H_Z} S_Z \xlongrightarrow{M_Z} R_Z \, ,
\end{align}
where $Q$ is the space of physical qubits, and $S$ and $R$ are the spaces of syndromes and redundancies respectively for the subscript Pauli type. This 5-term chain complex was previously used in the context of single-shot error correction \cite{Campbell_2019}. For CSS codes, the generalized Kramers-Wannier duality from Sec. \ref{app:KW duality} becomes trivial from a thermodynamics standpoint because reversing the direction of the arrows in \eqref{eq:5-term chain} simply switches the roles of $X$ and $Z$. Due to error digitization in stabilizer codes, decoding for CSS codes can be done independently on $\mathcal{C}_X$ (for $Z$ errors) and $\mathcal{C}_Z$ (for $X$ errors). We say a CSS code is LDPC if and only if both $\mathcal{C}_X$ and $\mathcal{C}_Z$ are LDPC codes.

The easiness of obtaining good classical LDPC codes ironically causes a considerable challenge for their quantum counterparts. Because a random classical LDPC code has a large distance with high probability (recall Section \ref{app:classical LDPC}), choosing a random LDPC code for $\mathcal{C}_X$ will typically result in any orthogonal $\mathcal{C}_Z$ to have large weight, and thus be non-LDPC.

The hypergraph product (HGP) was the first quantum LDPC construction to achieve codes with constant rate $K = \Theta(N)$ while maintaining a polynomial scaling of the code distance $D = \Theta(\sqrt{N})$ \cite{HGP}. The construction takes in two arbitrary classical input codes, described by the 3-term chain \eqref{eq:3-term chain}, and performs a homological tensor product to form a 5-term chain, from which the corresponding CSS code can be identified using \eqref{eq:5-term chain}. Specifically, the $X$ and $Z$-type parity-check matrices are given by
\begin{subequations}
\begin{align}
    H_X &= \left( \begin{array}{c|c} H_1 \otimes \ident_{n_2}\; & \;\ident_{m_1} \otimes H^\transpose_2 \end{array} \right) \\
    H_Z &= \left( \begin{array}{c|c} \ident_{n_1} \otimes H_2\; & \; H^\transpose_1 \otimes \ident_{m_2} \end{array} \right) \, ,
\end{align}
\end{subequations}
from which one can straightforwardly verify $H^{}_X H^\transpose_Z = 2H^{}_1 \otimes H^\transpose_2 = 0$ over $\mathbb{F}_2$. Geometrically, the Tanner graph of the HGP code resembles a Euclidean graph product of those of its input classical codes (see Fig. 2 of the main text). The quantum code parameters are given by
\begin{subequations}
\begin{align}
    N &= n_1n_2 + m_1m_2 \\
    K &= k^{}_1 k^{}_2 + k^\transpose_1 k^\transpose_2 \\
    D &= \min\left( d^{}_1, d^{}_2, d^\transpose_1, d^\transpose_2 \right)  \label{eq:HGP distance} \, ,
\end{align}
\end{subequations}
where the transpose superscript denotes the parameters of the respective transpose codes. Logical $X$ ($Z$) operators resemble codewords of the input classical codes traversing in the horizontal (vertical) direction.

The relative simplicity of the HGP construction leads to the quantum code inheriting many properties of its classical input codes. For ease of notation, we use the same input code (twice) to form the HGP code. If the classical input code is a $(\Delta_B,\Delta_C)$ LDPC code, then the associated HGP code is $(\max(\Delta_B,\Delta_C), \Delta_B+\Delta_C)$ LDPC: each physical qubit participates in at most $\max(\Delta_B,\Delta_C)$ $X$-checks and likewise $Z$-checks, and each check acts on at most $\Delta_B+\Delta_C$ physical qubits. If the input classical code is an expander code, then the resulting HGP code is called a \emph{quantum expander code} \cite{Leverrier_2015}. We define the $(\Delta_B,\Delta_C)$-HGP ensemble as the collection of CSS codes produced from the hypergraph product of the $(\Delta_B,\Delta_C)$-LDPC ensemble.

\begin{cor}[Quantum code distance]\label{cor:HGP distance}
    Suppose $\mathcal{C}$ is a $(\Delta_B,\Delta_C)$-LDPC code whose underlying Tanner graph $\mathcal{G}$ is a $(\gamma,\delta)$ expander with $\gamma=\Omega(1)$ and $\delta<1/2$. Then the quantum code $\mathcal{Q}_\mathcal{C}$, defined as the hypergraph product of $\mathcal{C}$ with itself, has a minimum distance
    \begin{align}\label{eq:HGP expander distance}
        D \geq 2\gamma (1-\delta) \times \min(n,m) = \Theta\left(\sqrt{N}\right) \, .
    \end{align}
\end{cor}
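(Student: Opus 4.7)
The plan is to derive the bound directly from the HGP distance formula $D = \min(d_1, d_2, d_1^\transpose, d_2^\transpose)$ given in \eqref{eq:HGP distance}, combined with the linear distance result of Theorem \ref{thm:linear distance}. Since the HGP code is formed from $\mathcal{C}$ with itself, we have $d_1 = d_2 = d$ (the distance of $\mathcal{C}$) and $d_1^\transpose = d_2^\transpose = d^\transpose$ (the distance of the transpose code $\mathcal{C}^\transpose$ whose parity-check matrix is $H^\transpose$).

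The first step is to bound $d$. The hypothesis states that the Tanner graph $\mathcal{G}$ is a $(\gamma,\delta)$ expander with $\delta<1/2$; in particular it is left-expanding, so Theorem \ref{thm:linear distance} immediately gives $d \geq 2\gamma(1-\delta)n$. The second step is to bound $d^\transpose$. Here the key observation is that the Tanner graph of $\mathcal{C}^\transpose$ is obtained from $\mathcal{G}$ by swapping the roles of the bit-nodes and check-nodes, so its left-expansion parameters are exactly the right-expansion parameters of $\mathcal{G}$. Because we assumed $\mathcal{G}$ is a two-sided $(\gamma,\delta)$ expander, the transposed Tanner graph is left-expanding with the same $(\gamma,\delta)$ (up to rescaling the size by $m$ instead of $n$ in the bound $|B'|\leq\gamma n$, which becomes $|C'|\leq \gamma m$). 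Applying Theorem \ref{thm:linear distance} to $\mathcal{C}^\transpose$ therefore yields $d^\transpose \geq 2\gamma(1-\delta)m$.

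Combining these two bounds with \eqref{eq:HGP distance} gives $D \geq 2\gamma(1-\delta)\min(n,m)$, which is the claimed inequality. To obtain the $\Theta(\sqrt{N})$ scaling, I would use the regularity of $\mathcal{G}$: the edge-counting identity $n\Delta_B = m\Delta_C$ from the proof of Theorem \ref{thm:constant rate} forces $m = (\Delta_B/\Delta_C)\,n = \Theta(n)$, and hence $N = n^2 + m^2 = \Theta(n^2)$, so $\min(n,m) = \Theta(n) = \Theta(\sqrt{N})$.

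The only subtle point, and in my view the main obstacle worth flagging, is the application of Theorem \ref{thm:linear distance} to $\mathcal{C}^\transpose$: Theorem \ref{thm:linear distance} is stated for left-expanding Tanner graphs with a specific threshold $\gamma n$ on subset sizes, while in $\mathcal{C}^\transpose$ the ``bits'' are the original checks and hence there are $m$ of them. One must therefore verify that the right-expansion hypothesis on $\mathcal{G}$ -- namely that every $C'\subset C$ with $|C'|\leq \gamma m$ satisfies the analog of \eqref{eq:left expansion} -- really does give $d^\transpose \geq 2\gamma(1-\delta)m$. This is immediate once one transcribes the proof of Theorem \ref{thm:linear distance} verbatim with $n$ replaced by $m$, but the statement of the corollary implicitly uses this rescaled bound, and this is why the final answer involves $\min(n,m)$ rather than simply $n$.
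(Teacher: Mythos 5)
Your argument is correct and matches the paper's own proof essentially verbatim: both bound $d$ via left-expansion and $d^\transpose$ via right-expansion (swapping bits and checks), then combine with the HGP distance formula $D = \min(d_1,d_2,d_1^\transpose,d_2^\transpose)$. The subtlety you flag about applying Theorem \ref{thm:linear distance} to the transpose code with $n$ replaced by $m$ is exactly what the paper means by "swapping the roles of bits and checks," and your explicit derivation of the $\Theta(\sqrt{N})$ scaling from $n\Delta_B = m\Delta_C$ is a welcome spelling-out of a step the paper leaves implicit.
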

\begin{proof}
    Since $\mathcal{G}$ is left-expanding with $\delta<1/2$, Theorem \ref{thm:linear distance} tells us that the minimum distance of $\mathcal{C}$ is $d \geq 2\gamma(1-\delta)n$. At the same time, $\mathcal{G}$ is also right-expanding with $\delta<1/2$, so we have $d^\transpose \geq 2\gamma(1-\delta)m$ by swapping the roles of bits and checks. \eqref{eq:HGP expander distance} then follows from \eqref{eq:HGP distance}.
\end{proof}

When analyzing operator weights (for e.g. confinement) in CSS codes, we need to be a bit careful because of quantum degeneracy: operators differing by stabilizer elements act equivalently on the codespace. In other words, for any operator $O$, we can construct its corresponding orbit $O + \mathcal{C}^{\perp}_X + \mathcal{C}^{\perp}_Z$ under the stabilizer group. As a consequence, we can arbitrarily grow the weight of an error by attaching stabilizers without affecting its syndrome. In order to discount such possibilities in our counting, we define equivalence classes of errors and focus on minimal-weight elements for each class:

\begin{defn}[Reduced error]
    For a given $X$ error $\mathbf{e}_X \in \mathbb{F}^{N}_2$, define its \emph{reduced weight} as
    \begin{align}
        \abs{\mathbf{e}_X}_{\rm red} = \min_{\mathbf{e}'\in (\mathbf{e}_X+\mathcal{C}^\perp_X)} \abs{\mathbf{e}'} \, ,
    \end{align}
    where $\mathbf{e}_X+\mathcal{C}^\perp_X$ is an equivalence class of errors for $\mathbf{e}_X$. The analogy for $Z$ errors follows by switching $X\leftrightarrow Z$. An equivalent error that achieves the reduced weight is called a \emph{reduced error}.
\end{defn}

Importantly, in order for an error to effect an undesired logical operation, its reduced weight must grow to the code distance. For ease of notation, we henceforth drop the Pauli subscript on errors $\mathbf{e}$ and implicitly assume $X$- or $Z$-type. The notion of confinement has also been extended to HGP codes, which we recite below.

\begin{cor}[HGP confinement; Corollary 9 of \cite{Leverrier_2015}]
\label{cor:HGP confinement}
    Suppose $\mathcal{C}$ is a $(\Delta_B,\Delta_C)$-LDPC code whose underlying Tanner graph $\mathcal{G}$ is a $(\gamma,\delta)$ expander with $\gamma=\Omega(1)$ and $\delta<1/6$. Let $\mathcal{Q}_\mathcal{C}$ be the quantum code defined as the hypergraph product of $\mathcal{C}$ with itself. Then any error $\mathbf{e}$ with reduced weight $\abs{\mathbf{e}}_{\rm red} < \min(\gamma n,\gamma m)$ has a corresponding syndrome $\mathbf{s}(\mathbf{e})$ with weight at least
    \begin{align}
        \abs{\mathbf{s}(\mathbf{e})} \geq \frac{1}{3} \abs{\mathbf{e}}_{\rm red} \, .
    \end{align}
\end{cor}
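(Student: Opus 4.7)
The plan is to follow the strategy of Leverrier--Tillich--Zem\'or and exploit the expansion of the underlying classical Tanner graph $\mathcal{G}$ in two orthogonal directions. Recall that the qubits of $\mathcal{Q}_\mathcal{C}$ sit on $B\times B$ and $C\times C$, while (say) $X$-type syndromes live on $C\times B$. An error thus decomposes as $\mathbf{e}=(\mathbf{e}_{BB},\mathbf{e}_{CC})$, and the syndrome at position $(c,b)$ is the $\mathbb{F}_2$-sum of $\sum_{b'\in B}H_{cb'}(\mathbf{e}_{BB})_{b'b}$ with $\sum_{c'\in C}H_{c'b}(\mathbf{e}_{CC})_{cc'}$. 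The stabilizers acting on $X$-errors are supported on ``rows'' and ``columns'' of this grid and look like the classical parity checks of $\mathcal{C}$ applied along one index.

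First I would exploit the reduced-error hypothesis. Because one is free to add stabilizers, the support of a reduced error cannot contain a row or column of $\mathbf{e}_{BB}$ (or $\mathbf{e}_{CC}$) that fills more than ``half'' of any local classical check: otherwise flipping by the corresponding stabilizer strictly lowers the weight. This gives a uniform sparsity statement -- each row $R_c\subset B$ and column $K_b\subset B$ of $\mathrm{supp}(\mathbf{e}_{BB})$ (and analogously for $\mathbf{e}_{CC}$) has size well below the expansion threshold $\gamma n$, so that Lemma~\ref{lem:unique neighbors} is applicable to each row and column individually.

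Next I would apply unique-neighbor expansion row-by-row and column-by-column. For every row $c$, Lemma~\ref{lem:unique neighbors} produces at least $\Delta_B(1-2\delta)|R_c|$ unique neighbors in the row-direction of the syndrome grid; summing gives a lower bound on the total contribution of $\mathbf{e}_{BB}$ to $|\mathbf{s}(\mathbf{e})|$ of order $\Delta_B(1-2\delta)|\mathbf{e}_{BB}|$. The same argument applied along columns handles $\mathbf{e}_{CC}$. The delicate step is the accounting: a given syndrome position $(c,b)$ receives contributions from both the $BB$ and $CC$ parts, which may cancel over $\mathbb{F}_2$. One therefore has to restrict attention to syndrome bits that are ``unique'' for the $BB$ contribution \emph{and} not cancelled by $CC$ (and vice versa). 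Bounding the cancellations via a second application of expansion -- essentially charging each potential cancellation to another non-unique neighbor -- is precisely where the hypothesis $\delta<1/6$ becomes necessary: after losing a factor $(1-2\delta)$ from the first expander bound and another factor of order $(1-4\delta)$ from discarding potential cancellations, the surviving fraction is strictly positive exactly when $\delta<1/6$, and combining the two bounds with the normalization $\Delta_B\ge 1$ yields the clean constant $|\mathbf{s}(\mathbf{e})|\ge \tfrac{1}{3}|\mathbf{e}|_{\rm red}$.

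The main obstacle I foresee is the combinatorial accounting in the third step -- carefully isolating which rows and columns are small enough to plug into Lemma~\ref{lem:unique neighbors}, and controlling the cancellation between $\mathbf{e}_{BB}$ and $\mathbf{e}_{CC}$ at shared syndrome bits without double counting. Everything else (the reduction step, the row/column partition, and the final linear combination) is bookkeeping, but this cancellation analysis is what pins down the $1/6$ threshold and the $1/3$ constant, and it is the substantive content inherited from \cite{Leverrier_2015}.
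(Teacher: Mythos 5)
The paper does not prove this corollary: it is invoked directly as Corollary~9 of \cite{Leverrier_2015}, so there is no internal proof for you to reproduce. Your task was therefore really to reconstruct the Leverrier--Tillich--Z\'emor argument, and while your high-level plan points in the right direction (decomposing $\mathbf{e}$ into its $B\times B$ and $C\times C$ components, exploiting expansion of $\mathcal{G}$ in both the row and column directions, and somehow controlling the $\mathbb{F}_2$-cancellations between the two contributions at shared syndrome bits), the sketch falls short of a proof and the numerics you quote do not hold up under scrutiny.

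Two concrete problems. First, the statement's constant $1/3$ is independent of $\Delta_B$, whereas a genuine row-by-row/column-by-column application of Lemma~\ref{lem:unique neighbors} would yield a bound scaling like $\Delta_B(1-2\delta)\abs{\mathbf{e}}_{\rm red}$, modulo cancellation losses. The handwave ``combining the two bounds with the normalization $\Delta_B\ge 1$'' does not turn a $\Delta_B$-dependent estimate into a clean $1/3$, and it signals that the LTZ mechanism is actually not a straightforward per-row unique-neighbor count. Second, the arithmetic around the $\delta<1/6$ threshold is incorrect: you say the surviving fraction after a $(1-2\delta)$ loss and a further loss ``of order $(1-4\delta)$'' is strictly positive exactly when $\delta<1/6$, but $(1-2\delta)(1-4\delta)>0$ already for all $\delta<1/4$; if instead you intend an additive loss, $(1-2\delta)-4\delta = 1-6\delta$ gives the sign condition but not the value $1/3$. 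Either way you have not actually located where the numbers $1/6$ and $1/3$ come from. You also invoke a sparsity claim for the rows and columns of a reduced error as if it follows from local minimality, but once $\abs{\mathbf{e}}_{\rm red}<\gamma n$ every row and column is automatically of size $<\gamma n$; the real content of the reduced hypothesis is different and more subtle (it constrains how much of each generator's joint support in $B\times B$ and $C\times C$ can be occupied, which is what LTZ use). In short, the cancellation analysis you identify as ``the main obstacle'' is precisely the substance of the cited proof, and it is left undone here; the rest of the sketch, as written, would not assemble into a correct argument for the stated constants.
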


We emphasize that the condition on the expansion ($\delta<1/6$) of the parent classical code, where HGP confinement has been proven, is stricter than that for classical LDPC codes ($\delta<1/2$; Corollary \ref{cor:confinement}).

\subsection{Thermodynamics} 
We first obtain the following quantum generalization of Corollary \ref{cor:no phase}: like in the classical setting, the hypergraph product of a randomly chosen classical LDPC code with itself gives a quantum code with trivial thermodynamics:

\begin{thm} \label{thm:no thermal phase quantum}
    Let $\mathcal{G}$ be chosen uniformly at random from the ($\Delta_B,\Delta_C)$-LDPC ensemble of expander graphs, with $\Delta_C > \Delta_B \geq 2$ and $\Delta_B$ odd, and let $\mathcal{C} \equiv \mathcal{C}_\mathcal{G}$ be the associated classical LDPC code.  Let $\mathcal{Q}_\mathcal{C}$ be the CSS code formed out of the hypergraph product of $\mathcal{C}$ with itself.  Then with probability (\ref{eq:random nonredundancy}) approaching 1 as $n\rightarrow \infty$, the thermodynamic partition function \begin{equation}
        Z(\beta) = \mathrm{tr}\left(\mathrm{e}^{-\beta H}\right) = 2^{k^2} (2\cosh \beta)^{2n(n-k)}. \label{eq:quantumZ}
    \end{equation}
\end{thm}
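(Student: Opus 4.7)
The strategy is the natural quantum analog of the classical partition-function calculation. Since every $X$- and $Z$-stabilizer mutually commutes, the Hilbert space decomposes into joint eigenspaces labeled by syndromes $(\mathbf{s}_X,\mathbf{s}_Z) \in \{\pm 1\}^{M_X}\times\{\pm 1\}^{M_Z}$, on each of which the Hamiltonian evaluates to $-\sum_s (\mathbf{s}_X)_s - \sum_s (\mathbf{s}_Z)_s$. If both stabilizer groups are genuinely nonredundant — i.e., both $H_X$ and $H_Z$ have full row rank — then every syndrome combination is realized with the same multiplicity $2^K$, and the trace factorizes as
\begin{equation}
Z(\beta) = 2^K\Bigl(\sum_{\sigma=\pm 1}\mathrm{e}^{\beta\sigma}\Bigr)^{M_X+M_Z} = 2^K(2\cosh\beta)^{M_X+M_Z}.
\end{equation}
Combined with $M_X = M_Z = mn$ and $K=k^2$ (verified below), this is precisely \eqref{eq:quantumZ}.

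The entire probabilistic content therefore rests on showing that $H_X$ and $H_Z$ are almost surely of full row rank $mn$. I would do this in two steps. First, invoke Lemma \ref{lem:rate vs design rate}: for $\Delta_B\geq 3$ odd, the random classical parity-check matrix $H$ is full row rank ($r=0$, so $k=n-m$) with probability $1-o_n(1)$. Second, propagate nonredundancy to the hypergraph product. Write a putative left null vector of $H_X = (H\otimes I_n \mid I_m\otimes H^\transpose)$ as a matrix $A\in\mathbb{F}_2^{m\times n}$ indexed by the row labels $(i,j)$. Vanishing against the first block reduces entrywise to $\sum_i H_{ii'} A_{ij'} = 0$ for all $i',j'\in[n]$, i.e.\ every column of $A$ lies in the left null space of $H$; classical full row rank immediately forces $A=0$. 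An analogous argument applied to the first block of $H_Z = (I_n \otimes H \mid H^\transpose \otimes I_m)$ disposes of the $Z$-sector. Hence $r_X=r_Z=0$ almost surely.

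With nonredundancy in hand, the bookkeeping is mechanical. The logical dimension is $K = N - M_X - M_Z = (n^2+m^2) - 2mn = (n-m)^2 = k^2$, and $M_X+M_Z = 2mn = 2n(n-k)$; substituting into the factorized partition function yields \eqref{eq:quantumZ}. Smoothness of $F(\beta) = -N^{-1}\beta^{-1}\log Z(\beta)$ on $0<\beta<\infty$ is then immediate, establishing the absence of a thermodynamic phase transition.

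The only conceptual subtlety is the Kronecker-product step. A priori a left null vector of $H_X$ could exploit some interplay between the two blocks; what makes the argument clean is that the two blocks act on disjoint sets of qubits (the $n^2$ ``bit-bit'' qubits versus the $m^2$ ``check-check'' qubits), so the vanishing conditions decouple and reduce independently to the classical full-rank statement. The odd-$\Delta_B$ hypothesis is essential: for even $\Delta_B$ the classical code carries an automatic redundancy (the sum of all checks vanishes mod $2$), which would propagate to $r_X,r_Z>0$ and shift the exponent in \eqref{eq:quantumZ}.
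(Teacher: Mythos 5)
Your proof is correct and follows the same conceptual route as the paper's: factorize $Z(\beta)$ over joint stabilizer eigenspaces, each of dimension $2^K$, and reduce the claim to nonredundancy of the HGP stabilizer generators, which is then inherited from the classical code via Lemma \ref{lem:rate vs design rate}. The one place you add value is in making explicit the propagation step that the paper leaves as an assertion: writing a putative left null vector of $H_X$ as a matrix $A\in\mathbb{F}_2^{m\times n}$ and observing that the $H\otimes I_n$ block alone forces every column of $A$ into the (trivial) left null space of $H$ is exactly the right calculation, and it does decouple cleanly precisely because the two blocks of $H_X$ act on disjoint qubit sets; your remark on the role of the odd-$\Delta_B$ hypothesis is likewise accurate.
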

\begin{proof}
    Lemma \ref{lem:rate vs design rate} tells us that with high probability (\ref{eq:random nonredundancy}), the classical code will be non-redundant.  Our goal is to show that this non-redundancy neatly carries to the quantum CSS code, leading to trivial thermodynamics.

    To prove this fact, it is sufficient to show that up to the action of each logical operator, we may uniquely specify a state in Hilbert space by the simultaneous eigenvalue of every stabilizer check (chosen independently). This fact in turn follows the non-redundancy of the code.  After all, the only linearly independent (in the $\mathbb{F}_2$-sense) Pauli strings are those corresponding to logical operations, which necessarily commute with $\mathcal{H}$ and imply degeneracy of the spectrum.  Thus, we can easily perform the trace in $Z(\beta)$ in this stabilizer eigenbasis.  Since the eigenvalues of all check operators may be chosen independently due to the non-redundancy of the code, (\ref{eq:quantumZ}) follows from the fact that a non-redundant HGP has $k^2$ logical bits, and that there are $n(n-k)$ $X$-type checks and $Z$-type checks.
\end{proof}

\subsection{Lower bound on the quantum mixing time}
\label{app:quantum thermal decoding}

We now prove the analogue of Theorem \ref{thm:exp mixing time classical} for HGP product codes whose parent LDPC codes exhibit sufficient underlying expansion. Unlike the scenario for classical LDPC codes, we need to be more careful in our analysis of HGP codes since the energy barrier is only $\mathrm{O}(\sqrt{N})$. Random states drawn from the Gibbs ensemble at any nonzero temperature will have $\mathrm{O}(N)$ errors on average, but the errors will typically form disconnected clusters and rarely conspire in an adversarial manner. Our strategy will be to employ a Peierls-like argument to show that adversarial clusters of errors, below a cutoff given by the expansion, are thermodynamically unfavorable at sufficiently low temperatures. The probability of encountering one of these adversarial clusters then provides a lower bound on the mixing time.

We first need to precisely define what we mean by an error cluster. We say that two qubits are connected if and only if they share at least one check. This notion of connectivity can be captured by the adjacency graph of a code.

\begin{defn}[Adjacency graph] \label{def:adjacency graph}
    The adjacency graph $\mathcal{G}_A$ of a length-$N$ code $\mathcal{C}$ is defined as a simple graph of $N$ vertices where an edge connects two vertices if and only if the corresponding bits in $\mathcal{C}$ share a parity check.
\end{defn}

The adjacency graph can be obtained by treating the parity-check matrix as a (hyper)edge-vertex incidence matrix. This graph has been used in the past to prove locality bounds in both classical and quantum codes \cite{Bravyi_2009, Bravyi_2010, Baspin_2022}. Importantly, irreducible logical operators, in the sense that they cannot be decomposed into products of smaller ones, must have support on connected subgraphs of $\mathcal{G}_A$. We now define the notion of a $(s,\ell)$-cluster, introduced in \cite{Kovalev_2013}, which will characterize the adversarial error clusters that could fool an eventual maximum-likelihood decoder.

\begin{defn}[$(s,\ell)$-cluster] \label{def:(s,l) cluster}
    A $(s,\ell)$-cluster is a connected subgraph of $\mathcal{G}_A$ containing $s$ vertices, on which there are $\ell$ errors.
\end{defn}

We now introduce a definition of \textbf{non-decodable state}, which is a state that \emph{could possibly} trick a maximum-likelihood decoder in the worst case.   In particular, the following definition will ensure that if we start with a codeword, any logical error necessarily will have passed through a non-decodable state.

\begin{defn}[Non-decodable state] \label{def:non-decodable state}
    A non-decodable state is a state with at least one $(s,\ell)$-cluster with size $s\geq D$ and $\ell \geq s/2$ errors. 
\end{defn}

We are now ready to bound the mixing time of the Gibbs sampler. The idea will be to use the linear confinement property (Corollary \ref{cor:HGP confinement}) to show that the rate of producing large non-decodable clusters is exponentially suppressed in the cluster size below a critical temperature.

\begin{thm}[Slow quantum Gibbs sampling]
\label{thm:exp mixing time quantum}
    Let $\mathcal{C}$ be a $(\Delta_B,\Delta_C)$-LDPC code of length $n$ whose Tanner graph is a $(\gamma=\Omega(1),\delta<1/6)$ expander, and for simplicity assume $\Delta_B<\Delta_C$ and $n>m = \Theta(n)$. Let $\mathcal{Q}_\mathcal{C}$ be a quantum CSS code of length $N=\Theta(n^2)$ defined as the hypergraph product of $\mathcal{C}$ with itself with associated parity-check matrices $H_X$ and $H_Z$.  Let 
    \begin{equation}
        z \equiv \mathrm{e} \Delta_C \left(\Delta_B+\Delta_C-1 \right) \, ,
    \end{equation}
    where $\mathrm{e}$ is the usual base of the natural logarithm. Then as $N\rightarrow \infty$, there exists a critical inverse temperature 
    \begin{align}
        \beta_{\rm c} \leq 3\log(2z) 
    \end{align}
    such that for all $\beta > \beta_{\mathrm{c}}$, the mixing time of Gibbs sampling is bounded by
    \begin{align}\label{eq:t_mix quantum}
        t_{\rm mix} \geq \mathrm{e}^{\alpha \sqrt{N}}
    \end{align}
    for some constant $\alpha(\beta)>0$ independent of $N$.
\end{thm}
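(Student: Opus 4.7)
The plan is to adapt the classical Peierls-style argument of \cref{thm:exp mixing time classical} to the quantum setting, accounting for the fact that HGP codes have $D = \Theta(\sqrt{N})$ rather than $\Theta(N)$. Because typical low-temperature Gibbs states contain an extensive number of errors arranged in many disconnected, well-confined small clusters, we cannot bound states by total Hamming weight alone. Instead, I take the bottleneck set $E$ to consist of the non-decodable states of \cref{def:non-decodable state} -- those containing a $(s,\ell)$-cluster with $s \geq D$ and $\ell \geq s/2$. Because any irreducible logical operator must be supported on a connected subgraph of $\mathcal{G}_A$ of size at least $D$, any trajectory producing a logical error starting from a codeword must traverse $E$, so the Markov-chain bottleneck theorem gives $t_{\mathrm{mix}} \gtrsim \pi[\mathbf{0}]/\pi[E]$.

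Exploiting the CSS structure, I would decouple the $X$- and $Z$-error dynamics so that $\pi[E]/\pi[\mathbf{0}]$ reduces to a classical thermal problem with energy $2|H_Z\mathbf{e}|$ (and analogously for $H_X$). Then I would bound this ratio via a Peierls-style union bound over cluster locations: summing over connected subgraphs $C$ of the adjacency graph of size $s$, over weights $\ell \geq s/2$, and over placements of these errors on $C$. For each fixed pattern, \cref{cor:HGP confinement} (applicable since $\ell < \gamma n$ in the relevant range) ensures the syndrome restricted to the cluster has weight at least $\ell/3$, contributing energy at least $2\ell/3$. Combined with the standard combinatorial bound that the number of connected subgraphs of size $s$ in a graph of maximum degree $z$ is at most $N(ez)^{s-1}$, and with $\binom{s}{\ell} \leq 2^s$, I obtain
\begin{equation}
\frac{\pi[E]}{\pi[\mathbf{0}]} \lesssim \sum_{s \geq D} N(ez)^{s-1} \, 2^s \, \mathrm{e}^{-\beta s/3} \lesssim N \sum_{s \geq D} \bigl(2ez\, \mathrm{e}^{-\beta/3}\bigr)^s,
\end{equation}
which decays as $\mathrm{e}^{-\Omega(\sqrt{N})}$ whenever $\beta > 3\log(2ez)$, reproducing the claimed threshold $\beta_{\mathrm{c}} \leq 3\log(2z)$ up to an $O(1)$ correction absorbed into $\alpha$.

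The main obstacle is making the union bound rigorous in the presence of the extensive background of small errors: interactions between the designated bad cluster and nearby small errors could in principle produce syndrome cancellations on $\partial C$ that spoil the confinement bound. I would handle this by applying confinement to the \emph{reduced} error restricted to $C$ and noting that any cancellation is localized to a bounded $z$-neighborhood of $\partial C$, whose contribution can be absorbed into the combinatorial factor $(ez)^s$ at the cost of a harmless multiplicative constant. A secondary technical step is verifying that the single-qubit Metropolis dynamics satisfies the hypotheses of the bottleneck inequality, which is routine given detailed balance and the fact that single-bit-flip transitions connect the state space. With these book-keeping details handled, the argument closely parallels Peierls's proof of low-temperature order in the 2D Ising model, with $\sqrt{N}$ in place of the linear system size playing the role of the minimum droplet perimeter.
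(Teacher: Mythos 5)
Your identification of the bottleneck set---states containing an $(s,\ell)$-cluster on the adjacency graph with $s\gtrsim D$ and $\ell\gtrsim s/2$---plus the combination of HGP confinement (\cref{cor:HGP confinement}), the $N(ez)^{s-1}$ count of connected subgraphs, and the $2^s$ count of error placements, are all in line with the paper's argument. However, there is a genuine gap in your choice of comparison set, and your own ``main obstacle'' paragraph misdiagnoses it. You bound $\pi[E]/\pi[\mathbf{0}]$, but this ratio is \emph{not} exponentially small: a state $\mathbf{x}\in E$ carries, in addition to the designated bad cluster, an arbitrary (and typically extensive) background of small disconnected errors elsewhere in the system, and the sum $\sum_{\mathbf{x}\in E}\mathrm{e}^{-\beta[\mathcal{H}(\mathbf{x})-\mathcal{H}(\mathbf{0})]}$ therefore includes an entropic factor of order the full restricted partition function $\sim \mathrm{e}^{\Theta(N)}$, which swamps the $\mathrm{e}^{-\beta s/3}$ suppression from the cluster alone. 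Your union-bound formula simply does not sum over these background degrees of freedom, so it computes something other than $\pi[E]/\pi[\mathbf{0}]$. The subtlety you flagged---syndrome cancellations near $\partial C$---is real but secondary; the leading-order failure is entropic.

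The paper's repair is a pairing construction: for each $\mathbf{x}\in E$, define a partner $\mathbf{y}(\mathbf{x})$ by deleting the bad cluster while retaining all background errors (this is cleanly possible because $E$ is defined so that the cluster is maximal, hence has error-free perimeter), and compare $\pi[E]$ to $\pi[A]$ with $A=\{\mathbf{y}(\mathbf{x}):\mathbf{x}\in E\}$. The background entropy then cancels exactly in the ratio, leaving only the relative degeneracy $g(E,A)\le N(2z)^{\gamma m}$ and the confinement-induced energy cost of the cluster, from which the bound follows via detailed balance and a hitting-time inversion. Relatedly, the inequality ``$t_{\mathrm{mix}}\gtrsim\pi[\mathbf{0}]/\pi[E]$'' is not the bottleneck theorem as usually stated (and would be exponentially \emph{small}, not large, since $\pi[\mathbf{0}]$ is itself exponentially small at any $\beta<\infty$); the correct route is to bound the transition probability $\mathbf{P}_t[A\to E]\le\pi[E]/\pi[A]$ via detailed balance, as the paper does. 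A further minor point: the paper fixes $s=\gamma m$ rather than summing $s\ge D$, which keeps all clusters safely inside the confinement regime $\ell<\gamma m$ of \cref{cor:HGP confinement}; your unrestricted sum over $s$ would exit that regime around $s\approx 2\gamma m$ and needs to be capped.
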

\begin{proof}
    Since $\mathcal{Q}_C$ is a CSS code, we focus on $X$ errors since the analysis for $Z$ errors follows analogously.  Without loss of generality, we bound the time it takes to degrade information stored near codeword $\mathbf{0}$.
    
    Let $\mathcal{G}_A$ be the adjacency graph associated with $H_Z$ according to Def. \ref{def:adjacency graph}. Recall that $H_Z$ is $(\Delta_C,\Delta_B+\Delta_C)$-LDPC, and so $\mathcal{G}_A$ has maximum degree $\Delta_C(\Delta_B+\Delta_C-1)$. Since $\delta<1/2$, Corollary \ref{cor:HGP distance} tells us that the quantum code distance obeys $D\geq 2\gamma(1-\delta)m$. Furthermore, since $\delta<1/6$, Corollary \ref{cor:HGP confinement} tells us that all errors with reduced weight $\abs{\mathbf{e}}_{\rm red} \leq \gamma m$ have syndrome weight $\abs{\mathbf{s}(\mathbf{e})} \geq \frac{1}{3}\abs{\mathbf{e}}_{\rm red}$. We emphasize that disconnected error clusters on $\mathcal{G}_A$ have independent confinement.
    
    Let's determine what we should choose for our bottleneck set $E$. From Def. \ref{def:non-decodable state}, we know that a non-decodable state must contain at least one $(s\geq D,\ell\geq s/2)$-cluster. We accordingly choose our bottleneck set $E$ be the cut set of states that contain exactly one $(s=\gamma m, \ell\geq fs)$-cluster and no larger, with $f \le \frac{1}{2}(1-2/s)$. Note that any local trajectory through state space to a non-decodable state must pass through some $\mathbf{x}_t \in E$ at some time $t$. The ``no larger" condition also automatically implies that our cluster has an error-free perimeter, since a $(s,\ell)$-cluster with an error on its perimeter can always be enlarged to a $(s+1,\ell+1)$-cluster.

    \begin{figure*}[t]
    \centering
    \includegraphics[width=0.85\textwidth]{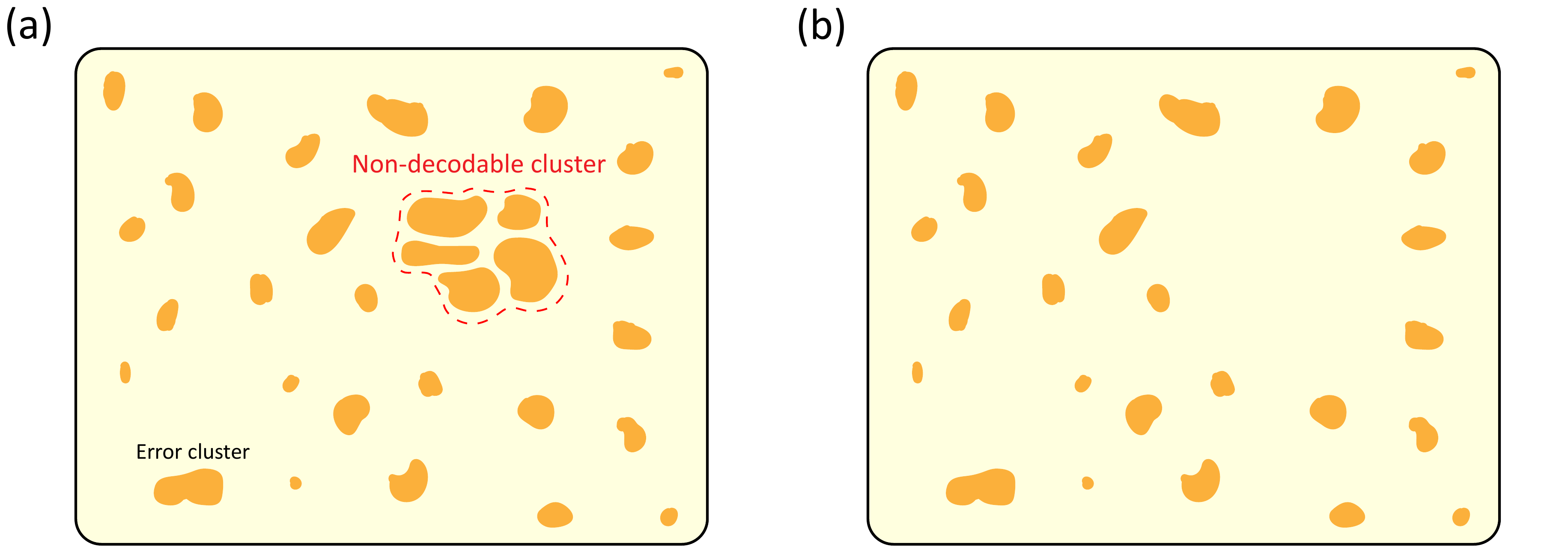}
    \caption{(a) An example of a bottleneck state $\mathbf{x} \in E$ is depicted with the non-decodable cluster highlighted. (b) The partner state $\mathbf{y}(\mathbf{x}) \in A$ with this cluster removed. The Gibbs sampler will take a very long time to reach states containing non-decodable clusters at low temperature.}
    \label{fig:bad cluster}
    \end{figure*}

    Now that we know that the Gibbs sampler must traverse $E$, we can proceed to bound the hitting time to $E$. Starting in a suitably chosen ensemble of initial decodable states close to codeword $\mathbf{0}$, which are drawn from a set (introduced below) $A$. The idea will be to show that the transition probability from $A$ to $E$ is exponentially small, which ensures that our Gibbs sampler is unlikely to stray far (in terms of decodability) from the initial codeword state.
    
    We now define our decodable set $A$.  Suppose we have a state $\mathbf{x} \in E$. There will, by construction, exist a decodable partner state $\mathbf{y}(\mathbf{x})$ consisting of $\mathbf{x}$ but with the $(s,\ell)$-cluster removed, see Fig. \ref{fig:bad cluster}.  As such, we deduce that $\mathbf{y}(\mathbf{x})$ is a decodable state close to codeword $\mathbf{0}$.  We define the set \begin{equation}
        A \equiv \lbrace \mathbf{y}^\prime \in \mathbb{F}_2^N \; : \; \exists \; \mathbf{x}\in E\text{ such that }\mathbf{y}^\prime = \mathbf{y}(\mathbf{x})\rbrace.
    \end{equation}
    Notice that all states in $A$ are, by definition, decodable since $\mathbf{x}\in E$ (by definition of $E$) had \emph{exactly one} dangerous cluster.   Let us now upper bound the relative degeneracy $g(E,A) \equiv |E|/|A|$ by estimating the number of possible clusters. Let $z = \mathrm{e}\Delta_C(\Delta_B+\Delta_C-1)$. For any of the $N$ sites, it can participate in no more than $z^{\gamma m}$ connected subgraphs of size $\gamma m$ \cite{Aliferis_2008}. There can also be no more than $2^{\gamma m}$ configurations of errors in these subgraphs: each site could have an error or not. As a result, we have $g(E,A) \leq N\times (2z)^{\gamma m}$. 

    By our definition of $\mathcal{G}_A$, any term in $\mathcal{H}(\mathbf{x})$ only couples $x_i$ and $x_j$ if $i\sim j$ are connected by an edge in $\mathcal{G}_A$.  Therefore, by construction, we deduce that
    \begin{equation}
    \mathcal{H}(\mathbf{x}) = \mathcal{H}(\mathbf{y}(\mathbf{x})) + \mathcal{H}_{\text{cluster}} \, ,    \label{eq:clusterdiff}
    \end{equation}
    where $\mathcal{H}_{\mathrm{cluster}}$ can be bounded using linear confinement:
    \begin{align}\label{eq:clusterbound}
        \mathcal{H}_{\text{cluster}} \geq \frac{2}{3}f\gamma m \, .
    \end{align}   
    We can accordingly bound the equilibrium probability ratio of $E$ and $A$ as
    \begin{align} \label{eq:P(E)/P(A)}
    \frac{\pi[E]}{\pi[A]} &\leq \frac{1}{\pi[A]}\sum_{\mathbf{x}\in E} \pi[\mathbf{x}] \le \frac{1}{\pi[A]}\sum_{\mathbf{x}\in E} \pi[\mathbf{y}(\mathbf{x})]\, \mathrm{e}^{-\frac{2}{3}\beta f\gamma m} \notag \\
    &\leq g(E,A) \times \mathrm{e}^{-\frac{2}{3}\beta f\gamma m}  \notag \\
        &\leq N\times \exp\left[ -\gamma m\left( \frac{2}{3}\beta f - \log(2z) \right) \right] \, ,
    \end{align}
    where in the first line we used (\ref{eq:clusterdiff}) and (\ref{eq:clusterbound}); in the second line we noted that the sum over all $\mathbf{y}(\mathbf{x})$ can count each element of $A$ at most $g(E,A)$ times, and in the third line we used our bound on $g(E,A)$. 
    We can now bound the transition probability from $A$ to $E$ at any time $t$ in terms of a ratio of their equilibrium probabilities:
    \begin{align}
        \mathbf{P}_t[A\rightarrow E] &\equiv \sum_{\mathbf{y}\in A} \mathbf{P}_t[\mathbf{y}\rightarrow E] \frac{\pi[\mathbf{y}]}{\pi[A]} 
        = \sum_{\mathbf{x}\in E} \sum_{\mathbf{y}\in A} \mathbf{P}_t[\mathbf{y}\rightarrow\mathbf{x}]\, \frac{\pi[\mathbf{y}]}{\pi[A]}  \notag \\
        &= \frac{1}{\pi[A]} \sum_{\mathbf{x}\in E} \sum_{\mathbf{y}\in A} \mathbf{P}_t[\mathbf{x}\rightarrow\mathbf{y}]\, \pi[\mathbf{x}]  = \frac{1}{\pi[A]} \sum_{\mathbf{x}\in E} \mathbf{P}_t[\mathbf{x}\rightarrow A]\, \pi[\mathbf{x}]  \notag \\
        &\leq \frac{\pi[E]}{\pi[A]} \leq N\times \exp\left[ -\gamma m\left( \frac{2}{3}\beta f - \log(2z) \right) \right] \, ,
    \end{align}
    where in the second line we used detailed balance \eqref{eq:detailed balance} and summed over states in $A$; in the third line we bounded $\mathbf{P}_t[\mathbf{x}\rightarrow A]\le 1$ and used (\ref{eq:P(E)/P(A)}).
    Inverting the above transition probability gives us a lower bound on the expected hitting time:
    \begin{align}\label{eq:t_hit quantum}
        t_{\rm hit}[A\rightarrow E] \geq N^{-1} \times \exp\left[ \gamma m\left( \frac{2}{3}\beta f - \log(2z) \right) \right] \, .
    \end{align}
    Thus, the hitting time to a non-decodable state will be exponentially long in the polynomial system size $m=\Theta\big(\sqrt{N}\big)$ as long as
    \begin{align}
        \beta > \frac{3}{2f}\log(2z) > 3\left( 1+\frac{2}{\gamma m-2} \right) \log(2z) \, .
    \end{align}
    Crucially, none of the above steps required counting any (reduced) error clusters of size greater than $\gamma m$, and so we never left the linear confinement regime. By the Markov chain bottleneck theorem, we conclude that \eqref{eq:t_hit quantum} is a lower bound on the mixing time if $\beta>\beta_{\mathrm{c}}$ in the thermodynamic limit, wherein $m,n\rightarrow\infty$.
\end{proof}

The Peierls argument for the 2D Ising model demonstrates that, below a critical temperature, domain walls are locally unfavorable and hence are globally suppressed. In the same spirit, we have shown that local thermal fluctuations below a certain temperature under the HGP Hamiltonian are unlikely to produce adversarial error clusters which would cause eventual maximum-likelihood decoding to fail. However, unlike the full Peierls argument, which would examine adversarial clusters of all sizes, we only focus on clusters of sizes near the code distance. For local Gibbs sampling, this restriction gives rise to a dynamical bottleneck which prevents ergodicity. Since the dynamics of $X$ and $Z$ errors are equivalent, the thermal Gibbs sampler achieves a \emph{thermal/passive quantum memory} -- encoded quantum information survives for a time which is exponentially long in the (polynomial) system size.

Combining Theorem \ref{thm:random expansion}, Theorem \ref{thm:no thermal phase quantum} and Theorem \ref{thm:exp mixing time quantum}, we obtain the following Corollary.

\begin{cor}[Typical quantum LDPC memory] \label{cor:typical quantum memory}
    If $\Delta_C>\Delta_B\geq 7$, then the quantum CSS code produced from the hypergraph product of a classical code chosen uniformly from the $(\Delta_B,\Delta_C)$-LDPC ensemble at fixed $n$ has, with probability one (as $n\rightarrow \infty$): (\emph{1}) a \emph{dynamical phase transition} corresponding to a non-analyticity in $\log t_{\mathrm{mix}}(\beta)$, the mixing time of the Metropolis Gibbs sampler,  as a function of $\beta$; (\emph{2}) no thermodynamic phase transition (non-analyticity in $\log Z(\beta)$). 
\end{cor}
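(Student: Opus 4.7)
The plan is to combine Theorem \ref{thm:random expansion}, Theorem \ref{thm:no thermal phase quantum}, and Theorem \ref{thm:exp mixing time quantum}. First I would fix expansion parameters with $\delta$ strictly less than $1/6$ and $\gamma=\Omega(1)$ small enough that $0<\gamma<\gamma_{\max}$ in Theorem \ref{thm:random expansion}. The hypothesis $\Delta_B\geq 7$ is exactly what is needed to guarantee $\alpha\equiv\Delta_B\delta-1>0$ at $\delta<1/6$, so Theorem \ref{thm:random expansion} yields $\mathbf{P}\{\mathcal{G}\text{ is a }(\gamma,\delta)\text{ left-expander}\}=1-O(n^{-\alpha})$. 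Applying the theorem a second time with $B\leftrightarrow C$ gives the same estimate for right-expansion, so a union bound shows that the random Tanner graph is almost surely a two-sided $(\gamma,\delta)$ expander as $n\to\infty$.

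Next I would feed this classical code into the hypergraph product construction, obtaining a CSS code $\mathcal{Q}_\mathcal{C}$ that almost surely satisfies the hypotheses of Theorem \ref{thm:exp mixing time quantum}. That theorem then supplies a critical inverse temperature $\beta_{\mathrm{c}}\leq 3\log(2z)$ with $z=\Delta_C(\Delta_B+\Delta_C-1)$ such that for all $\beta>\beta_{\mathrm{c}}$, $\log t_{\mathrm{mix}}(\beta)\geq \alpha'\sqrt{N}$ for some $\alpha'(\beta)>0$. On the other hand, at $\beta=0$ the Metropolis update reduces to a uniform random walk on $\mathbb{F}_2^N$, which mixes in $O(N\log N)$ time by standard coupon-collector arguments \cite{levin_markovchains}. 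Consequently $N^{-1/2}\log t_{\mathrm{mix}}(\beta)$ must transition from $0$ at $\beta=0$ to a strictly positive value for $\beta>\beta_{\mathrm{c}}$, producing a non-analyticity in $\beta$ in the thermodynamic limit and hence the claimed dynamical phase transition.

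For thermodynamic triviality, I would invoke Lemma \ref{lem:rate vs design rate} on the parent classical code, which the HGP code inherits: each of $H_X$ and $H_Z$ has at most $\nu\leq 1$ redundant rows (with $\nu=0$ if $\Delta_B$ is odd and $\nu=1$ if $\Delta_B$ is even). The argument of Theorem \ref{thm:no thermal phase quantum} then generalizes directly. When $\nu=0$ one has the exact factorization $Z(\beta)=2^K(2\cosh\beta)^{2n(n-k)}$, and when $\nu=1$ the same reasoning (or the quantum analogue of Corollary \ref{cor:need O(n) redundancy}, which permits any $r=o(n)$) modifies this by at most a bounded Boltzmann prefactor whose logarithm vanishes per qubit as $N\to\infty$. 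Hence $F(\beta)=-N^{-1}\beta^{-1}\log Z(\beta)$ is almost surely analytic on $(0,\infty)$.

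The main obstacle is less a conceptual difficulty than careful bookkeeping: one must check that lossless left-expansion, right-expansion, near-non-redundancy, and the slow-mixing prerequisites all hold \emph{simultaneously} with high probability. Each is an asymptotic almost-sure event under the same random ensemble, so a union bound closes the argument. The threshold $\Delta_B\geq 7$ is essentially tight within this proof strategy: any weaker degree bound would either spoil the expansion condition $\delta<1/6$ required by Corollary \ref{cor:HGP confinement} (and thereby Theorem \ref{thm:exp mixing time quantum}), or trivialize the probability estimate \eqref{eq:randomexpansion}.
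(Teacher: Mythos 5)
Your proof takes essentially the same route as the paper, which arrives at this Corollary by citing exactly the combination of Theorem \ref{thm:random expansion}, Theorem \ref{thm:no thermal phase quantum}, and Theorem \ref{thm:exp mixing time quantum}; your reading of the degree threshold $\Delta_B\geq 7$ as precisely what keeps both $\delta<1/6$ (for Corollary \ref{cor:HGP confinement}) and $\alpha=\Delta_B\delta-1>0$ (for Theorem \ref{thm:random expansion}) simultaneously is also the correct accounting. You additionally patch a small hypothesis mismatch the paper leaves implicit: Theorem \ref{thm:no thermal phase quantum} is stated for $\Delta_B$ odd while the Corollary does not assume this, and your remark via Lemma \ref{lem:rate vs design rate} that an even $\Delta_B$ contributes only a single redundant check --- hence an $\mathrm{O}(1)$ prefactor in $Z(\beta)$ whose contribution to the free energy density vanishes as $N\to\infty$ --- cleanly closes that gap.
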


Lastly, we note that Theorem \ref{thm:exp mixing time quantum} can be straightforwardly adapted to the classical LDPC setting to obtain an alternative bound on the classical critical temperature, as formalized in the following Corollary.

\begin{cor}[Alternative classical mixing time bound to Theorem \ref{thm:exp mixing time classical}] \label{cor:classical mixing bound v2}
    Let $\mathcal{C}$ be a $(\Delta_B,\Delta_C)$-LDPC code of length $n$ whose Tanner graph is a $(\gamma=\Omega(1),\delta<1/2)$ left-expander. Let
    \begin{align}
        z \equiv \mathrm{e} \Delta_B \left(\Delta_C-1\right) \, ,
    \end{align}
    where $\mathrm{e}$ is the usual base of the natural logarithm. Then as $n\rightarrow\infty$, there exists a critical inverse temperature
    \begin{align}
        \beta_{\rm c} \leq \frac{\log(2z)}{\Delta_B(1-2\delta)}
    \end{align}
    such that for all $\beta>\beta_{\rm c}$, the mixing time of Gibbs sampling is bounded by
    \begin{align}
        t_{\rm mix} \geq \mathrm{e}^{\alpha n}
    \end{align}
    for some constant $\alpha(\beta)>0$ independent of $n$.
\end{cor}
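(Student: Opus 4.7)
The plan is to adapt the Peierls-style cluster argument from the proof of Theorem \ref{thm:exp mixing time quantum} to the classical setting, replacing the quantum adjacency graph with the classical one and the HGP confinement constant $1/3$ with the stronger factor $\Delta_B(1-2\delta)$ supplied by Corollary \ref{cor:confinement}. First I would build the adjacency graph $\mathcal{G}_A$ associated with $H$ (two bits joined by an edge iff they share a check) and observe that its maximum degree is precisely $\Delta_B(\Delta_C-1)=z$, since each bit participates in $\Delta_B$ checks and each check touches $\Delta_C-1$ other bits.

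Next, exactly as in the quantum proof, I would take the bottleneck set $E$ to be the cut set of states which contain exactly one $(s=\gamma n,\ell\geq fs)$-cluster of flipped bits on $\mathcal{G}_A$ and no larger cluster, with $f\leq\tfrac{1}{2}(1-2/(\gamma n-2))$ so that the cluster has an error-free perimeter. Since Theorem \ref{thm:linear distance} gives $d\geq 2\gamma(1-\delta)n>\gamma n$, any local trajectory from the $\mathbf{0}$ codeword to a state at Hamming distance $\geq d/2$ must cross $E$ (it must create some maximal connected error region of size $\gamma n$ before any single region can become large enough to change codewords). For each $\mathbf{x}\in E$, I would define $\mathbf{y}(\mathbf{x})\in A$ by zeroing out the cluster, and bound the relative degeneracy by $g(E,A)\leq n\cdot z^{\gamma n}\cdot 2^{\gamma n}=n(2z)^{\gamma n}$: a factor $n$ for the anchor site, $z^{\gamma n}$ for connected subgraphs of $\mathcal{G}_A$ of that size, and $2^{\gamma n}$ for error patterns inside.

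The energetic estimate is the other ingredient. Because the cluster is surrounded by an error-free perimeter in $\mathcal{G}_A$, no parity check is adjacent to both the cluster and its complement, so $\mathcal{H}(\mathbf{x})-\mathcal{H}(\mathbf{y}(\mathbf{x}))=2|H\mathbf{x}_{\rm cluster}|$. Linear confinement applied to the restriction of $\mathbf{x}$ to the cluster (which has weight $\ell\leq\gamma n$, inside the confinement window) then gives $\mathcal{H}(\mathbf{x})-\mathcal{H}(\mathbf{y}(\mathbf{x}))\geq 2\Delta_B(1-2\delta)f\gamma n$. Combining these two estimates with detailed balance, line-for-line as in the derivation leading to \eqref{eq:t_hit quantum}, yields
\begin{equation}
\frac{\pi[E]}{\pi[A]}\leq n\exp\!\left[-\gamma n\bigl(2\beta\Delta_B(1-2\delta)f-\log(2z)\bigr)\right],
\end{equation}
and hence a lower bound $t_{\rm hit}[A\to E]\geq n^{-1}\exp[\gamma n(2\beta\Delta_B(1-2\delta)f-\log(2z))]$. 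Sending $f\to 1/2$ as $n\to\infty$ shows that the right-hand side is exponentially large in $n$ whenever $\beta>\log(2z)/(\Delta_B(1-2\delta))$, which is the stated critical inverse temperature; the bottleneck theorem upgrades this hitting-time bound to the mixing-time bound \eqref{eq:t_mix quantum}.

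The step I expect to require the most care is verifying that the single-cluster bottleneck $E$ is genuinely unavoidable in the classical setting. Errors under the local Gibbs sampler can proliferate as several disconnected clusters on $\mathcal{G}_A$ at once, and each cluster contributes independently to both energy and counting; one must argue that before any cluster can grow past $\gamma n$ (which is necessary to eventually accumulate $\geq d/2$ total errors, since $d>\gamma n$) it must pass through the size-$\gamma n$ regime with the requisite error density $f$, possibly after enlarging $f$ slightly to account for the error-free perimeter condition. Everything else -- the counting, the confinement bound, and the detailed-balance manipulations -- is a direct translation of the quantum argument.
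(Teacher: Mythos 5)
Your plan is the one the paper intends: Corollary \ref{cor:classical mixing bound v2} is announced in the text as a direct adaptation of the quantum argument in Theorem \ref{thm:exp mixing time quantum}, with the HGP confinement constant $1/3$ replaced by the classical constant $\eta=\Delta_B(1-2\delta)$ from Corollary \ref{cor:confinement} and the HGP adjacency graph replaced by the classical one (maximum degree $z=\Delta_B(\Delta_C-1)$). Your cluster counting $g(E,A)\le n(2z)^{\gamma n}$, your energy estimate $\mathcal H(\mathbf{x})-\mathcal H(\mathbf{y}(\mathbf{x}))\ge 2\Delta_B(1-2\delta)f\gamma n$, and the resulting $\beta_{\rm c}\le\log(2z)/[\Delta_B(1-2\delta)]$ after sending $f\to 1/2$ all line up exactly with what the corollary asserts.

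One sentence in your write-up is imprecise and worth fixing, because the corollary lives or dies on what the bottleneck set is claimed to separate. You say any local trajectory from $\mathbf 0$ to a state at Hamming distance $\ge d/2$ must cross $E$. That is false for a cluster-based $E$: a state can have $\Theta(n)$ errors scattered across many clusters each of size $\ll\gamma n$ without ever containing a size-$\gamma n$ cluster, and such a state is reached without touching $E$. What the set $E$ actually gates is the classical analogue of Definition \ref{def:non-decodable state}: a state containing a \emph{single connected} error cluster on $\mathcal G_A$ of size $\ge d$ with error fraction $\ge 1/2$, which is the kind of adversarial configuration that can fool a local or maximum-likelihood decoder. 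Scattered small errors are independently confined and independently correctable; you implicitly say this in your last paragraph, but the body of your argument should use the non-decodable-state target rather than the Hamming-distance target. The concern you flag at the end about cluster merges possibly skipping the size-$\gamma n$ window in one local step is a legitimate subtlety; the paper's quantum proof also only asserts unavoidability of $E$ for local trajectories rather than proving it, so you are not in worse shape than the source, but it is right to flag it.
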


\section{Numerical simulations}\label{app:numerics}

\begin{figure*}[t]
\centering
\includegraphics[width=0.325\textwidth]{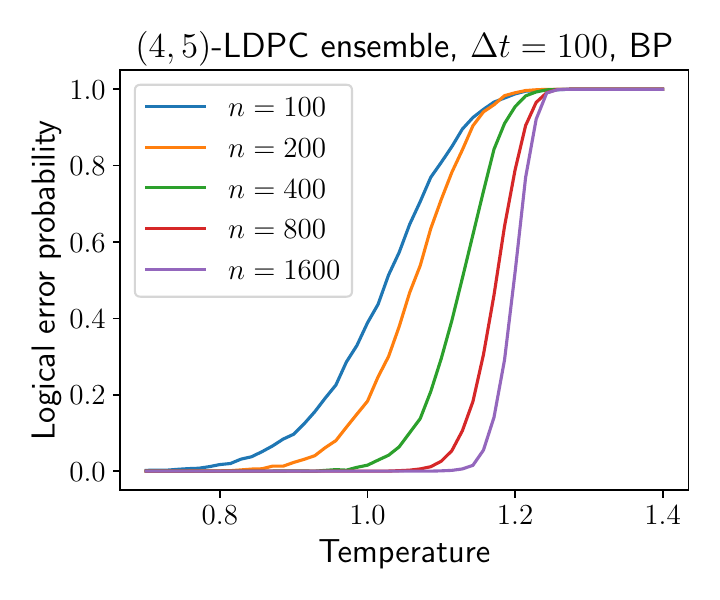}
\includegraphics[width=0.325\textwidth]{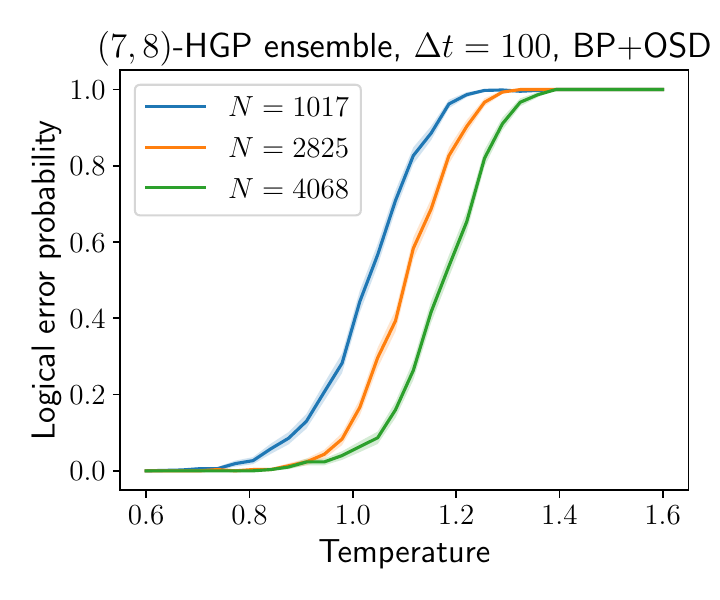}
\includegraphics[width=0.325\textwidth]{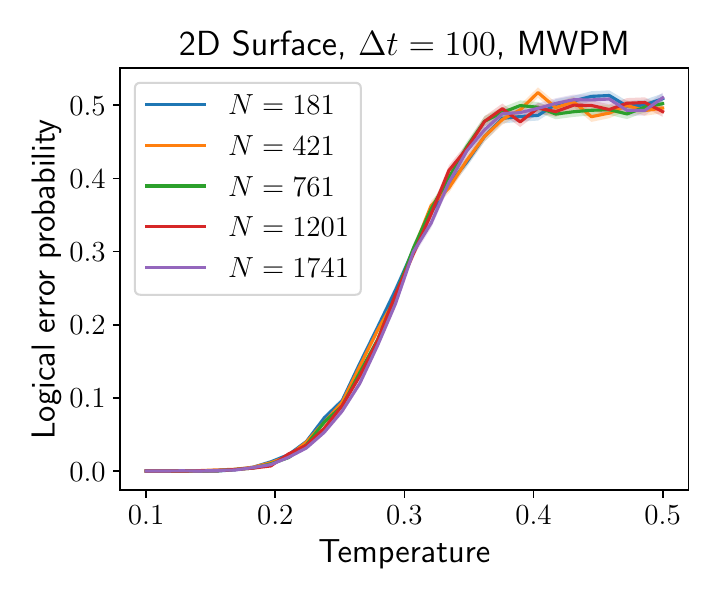}
\\
\includegraphics[width=0.325\textwidth]{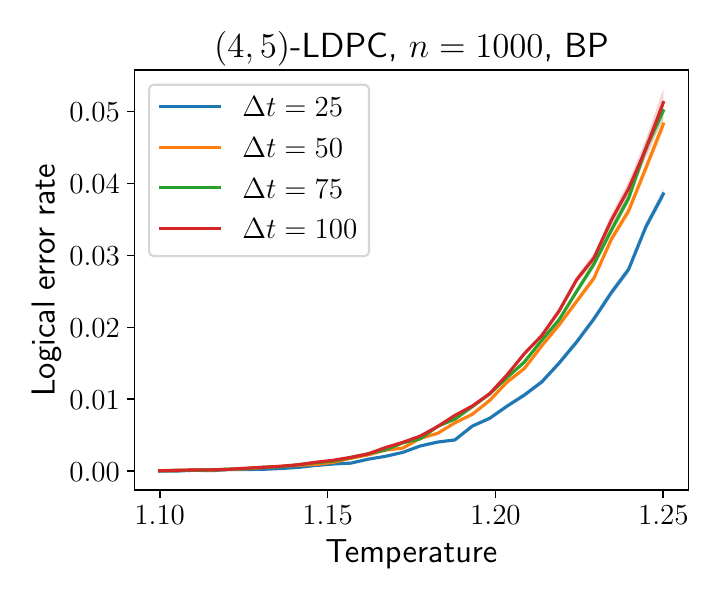}
\includegraphics[width=0.325\textwidth]{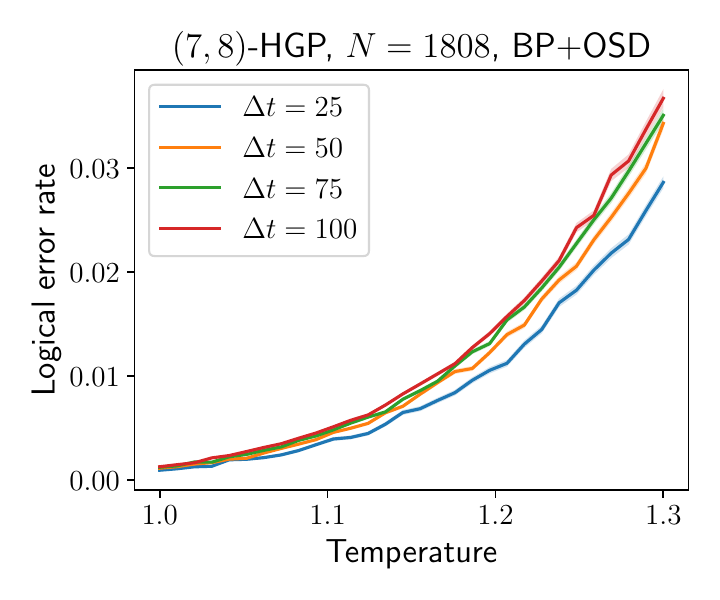}
\includegraphics[width=0.325\textwidth]{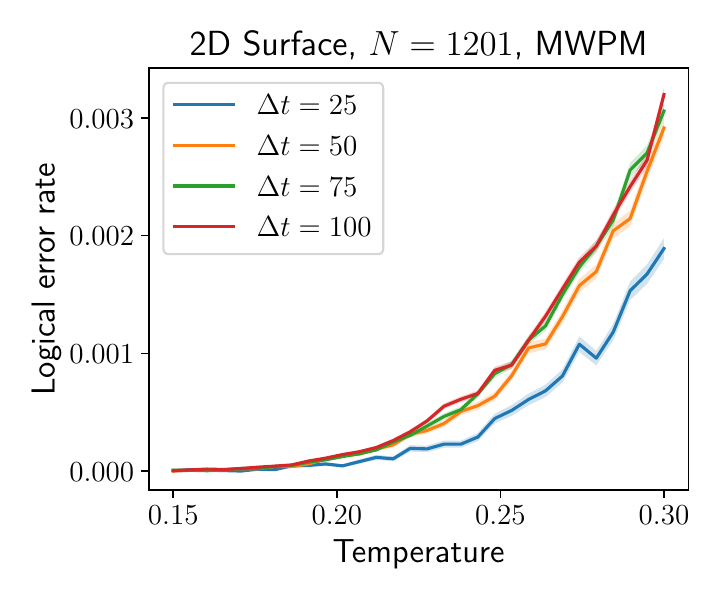}
\caption{\textbf{Top:} The logical block error probability as a function of temperature is plotted. We set $\Delta t = 100$ and sample several codes from the $(4,5)$-LDPC ensemble, $(7,8)$-HGP ensemble and 2D surface code family. \textbf{Bottom:} The logical block error rate (normalized by $\Delta t$) as a function of temperature is plotted near the observed transition point. We examine one particular code from each of the aforementioned ensembles and vary the equilibration time. Error bands (shaded) denote the standard error.}
\label{fig:numerics}
\end{figure*}

In this section, we conduct numerical simulations of passive error correction on both the classical and quantum LDPC codes previously discussed. For the classical codes, we randomly sample from the $(4,5)$-LDPC ensemble using the configuration model: we initialize $n$ bits and $\frac{5}{4}n$ checks and pair up their emanating edges according to a random permutation. For the quantum codes, we randomly sample from the $(7,8)$-HGP ensemble by constructing $(7,8)$-LDPC codes using the configuration model and taking the hypergraph product to produce $(8,15)$-LDPC quantum CSS codes. Since the $X$ and $Z$ sectors of the HGP are identical, we focus only on decoding of $X$ errors. All of our sampled codes have no check redundancies. In both classical and quantum simulations, we begin in the $\mathbf{0}$ codeword. We then simulate thermal equilibration at a chosen temperature by performing $\Delta t=100$ Monte Carlo sweeps of the Metropolis algorithm, where each sweep updates all sites exactly once. After the Monte Carlo sweeps, we feed the error syndrome into a belief-propagation decoder \cite{BP_decoding} and verify if the output codeword is $\mathbf{0}$. For HGP codes, the presence of small loops in their Tanner graphs prevents the convergence of belief propagation (BP), and so we also perform ordered-statistics post-processing \cite{Panteleev_2021}. We use open source software to implement belief propagation \cite{Roffe_LDPC_Python_tools_2022} and ordered-statistics decoding (BP+OSD) \cite{roffe_decoding_2020}. In particular, we perform a maximum of 10 iterations of minimum-sum BP with the additional use of OSD-0 for quantum decoding. We sweep from low to high temperatures and terminate the simulation if we fail to decode all Monte Carlo samples at a given temperature. For comparison, we also simulate thermal equilibration for the 2D surface code with minimum-weight perfect-matching (MWPM) decoding, using the open-source software PyMatching \cite{pymatching}.

The simulation results are showcased in Fig. \ref{fig:numerics}. We observe evidence of dynamical transitions for both classical and quantum LDPC codes from (i) the suppression of logical errors with increasing system size and (ii) the asymptotic behavior of the logical error rate with increasing equilibration time. For the classical $(4,5)$-LDPC ensemble, we observe this transition near $T_{\mathrm{c}} = 1/\beta_{\mathrm{c}} \approx 1.2$, consistent with our lower bounds of $T_{\rm c} \ge 0.17$ from Theorem \ref{thm:exp mixing time classical} and $T_{\rm c} \ge 0.29$ from Corollary \ref{cor:classical mixing bound v2}. For the quantum $(7,8)$-HGP ensemble, we observe a transition around $T \approx 1.1$, consistent with our lower bound of $T_{\rm c} \ge 0.052$ from Theorem \ref{thm:exp mixing time quantum}. In contrast, for the 2D surface code, we observe that the logical error probability is independent of system size. Indeed we should expect such behavior since at any nonzero temperature there is a finite density of anyons, which can diffuse freely to form an extensive error string in constant time \cite{Brown_2016}.


\section{Codes built from tree graphs}\label{app:tree codes}

\subsection{Repetition code on a classical tree}

In this subsection, we review why the mixing times of classical Ising models on trees are exponentially long in the levels of the trees.  On a general graph $G=(V,E)$, the Ising model Hamiltonian is \begin{equation}
    \mathcal{H} = -\sum_{\lbrace i,j\rbrace \in E} Z_i Z_j.
\end{equation}
Using (\ref{eq:parity check}) in the main text, one can convert this to a parity-check matrix $H$ with row weight $\Delta_C=2$.  It is straightforward to see that on a connected graph this is simply a repetition code, with all bits in the same state.  If the graph has no loops, then there are no redundant parity checks, and Theorem \ref{thm:no thermal phase quantum} implies that the system is thermodynamically trivial.   

Despite this, we first show that there is nontrivial dynamics, at least on a certain tree: \begin{thm}\label{thm:classical tree memory}
Consider a $r$-level tree with $n=1+3+\cdots + 3^r = \frac{1}{2}(3^{r+1}-1)$ vertices, where each interior vertex has exactly 3 children.  Then for $\beta$ larger than O(1) constant, the mixing time $t_{\mathrm{mix}} \ge \exp[\mathrm{O}(r)] = n^{\mathrm{\Omega}(1)}$ under Glauber dynamics.   Hence this system has thermal self-correction. 
\end{thm}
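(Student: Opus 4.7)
The plan is to apply the standard Markov-chain bottleneck (Cheeger) inequality, with the cut selected by the global $\mathbb{Z}_2$ spin-flip symmetry of $\mathcal{H}$. Because the Hamiltonian is invariant under $\mathbf{Z}\mapsto -\mathbf{Z}$ and $n$ is odd, the total magnetization $M=\sum_v Z_v$ is never zero, and $\pi(M>0)=\pi(M<0)=1/2$. Glauber dynamics flips one spin at a time, changing $M$ by $\pm 2$; hence every trajectory crossing between the two halves must pass through the thin slab $|M|=1$. Standard conductance bookkeeping then gives
\begin{equation}
    t_{\rm mix} \;\gtrsim\; \frac{\pi(M>0)\,\pi(M<0)}{Q(\{M>0\},\{M<0\})} \;\gtrsim\; \frac{n}{\pi(|M|=1)},
\end{equation}
where the factor $1/n$ absorbs the single-spin Glauber selection probability.

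The task then reduces to showing $\pi(|M|=1)\leq n^{-\epsilon}$ for some $\epsilon>0$ once $\beta>\beta_\star=\mathrm{O}(1)$. The first ingredient is the explicit tree correlation formula: on any tree with free boundary, the two-point function factors along the unique connecting path, yielding $\langle Z_u Z_v\rangle=(\tanh\beta)^{d(u,v)}$, which follows by iteratively summing out leaves and reducing to the 1D chain result. Summing over pairs on the 3-ary tree of height $r$ gives $\mathrm{Var}_\pi(M)\sim n\sum_{k=0}^{r}(3\tanh\beta)^k\sim n^{1+\log_3(3\tanh\beta)}$ as soon as $3\tanh\beta>1$, i.e., once $\beta$ exceeds the tree ferromagnetic threshold $\tfrac{1}{2}\log 2$. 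Thus $M$ has strictly super-extensive variance at low enough temperature.

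To convert this into actual anti-concentration at the atom $|M|=1$, I would invoke the Fortuin--Kasteleyn random-cluster representation of the tree Ising model: each edge is independently ``broken'' with probability $p=(1+e^{2\beta})^{-1}$, each resulting connected component of the residual forest is dressed with an independent uniform $\pm 1$ sign, and $M=\sum_i |C_i|\sigma_i$. For $\beta$ above an $\mathrm{O}(1)$ threshold, bond percolation with retention $1-p$ is deeply supercritical on the ternary tree (whose percolation threshold is $1/3$), producing with high probability a unique giant component of size $L=n-o(n)$ together with residual components of only $\mathrm{O}(\log n)$ size. Conditioned on the cluster decomposition, $M=L\sigma_L+X$ with $|X|\ll L$, so the distribution of $M$ concentrates sharply on the two well-separated peaks $\pm L$ and the probability of landing on the atypical value $|M|=1$ is polynomially small in $n$. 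Plugging back yields $t_{\rm mix}\gtrsim n^{1+\epsilon}=\exp[\Omega(r)]$, as claimed.

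The main obstacle is the last step: promoting super-extensive variance of $M$ to a polynomial upper bound on the single atom $\pi(|M|=1)$ is not Chebyshev and requires a genuine anti-concentration input. The random-cluster argument above works because supercritical bond percolation on the $d$-ary tree is very well understood, but handling the boundary of the giant component carefully is delicate. A more elementary alternative is to compute the generating function $\mathbb{E}_\pi[x^M]$ by a downward transfer-matrix recursion along the tree and then extract $\pi(M=k)$ by contour integration, sidestepping the percolation language entirely; either route gives the desired polynomial decay once $\beta$ sits comfortably above the tree ferromagnetic temperature.
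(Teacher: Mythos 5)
Your approach is genuinely different from the paper's, and it has both a concrete technical gap and a conceptual mismatch with the goal. The paper builds its bottleneck from the \emph{recursive-majority decoder}: the set $\partial S_-$ of configurations one flip away from changing the recursive-majority value at the root. Any such configuration carries a root-to-leaf ``fault path'' along which one of three children disagrees at every level, costing $\Theta(r)$ energy; the proof then trades this against the $3^{r-1}$ choices of path and the $\lesssim 2^{r-1}$ relevant spin configurations along it, arriving at $\pi[\partial S_-]\lesssim \mathrm{e}^{-(2\beta-\log 24)(r-1)}$ for $\beta>\tfrac12\log 24$. You instead cut at the thin slab $|M|=1$ and want anti-concentration of the total magnetization.

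The technical gap is in the mechanism you invoke. For the magnetization cut to produce anything $\beta$-dependent, you need $\pi(|M|=1)\lesssim n^{-1/2-\epsilon(\beta)}$ with $\epsilon(\beta)>0$ opening up at finite temperature. The ``giant FK cluster'' picture you propose to supply this does not hold on a tree: on a $3$-ary tree of depth $r$ with FK retention $\tanh\beta<1$, the mean cluster size of any fixed vertex is $\sum_{k\le r}(3\tanh\beta)^{k}=\Theta\bigl((3\tanh\beta)^{r}\bigr)=n^{\log_3(3\tanh\beta)}$, which is $o(n)$ because $\log_3(3\tanh\beta)<1$; there is no component of size $n-o(n)$ at any finite $\beta$, so the two-spike ``$M\approx\pm L$'' picture never forms. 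What the FK decomposition \emph{does} give, conditioned on clusters, is $\Theta(n)$ independent signs (a constant fraction of edges are closed for any $\beta<\infty$), and Littlewood--Offord then yields $\pi(|M|=1)\lesssim n^{-1/2}$. But that bound holds at \emph{every} temperature, $\beta=0$ included, where Glauber actually mixes in $\mathrm{O}(n\log n)$. So the magnetization cut, even correctly executed, cannot reveal the low-temperature transition; getting the extra $n^{-\epsilon(\beta)}$ is a delicate local-limit-theorem question that neither Chebyshev nor the (nonexistent) giant cluster settles.

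There is also a conceptual mismatch worth noting. In this repetition-on-a-tree code the logical bit is recovered by recursive majority (equivalently max-likelihood), not by the sign of the global magnetization; one can easily construct configurations where the two disagree. Therefore even a successful polynomial anti-concentration of $M$ would only lower-bound the chain's total mixing time -- it would not directly certify that a state initialized near a codeword retains its \emph{decodable} content for $\exp[\Omega(r)]$ steps. The paper's bottleneck set $\partial S_-$ is precisely the set at which the recursive-majority decoder changes its answer, which is what turns the bound into a statement about thermal memory rather than merely slow mixing. If you want to pursue your magnetization-based route, you would need both a genuine local anti-concentration input and a separate argument connecting sign-of-$M$ to decodability; neither is in hand.
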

This result was proved rigorously in \cite{berger2005glauber}, but we will find it helpful for our later discussion to present a slightly modified proof which follows our proof of Theorem \ref{thm:exp mixing time quantum} more closely. 
\begin{proof}
Given a state of the spins $\mathbf{z}=\pm 1$ on the tree, define the recursive majority function $g: \mathbb{Z}^n_2 \rightarrow \mathbb{Z}^n_2$ as follows. For a given state, we label all the boundary vertices by the value of their spins. Next, we label their parent vertices $w$ with whatever state the majority of the children of $w$ are in. We inductively label all the interior vertices from the labels of their children and the value of the recursive majority function $g$ is the label of the root. 

Let $S_+$ and $S_-$ be the set of states with $g = 1$ and $-1$. We choose the bottleneck set $\partial S_-\subset S_+$ as states that are one flip away from $S_-$. Since the function $g$ only depends on boundary vertices, for each state in $\partial S_-$, we can move the state to $S_-$ by flipping one vertex $v$ on the boundary (note that $v$ may not be unique). If we connect any such $v$ and the root with a path, for each interior vertex on the path, their label is +1 and there is one and only one child with label -1. We call such a path the \emph{fault path} and denote it by $\varphi$. 

For a given configuration $\mathbf{z}$, we denote the label of any vertex $v$ by $m_v = g_v(\mathbf{z})$. 
%
Let $F^-_\varphi$ be the set of states in $\partial S_-$ with a specific fault path $\varphi$ and $W^-_{\varphi}$ be the set of states in $F^-_\varphi$ where all the vertices on the fault path have the same label as their spin, +1. We first study only the configurations in $W^-_{\varphi}$.  Letting $\pi \propto \exp[-\beta \mathcal{H}]$ denote the Gibbs probability distribution, we have:

\begin{figure}[t]
    \centering
    \includegraphics[width=0.7\textwidth]{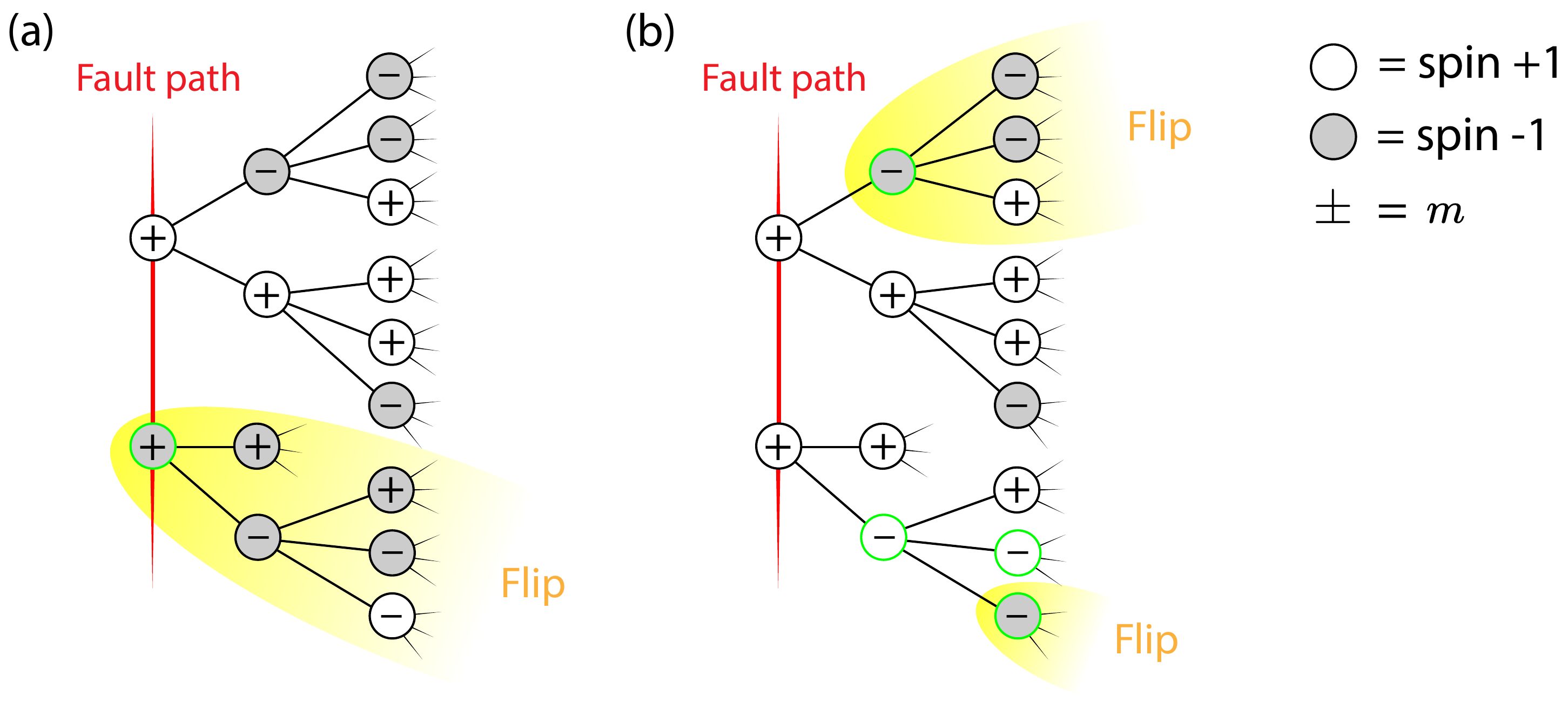}
    \caption{The two steps of the mapping between $F^-_{\varphi} \in S_-$ and its complement $\bar{F}^-_{\varphi} \in S_+$ are depicted. (a) The first step involves flipping the branch of any -1 spin (shaded gray) on the fault path, thereby mapping $F^-_{\varphi}\rightarrow W^-_{\varphi}$. (b) The second step involves the inductive strategy (green circles) described in Lemma \ref{lem:F_phi_0}.}
    \label{fig:classical tree decoding}
\end{figure}

\begin{lem}\label{lem:F_phi_0}
   If $\beta>\log 2$, the equilibrium probability of states in $W^-_{\varphi}$ is exponentially small in $r$:
    \begin{equation}
        \pi[W^-_{\varphi}] < \mathrm{e}^{-2(r-1)(\beta-\log 2)}. \label{eq:lemS42}
    \end{equation}
\end{lem}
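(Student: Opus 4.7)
The plan is to exploit the tree Markov property and reduce $\pi[W^-_\varphi]$ to a product over independent off-path subtrees, then bound each factor using the Ising recursion on a $3$-ary tree. Since $W^-_\varphi\subseteq\mathcal{E}$, where $\mathcal{E}$ is the event that every spin on $\varphi$ equals $+1$, we have $\pi[W^-_\varphi]\leq\pi[W^-_\varphi\mid\mathcal{E}]$. Conditioning on all on-path spins being $+1$ and invoking Markovness on the loopless graph, the $2r$ off-path subtrees hanging off the on-path interior vertices $w_0,\ldots,w_{r-1}$ become mutually independent, each governed by an Ising Gibbs measure with a $+1$ parent boundary supplied by the corresponding $w_\ell$.

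I would next argue that, under $\mathcal{E}$, the event $W^-_\varphi$ is equivalent to ``at each on-path interior vertex $w_\ell$, exactly one of its two off-path children has label $-1$.'' This follows by an upward induction from the leaf $v=w_r$ (which has label equal to its spin, namely $+1$): if $w_\ell$'s two off-path children carry labels $\{+1,-1\}$, then the majority of $w_\ell$'s three children's labels agrees with $w_{\ell+1}$'s label, forcing every on-path label to be $+1$. Letting $q_d$ denote the probability that a depth-$d$ subtree has its root labeled $-1$ conditional on parent spin $+1$, independence across off-path subtrees yields
\begin{equation}
\pi[W^-_\varphi\mid\mathcal{E}] \;=\; \prod_{\ell=0}^{r-1} 2q_{r-\ell-1}(1-q_{r-\ell-1}) \;\leq\; 2^r\prod_{d=0}^{r-1} q_d.
\end{equation}

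It remains to show $q_d\leq 2e^{-2\beta}$ uniformly in $d$. One verifies the base case $q_0 = 1/(1+e^{2\beta}) \leq e^{-2\beta}$ directly, and then uses the standard Ising recursion $q_d=F(q_{d-1})$ on the $3$-ary tree (a cubic polynomial in $q$ with coefficients built from $a=1/(1+e^{-2\beta})$ and $b=1-a$, obtained by summing over the root's spin weighted by the bond to the parent and demanding that at least two of three grandchildren carry label $-1$) to close the induction. Substituting then produces
\begin{equation}
\pi[W^-_\varphi] \leq 2^r\left(2e^{-2\beta}\right)^r = e^{-2r(\beta-\log 2)} \leq e^{-2(r-1)(\beta-\log 2)},
\end{equation}
whenever $\beta>\log 2$, which matches \eqref{eq:lemS42}. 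The main technical obstacle is the uniform bound on $q_d$: the recursion $F$ is not monotone, and $q_d$ can overshoot $q_0$ before equilibrating to the small fixed point $q^\star$ of $F$. I expect the cleanest route is a fixed-point analysis showing $q^\star\leq 2e^{-2\beta}$ together with $F$-invariance of $[0,2e^{-2\beta}]$, obtained by linearizing $F$ at $0$ (where $F'(0)=O(e^{-2\beta})$) and treating the higher-order terms as subleading. Everything else is a routine application of tree Markovness and the combinatorics of the recursive majority function.
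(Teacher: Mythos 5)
Your approach is genuinely different from the paper's: instead of constructing an explicit ``partner'' state by flipping subtrees and counting multiplicity as the paper does, you condition on the path spins and exploit the tree Markov property to factorize $\pi[W^-_\varphi\mid\mathcal{E}]$ over independent off-path subtrees. The decomposition itself is correct and elegant --- under $\mathcal{E}$, the on-path labels are forced to $+1$ by upward induction, and the event factors as $\prod_{\ell=0}^{r-1} 2q_{r-\ell-1}(1-q_{r-\ell-1})$ with the $q_d$ independent by Markovness. This is a cleaner organizing principle than the paper's ad hoc partner construction.

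However, the quantitative step you yourself flag as the obstacle has a concrete failure: the claim $q_d\leq 2e^{-2\beta}$ does not hold for $\beta$ just above $\log 2$. The recursion (with $a=1/(1+e^{-2\beta})$, $b=1-a$) is $F(q)=b+(a-b)(3q^2-2q^3)$, so $F(0)=b$ and $F'(0)=0$ (not $O(e^{-2\beta})$ as you linearized), and $q=1/2$ is always a fixed point with $F'(1/2)=\tfrac32(a-b)$. For $\log 2<\beta<\tfrac12\log 5$ this fixed point is attracting and there is no fixed point in $(0,1/2)$, so $q_d\nearrow 1/2$, which exceeds $2e^{-2\beta}$ as soon as $\beta>\log 2$. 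Concretely at $\beta=0.75$ one finds $F(2e^{-2\beta})\approx 0.449 > 0.446\approx 2e^{-2\beta}$, so $[0,2e^{-2\beta}]$ is not $F$-invariant and the proposed fixed-point argument cannot close. The repair is available within your own framework but you missed it: since $2q(1-q)\leq 1/2$ unconditionally, your product formula gives $\pi[W^-_\varphi]\leq 2^{-r}$ for free, which already beats $e^{-2(r-1)(\beta-\log 2)}$ whenever $\beta\leq\tfrac32\log 2$; the $q_d$-induction is then only needed at larger $\beta$ where $12e^{-2\beta}\leq 1$ makes $[0,2e^{-2\beta}]$ genuinely $F$-invariant. (For what it's worth, the paper's proof carries a parallel looseness --- its recursion bound on $c_n(\beta)$ is only verified for $\beta\geq 2\log 2$ even though the lemma statement says $\beta>\log 2$ --- so neither argument, as written, covers the full stated range without the small-$\beta$ fallback.)
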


\begin{proof}
 The strategy will follow closely the proof of Theorem \ref{thm:exp mixing time quantum}: we aim to define a ``partner" set $\bar{W}^-_{\varphi} \subset S_+$, chosen so that we can compare the equilibrium probability of a state $\mathbf{z} \in W^-_{\varphi}$ as
     \begin{align}\label{eq:z in F- bound}
         \frac{\pi[\mathbf{z}\in W^-_{\varphi}]}{\pi[\bar{\mathbf{z}} \in \bar{W}^-_{\varphi}]}  = \mathrm{e}^{-\beta[E(\mathbf{z}) - E(\bar{\mathbf{z}})]} \, , 
     \end{align}
     where $\bar{\mathbf{z}} \in \bar{W}^-_{\varphi}$ is the ``partner" of $\mathbf{z}$.  Intuitively, the partner should have significantly fewer domain walls, so that the ratio in \eqref{eq:z in F- bound} is as small as possible; also, not too many $\mathbf{z}$ should share a partner $\bar{\mathbf{z}}$, so that we can eventually bound $\pi(W^-_{\varphi}) / \pi(\bar{W}^-_{\varphi})$. Define a mapping $W^-_{\varphi} \rightarrow \bar{W}^-_{\varphi}$ as follows: for each vertex $v$ on the fault path, we inductively apply the following strategy to $v$, its children, grandchildren, etc., until we reach the boundary.   For a given vertex, we check whether its children whose labels are $m=-1$ also have spin $Z=-1$. For every such child with both label and spin being -1, we flip the whole branch starting from the child. For every child with label -1 but spin 1, we apply the above strategy to its child: see Fig.~\ref{fig:classical tree decoding}(b). After the mapping, the energy will always be lower. Note that for states $\mathbf{z} \in  W^-_{\varphi}$, partnering is not a one-to-one mapping: multiple $\mathbf{z}$ can share the same $\bar{\mathbf{z}}$.
     
     As a simple example, consider states where all the children of vertices on the fault path with label -1 also have spin -1: label this special subset $Y^-_{\varphi_0}\subset W^-_{\varphi}$.  To construct the partner subset $\bar Y^-_{\varphi_0}\subset \bar W^-_{\varphi}$, we will flip all the branches starting from these children, and the energy is lowered by $2(r-1)$. By construction, for any vertex $v$ on the fault path, 2 of the children of $v$ have $m=+1$ and 1 has $m=-1$.   Given a known fault path $\varphi_0$, there are then at most $2^{r-1}$ possible $\mathbf{z}$ that can map to a given $\bar{\mathbf{z}}$, where $2^{r-1}$ simply counts the possible configurations of the -1 children in the tree: $v$ has 3 children, one of which lies along the known fault path and therefore necessarily has $m=1$).   Using \eqref{eq:z in F- bound} we deduce \begin{equation}
         \pi[Y^-_{\varphi_0}] \le 2^{r-1}\pi[\bar Y^-_{\varphi_0}] \mathrm{e}^{-2(r-1)\beta} \le \mathrm{e}^{-(2\beta - \log 2) (r-1)},
     \end{equation}
     showing that once $\beta > \frac{1}{2}\log 2$, $\pi[Y^-_{\varphi_0}]$ is very small for thermodynamically large trees.

    We now must consider all the possible ways to build $\bar{\mathbf{z}}$ for more general fault paths, weighted by the Boltzmann penalty factor as in (\ref{eq:z in F- bound}). For a specific vertex $v$ on the fault path, if its child with label -1 has spin -1, the energy will be lowered by 2 by flipping that child and all of its grandchildren.  Following the same steps as above, we deduce a relative Boltzmann weight of \begin{equation}\label{eq:relativeweightonfaultpath}
       \sum_{\mathbf{z} \text{ sent to }\bar{\mathbf{z}}} \frac{\pi[\mathbf{z}\in W^-_{\varphi}]}{\pi[\bar{\mathbf{z}}\in \bar F_{\varphi_0}^-]} \le 2c_1(\beta) \times [\text{relative weight from the remaining vertices on fault path}]
    \end{equation} 
    Next, we consider the relative weight coming from vertices on the fault path where the child with label -1 has spin 1 but all the grandchildren with label -1 have spin -1.  Again, there are 2 choices of children that could have label $-1$, but now we know that either 2 or 3 of the relevant grandchildren have spin -1 and label -1.  We will flip all of these children.  The relative weight coming from these grandchildren flips then \begin{equation}
        2 c_2(\beta) - 2c_1(\beta) := 2 \left(3\mathrm{e}^{-4\beta} + \mathrm{e}^{-6\beta}\right),
    \end{equation}
    with the first term coming from 2 grandchildren with spin -1, and the second term coming from 3.  We've defined $c_2(\beta)$ such that we can now adjust (\ref{eq:relativeweightonfaultpath}) by replacing $c_1$ with $c_2$.   We can continue to iterate this argument for a given vertex: at the next level, we find \begin{equation}
        2c_{n+1}(\beta) - 2c_1(\beta) = 2\left(3 c_n(\beta)^2 + c_n(\beta)^3\right).
    \end{equation}
     If $\beta \geq 2\log 2$, it can be checked that $c_2(\beta) < 2\mathrm{e}^{-2\beta} <1$ and $c_3(\beta) < \mathrm{e}^{-2\beta} + 4c_2(\beta)^2 \leq 2\mathrm{e}^{-2\beta} $; repeating this argument we deduce that for any $n$, \begin{equation}
         c_n(\beta) < 2\mathrm{e}^{-2\beta}.
     \end{equation}Now we see that we can just replace $c_1$ with $2\mathrm{e}^{-2\beta}$ in (\ref{eq:relativeweightonfaultpath}), which leads to \eqref{eq:lemS42}, after summing over $\bar{\mathbf{z}}$ and using $\pi[\bar W^-_{\varphi}] < 1$.
\end{proof}

Now we return to the more general case:

\begin{lem}\label{lem:F^-_phi}
   If $\beta \ge \frac{3}{2}\log 2$, the equilibrium probability of states in $F^-_{\varphi}$ is exponentially small in $r$:
    \begin{equation}
        \pi[F^-_{\varphi}] < \mathrm{e}^{-(2\beta-3\log 2)(r-1)}. \label{eq:lemS43}
    \end{equation}
\end{lem}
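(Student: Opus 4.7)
The strategy extends the partner-mapping argument of Lemma~\ref{lem:F_phi_0} to the general setting where fault-path interior spins need not equal their $+1$ labels. Since the desired bound on $\pi[F^-_\varphi]$ exceeds Lemma~\ref{lem:F_phi_0}'s bound on $\pi[W^-_\varphi]$ by a factor of exactly $2^{r-1}$, the task reduces to showing that summing over the $2^r$ fault-path interior spin configurations inflates the partner-mapping bound by at most this amount, given the stronger temperature condition $\beta \geq \tfrac{3}{2}\log 2$.

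I would write $F^-_\varphi = \bigsqcup_{\mathbf{s}} F^-_\varphi(\mathbf{s})$, where $\mathbf{s} = (s_0, \ldots, s_{r-1}) \in \{\pm 1\}^r$ parametrizes the configuration of fault-path interior spins, with $F^-_\varphi(+\mathbf{1}) = W^-_\varphi$. For each $\mathbf{s}$, construct a partner map into $S_+$ by an $\mathbf{s}$-dependent version of Lemma~\ref{lem:F_phi_0}'s recursive procedure: at each interior fault-path vertex $v_i$ with label-$-1$ child $c_i$, flip the subtree rooted at $c_i$ whenever doing so lowers the energy --- this is the original Lemma~\ref{lem:F_phi_0} rule when $s_i = +1$ and its dual when $s_i = -1$ --- else recurse into $c_i$'s children. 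The $c_n(\beta) \leq 2\mathrm{e}^{-2\beta}$ recursion still governs the per-level relative weight, yielding the uniform per-$\mathbf{s}$ bound
\begin{equation}
\pi[F^-_\varphi(\mathbf{s})] \;\leq\; 2^{r-1}\,\mathrm{e}^{-2(r-1)\beta} \, ,
\end{equation}
with any additional energy cost from fault-path domain walls (at positions where $s_i \neq s_{i+1}$) only strengthening the estimate.

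Summing over the $2^r$ choices of $\mathbf{s}$ gives
\begin{equation}
\pi[F^-_\varphi] \;\leq\; 2^r \cdot 2^{r-1}\,\mathrm{e}^{-2(r-1)\beta} \;=\; 2^{2r-1}\,\mathrm{e}^{-2(r-1)\beta} \;\leq\; 2^{3(r-1)}\,\mathrm{e}^{-2(r-1)\beta} \;=\; \mathrm{e}^{-(2\beta - 3\log 2)(r-1)} \, ,
\end{equation}
where the penultimate inequality uses $2r-1 \leq 3(r-1)$ for $r \geq 2$ (the $r=1$ case being trivial since $\pi \leq 1$). This matches the claimed bound.

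The main obstacle is justifying the per-$\mathbf{s}$ estimate cleanly when the subtree-flipping rule varies by site: one must verify that the inductive $c_n(\beta) \leq 2\mathrm{e}^{-2\beta}$ recursion from Lemma~\ref{lem:F_phi_0} continues to hold once the ``aligned'' and ``anti-aligned'' roles swap at vertices with $s_i = -1$, and one must confirm that the domain walls appearing on the fault path at locations $i$ with $s_i \neq s_{i+1}$ contribute non-negatively to the energy cost (hence suppressively to the Gibbs weight). The condition $\beta \geq \tfrac{3}{2}\log 2$ is precisely what is needed to absorb the additional $2^r$ combinatorial factor from summing over fault-path spin configurations while preserving the exponential decay in $r$.
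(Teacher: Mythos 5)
Your plan diverges substantially from the paper's proof, and it has a genuine quantitative gap. The paper does not re-run the recursive-majority analysis of Lemma~\ref{lem:F_phi_0} for each spin configuration of the fault path. Instead, it constructs a single energy-lowering map $\tilde g$ from $F^-_\varphi\setminus W^-_\varphi$ into $W^-_\varphi$: for every interior fault-path vertex $v$ with spin $-1$, flip $v$ together with both off-path subtrees hanging from $v$'s children. By construction only the $\Delta_C=2$ checks along the fault path can change their value, and after this operation all of them are satisfied, so $\tilde g$ never increases the energy; moreover it preserves the fault path (since flipping the two off-path subtrees exchanges one label-$(+1)$ child with one label-$(-1)$ child, leaving $v$'s recursive-majority label unchanged). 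Because $\tilde g$ is determined by the $r-1$ bits recording which interior-of-path vertices had spin $-1$, the multiplicity is at most $2^{r-1}-1$, yielding $\pi[F^-_\varphi]<2^{r-1}\pi[W^-_\varphi]$ directly, and Lemma~\ref{lem:F_phi_0} finishes. This sidesteps entirely the question of whether the $c_n(\beta)$ recursion transfers to configurations with $s_i=-1$.

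That question is precisely where your argument runs into trouble. Your per-$\mathbf{s}$ bound $\pi[F^-_\varphi(\mathbf{s})]\le 2^{r-1}\mathrm{e}^{-2(r-1)\beta}$ is already \emph{stronger} than what Lemma~\ref{lem:F_phi_0} gives for the easiest case $\mathbf{s}=+\mathbf{1}$, since that lemma only asserts $\pi[W^-_\varphi]<\mathrm{e}^{-2(r-1)(\beta-\log 2)}=4^{r-1}\mathrm{e}^{-2(r-1)\beta}$. Your $2^{r-1}$ appears to come from the simplified $Y^-_{\varphi_0}$ subcase discussed in that lemma's proof, not from monitoring the full $c_n<2\mathrm{e}^{-2\beta}$ recursion, which contributes an extra factor of $2$ at each of the $r-1$ levels. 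If you correct the per-$\mathbf{s}$ estimate to $4^{r-1}\mathrm{e}^{-2(r-1)\beta}$ and sum over $2^r$ configurations as written, you get $2^r\cdot 4^{r-1}=2^{3r-2}$, which overshoots the target $2^{3(r-1)}=2^{3r-3}$ by a factor of $2$. The count would need to be $2^{r-1}$ (the number of path vertices strictly between leaf and root, matching the paper's $2^{r-1}-1$), not $2^r$. Finally, the claim that the $c_n(\beta)\le 2\mathrm{e}^{-2\beta}$ recursion ``still governs'' when $s_i=-1$ is asserted but not shown; the local geometry of violated checks and majority labels near a spin-$(-1)$ fault-path vertex is genuinely different, and justifying this would likely amount to re-deriving the $\tilde g$ map the paper uses anyway.
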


\begin{proof}
     For states in $F^-_\varphi \backslash W^-_{\varphi}$, there are some interior vertices $v$ on the fault path with spin -1. For each such $v$, consider the operation $\tilde g$ wherein we flip $v$ and all the branches starting from the children of $v$ that are not on the fault path, see Fig.~\ref{fig:classical tree decoding}(a). In this way, only parity checks that are on the fault path can be changed. After the flipping, there are no violated parity checks on the fault path, while the parity of all the checks off the fault path is unchanged by construction.  Hence, $\tilde g$ always lowers the energy of the state. 
     
     Since each state on the fault path has either $z=1$ or $z=-1$, we conclude that at most $2^{r-1}-1$ states in $F^-_\varphi \backslash W^-_{\varphi}$ that can be mapped to the same state in $W^-_{\varphi}$ by $\tilde g$.  Notice that $\tilde g$ does not change the label of any vertex on the fault path: because each flip induces by $\tilde g$ flips 2 whole branches of children, one of which came with label +1 and the other with label -1, the flip simply exchanges the $\pm 1$ label children, while leaving the label of the fault path vertex unchanged. Therefore, $\pi[F^-_\varphi] < 2^{r-1} \pi[W^-_{\varphi}]$, and employing \eqref{eq:lemS42} we obtain \eqref{eq:lemS43}.
\end{proof}   
We now complete the proof of the theorem.  For $\partial S_-$, there are $3^{r-1}$ different fault paths. Therefore, 
\begin{equation} \label{eq: bottle neck C}
    \pi[\partial S_-] < 3^{r-1} \pi[F^-_{\varphi}] < \mathrm{e}^{-(2\beta-\log 24)(r-1)}
\end{equation}
and the mixing time 
\begin{equation}
    t_{\text{mix}} \geq \frac{\pi[S_-]}{4\pi[\partial S_-]} > \frac{1}{8} \mathrm{e}^{-(2\beta-\log 24)(r-1)}.
\end{equation}
We remind the reader that a proof for more general trees is in \cite{berger2005glauber}.
\end{proof}

\subsection{Hypergraph product codes that are not good quantum memories}

For the classical trees, the path $\varphi$ provides a lower bound of the energy barrier that is linear in $r \sim \log n$.  We saw that this was, happily, not an impediment to a classical self-correcting memory.  However, crucially, the proof of Theorem \ref{thm:classical tree memory} was more complicated than Theorem \ref{thm:exp mixing time classical} or Theorem \ref{thm:exp mixing time quantum} because we had to be very careful about the possibility of large entropic contributions to the Boltzmann weight of finding clusters of any given size.  Put another way, we had to be very explicit to rule out the possibility of an entropy-driven breakdown of thermal memory, even in the presence of the $\mathrm{\Theta}(\log n)$ energy barrier: we did so by showing that a logical error must pass through rare configurations where, at a minimum, there is one flipped parity check at each layer of the tree.

\begin{figure}[t]
    \centering
    \includegraphics[width=0.65\textwidth]{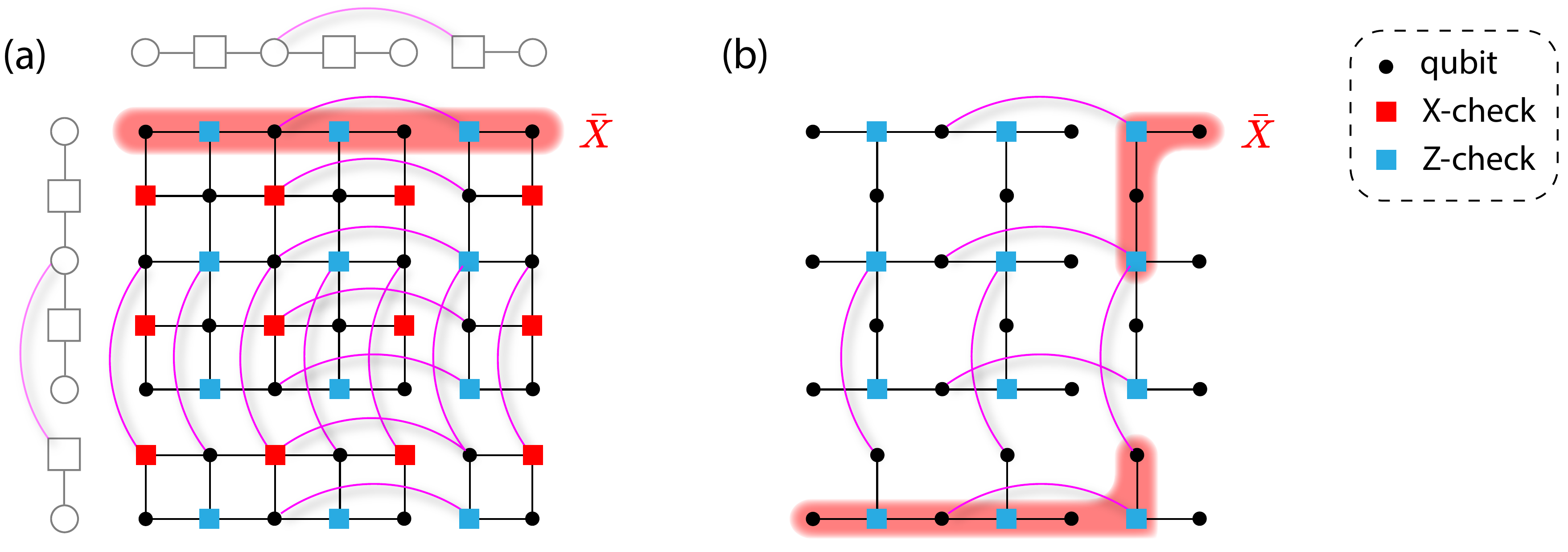}
    \caption{(a) The hypergraph product code of a small degree-3 tree is illustrated. Black dots, blue squares, and red squares represent qubits, $Z$ checks, and $X$ checks respectively. The support of a logical $\bar{X}$ operator is denoted by the qubits inside the shaded red rectangle. (b) A logical $\bar{X}$ operator equivalent to the previous one is drawn.}
    \label{fig:tree HGP}
\end{figure}

We now consider the quantum LDPC code obtained from the hypergraph product of the $r$-level classical code with itself.  It will still have the same energy barrier, but the probability of the bottleneck set may no longer be assured to be exponentially small in $r$. As shown in Fig. \ref{fig:tree HGP}, for any given logical operator, by appending appropriate stabilizers that are adjacent to it, we can move the corresponding branch to another row without increasing the energy barrier of the logical operator. In terms of the bottleneck set, we still have $\pi[F^-_\varphi] \leq \mathrm{e}^{-\mathrm{O}(r)}$, but now the entropic factor is increased by $\lesssim n^{\mathrm{O}(r)}$, where $n$ is the system size of the classical code.   Indeed, naively in Fig. \ref{fig:tree HGP}, we can freely ``slide" the support of the logical $\bar{X}$ operator (which runs from left to right) along vertical columns, because the ``check-check"-type qubits of the hypergraph product only connect to two $Z$ stabilizers. Suppose we have one of these adversarial error strings that traverse vertically across check-check qubits. If one of its flipped checks is too close to the leaf of the tree, then our large error may be stabilizer-equivalent to smaller errors supported on different rows/trees. However, if the violated parity check is $\ell$ generations removed from the leaves of the tree, it appears to us that there is no such stabilizer reduction to remove a very dangerous entropic factor of at least $3^\ell$; if we slide beyond $3^\ell$ rows then we again can reduce the weight of the error using stabilizers. We were therefore unable to prove the existence of a bottleneck set whose probability is exponentially small, and of course our above argument suggests in fact that it will not exist -- namely, that entropic effects will always dominate. Hence we conjecture that this tree-hypergraph product code is not a passive, self-correcting memory in contrast to the corresponding classical code.

\begin{figure*}[t]
\centering
\includegraphics[width=0.325\textwidth]{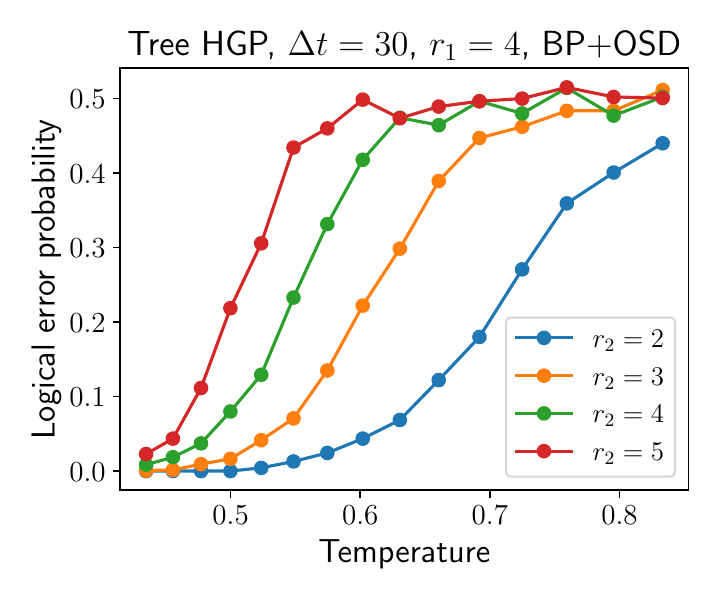}
\includegraphics[width=0.325\textwidth]{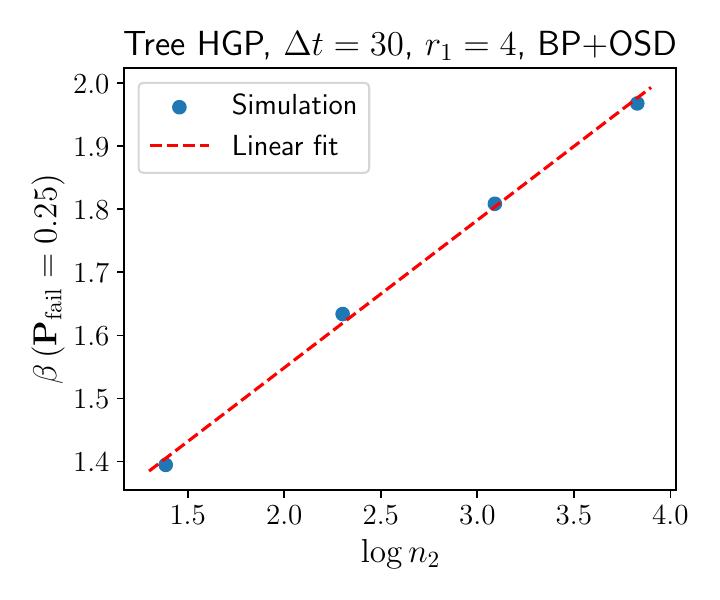}
\includegraphics[width=0.325\textwidth]{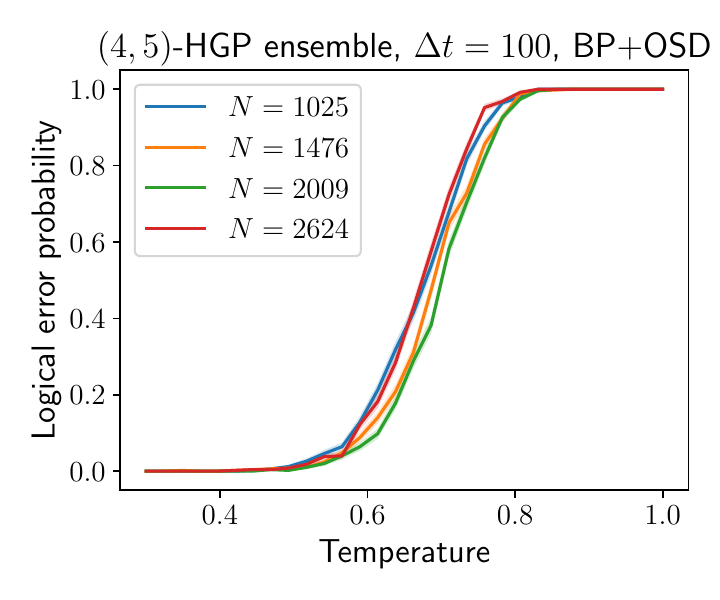}
\caption{\textbf{Left:} The logical error probability as a function of temperature is plotted for a hypergraph product code consisting of two classical Ising models on 3-regular trees with levels $r_1$ (fixed) and $r_2$ (variable). \textbf{Middle:} Fixing the logical error probability at 0.25, we interpolate the data on the left plot to estimate the critical inverse temperature as a function of the size of the second classical tree. A linear fit is also overlayed. \textbf{Right:} The logical error probability as a function of temperature is plotted for several codes from the $(4,5)$-HGP ensemble.}
\label{fig:tree numerics}
\end{figure*}

We perform numerical simulations of passive error correction for hypergraph product codes of 3-regular trees described above in order to establish confidence for the conjecture; see Fig. \ref{fig:tree numerics}. To obtain numerical evidence for the entropic breakdown of memory that we argued for above, we fix the size of one tree in the hypergraph product code while varying the number of levels in the other tree, thus making the hypergraph product asymmetric. Explicitly, the $\llbracket N,K,D \rrbracket$ parameters of the tree-hypergraph product code are given by
\begin{subequations}
\begin{align}
    N &= n_1 n_2 + (n_1-1)(n_2-1) = O\big( \mathrm{e}^{r_1+r_2} \big)  \\
    K &= 1  \\
    D &= \min\left(n_1,n_2\right) \, .
\end{align}
\end{subequations}
The scaling of the critical temperature with the size of the second tree displays a logarithmic dependence which is consistent with free energy arguments: if $c_1 \sim n\mathrm{e}^{-2\beta}$, for example, we would find that \begin{equation}
    \beta_{\mathrm{c}}(n) = \beta_{\mathrm{c},0} + c \log n.
\end{equation}
To fix the value of $c$, we note that adding long error strings along check-check qubits in the HGP code can be stabilizer-equivalent to a disconnected error cluster, as discussed above. Proposing that we get an extra entropic factor of order $3^\ell$ at distance $\ell$ from the leaves, a crude argument suggests that we will gain an added entropic factor of at least $3 \times 3^2 \times 3^3 \times \cdots \times 3^r \sim 3^{r^2/2} \sim n^{r/2}$ in \eqref{eq: bottle neck C}.  This suggests that $c\gtrsim 0.25$, which is actually quite close to what we observe in Fig. \ref{fig:tree numerics}.

We also numerically simulate several codes from the $(4,5)$-HGP ensemble and do not observe evidence of any decoding threshold (see the rightmost plot in Fig. \ref{fig:tree numerics}), despite the fact that the classical $(4,5)$-LDPC ensemble is capable of passive error correction (also recall Fig. \ref{fig:numerics}). Our numerics suggest that certain classical passive memories may not have straightforward quantum analogues through the hypergraph product due to the additional degrees of freedom introduced by the graph product. Since it was recently shown that the energy barriers of hypergraph product codes are inherited from their classical input codes \cite{zhao2024}, an interesting direction for future work would be to carefully study the additional entropic effects needed to obtain quantum memory in these codes.

\section{Open quantum system dynamics} \label{app:implementation}

This section contains details of our discussion of MFQEC and passive error correction.

\subsection{Lindblad dynamics}
Our goal is to prepare a stationary Gibbs state \begin{equation}
    \sigma \propto \mathrm{e}^{-\beta \mathcal{H}},
\end{equation}
where \begin{equation}
    \mathcal{H} = -\sum_{\text{stabilizer }a}S_a
\end{equation}and as in the main text, the stabilizers $\{ S_a \}$ are a set of mutually commuting Pauli strings, i.e., $S_a^2 = \ident$ and $[S_a, S_b] = 0$ for all $a$, $b$.  In this section, we assume that each $S_a$ is independent (in the sense that one cannot be built out of a product of any others), as was the case for the non-redundant codes described above. The Gibbs state is diagonal in a basis that consists of eigenvectors of all Pauli strings $\{ S_a \}$, which we can denote as $\sigma = \sum_s \sigma_s |s\rangle\langle s|$.  Notice that with this choice, there remain $2^K$ possible $s$ for each set of stabilizer eigenvalues, and we will be a little lazy and omit that explicit notation.

Let's start with ideal circuits without any errors. We choose a Lindbladian of Davies form \cite{Davies_1979}:
\begin{equation}
    \partial_t \rho = \mathcal{L}(\rho) = \sum_{s s'}  L_{ss'} \left( |s\rangle\langle s'| \rho |s'\rangle\langle s|  -   \frac{1}{2}\{ |s'\rangle\langle s'|, \rho \} \right) 
    \, ,
\end{equation}
where the coefficients $L_{ss'}$ are positive and satisfy $L_{ss'} = \mathrm{e}^{-\beta \left(\mathcal{H}_s - \mathcal{H}_{s'}\right)}L_{s's}$. We also want the dynamics to be local, which means all the jump operators only act on finite numbers of qubits. Consider an operator $P$, which can be a Pauli operator on a single qubit or a Pauli string on multiple qubits. If there exists some stabilizers that anticommute with $P$: $\mathcal{A}_P \equiv \{ a \, | \, S_a P + P S_a = 0 \}$ and $\mathcal{A}_P \neq \emptyset$, we can choose jump operators that consist of $P$ and projections:
\begin{equation}
    P(\mathbf{n}) = P \Pi_P(\mathbf{n}) \equiv P \left[ \prod_{a \in \mathcal{A}_P} \frac12 (I + n_a S_a) \right]   
    \, ,
    \label{eqn:stabilizer-eigenvector}
\end{equation}
where $\mathbf{n} \equiv \{n_a\}_{a \in \mathcal{A}_P}$ and $n_a \in \{-1, +1\}$, which defines the projector $\Pi_P(\mathbf{n})$. Note that $P = \sum_{\mathbf{n}} P(\mathbf{n})$. Since we have an LDPC CSS code, any local operator $P$ is only involved in a finite number of stabilizers, and each stabilizer has support on a finite number of Pauli operators, so the jump operator $P(\mathbf{n})$ is also $\mathsf{k}$-local. A $\mathsf{k}$-local Lindbladian with stationary state $\sigma$ can then be found explicitly \cite{aqm}:
\begin{equation} \label{eq: local Lindbladian}
    \mathcal{L}(\rho) = \sum_{P, \mathbf{n}} \gamma_{P(\mathbf{n})} \mathcal{L}_{P(\mathbf{n})}(\rho) = \sum_{P, \mathbf{n}}  \gamma_{P(\mathbf{n})} \left( P(\mathbf{n}) \rho P(\mathbf{n})^\dagger  -   \frac{1}{2}\{ \Pi_P(\mathbf{n}), \rho \} \right) 
    \, ,
\end{equation}
with \begin{equation}
    \gamma_{P(\mathbf{n})} = \mathrm{e}^{-2\beta \sum_{a \in \mathcal{A}_P} n_a}\gamma_{P(\mathbf{-n})}
\end{equation}
corresponding to ``Gibbs sampling" dynamics that protects the Gibbs mixed state.

We now show that the dynamics above can be effectively implemented using ancilla qubits and quantum gates. If we want to implement the dynamics $\mathcal{L}_{P(\mathbf{n})}$, we need an ancilla qubit that is initialied as $\ket{0}$ for each $a \in \mathcal{A}_P$. The ancilla qubits can store information about the physical qubits, which can then be used to apply feedback. For each stabilizer $S_a = \prod_i P_i$ that anticommutes with $P$, where $P_i$ is a Pauli operator on qubit $i$, we apply gates $\prod_i \mathrm{e}^{\ii \frac{\pi}{4} (I_i -P_i ) (I_a - X_a)}$ to extract syndromes, where $I_a$ and $X_a$ are operators on the ancilla qubits for $S_a$. After applying gates for each stabilizers, we apply gate \begin{equation}
    U_{\mathrm{feedback}} = \exp\left[\ii\frac{\pi}{2}(I-P)\prod_a \frac{I_a+n_aZ_a}{2}\right]
\end{equation}
as the feedback. The last step is to reset all ancilla qubits to $\ket{0}$.

Neglecting the time it takes to apply the gates above, the dynamics we implemented in the previous paragraph on physical and ancilla qubits can be converted into an effective Lindbladian acting only on the original physical qubits. After tracing out ancilla qubits, we find that
\begin{equation} \label{eq: no feedback terms}
    \mathcal{L}_{P(\mathbf{n})}'(\rho) = P(\mathbf{n}) \rho P(\mathbf{n})^\dagger  -   \frac{1}{2}\{ \Pi_P(\mathbf{n}), \rho \} + \sum_{\mathbf{m}\neq \mathbf{n}} \Pi_P (\mathbf{m}) \rho \Pi_P (\mathbf{m}) - \frac{1}{2}\{ \Pi_P(\mathbf{m}), \rho \}
    \, .
\end{equation}
Since the projector $\Pi_P(\mathbf{m})$ commutes with $\sigma$, the additional dynamics we implemented other than $\mathcal{L}_{P(\mathbf{n})}$ will not affect the stationarity of $\sigma$, so we can effectively implement $\mathcal{L}_{P(\mathbf{n})}$ with the above process. 

Ultimately, using this Lindbladian picture, we can predict the rates at which we should correct for errors in the MFQEC circuit; the only difference in this mathematical framework is that we apply this feedback effectively instantaneously and at a continuous rate for each qubit. 

For the whole dynamics $\mathcal{L}$ in (\ref{eq: local Lindbladian}), instead of implementing only $\mathcal{L}_{P(\mathbf{n})}$ at a time, we can implement $\sum_\mathbf{n} \gamma_{P(\mathbf{n})} \mathcal{L}_{P(\mathbf{n})}$ together: for each operator $P$, in time interval $\text{d}t$, we apply gates that extract syndromes with probability \begin{equation}
    p_{\text{syndrome}} \propto \text{d}t\, \max_{\{\mathbf{m}\}}\left(\gamma_{P(\mathbf{m})}\right),
\end{equation} then apply $U_{\mathrm{feedback}}$ for $P(\mathbf{n})$ with probability \begin{equation}
    p_{P(\mathbf{n})} = \frac{\gamma_{P(\mathbf{n})}}{\max_{\{\mathbf{m}\}}\left(\gamma_{P(\mathbf{m})}\right)}.
\end{equation}
The Lindbladian thus becomes
\begin{align} \label{eq: local Lindbladian in practice}
    \mathcal{L}'(\rho) &= \sum_P \max_{\{\mathbf{m}\}}\left(\gamma_{P(\mathbf{m})}\right) \sum_{\mathbf{n}}\mathcal{L}^{\mathrm{ideal}}_{P(\mathbf{n})}(\rho) \notag \\
    &= \sum_P \max_{\{\mathbf{m}\}}\left(\gamma_{P(\mathbf{m})}\right) \sum_{\mathbf{n}} \left( p_{P(\mathbf{n})} P(\mathbf{n}) \rho P(\mathbf{n})^\dagger +\left(1- p_{P(\mathbf{n})}\right)\Pi_P (\mathbf{n}) \rho \Pi_P (\mathbf{n}) -   \frac{1}{2}\{ \Pi_P(\mathbf{n}), \rho \} \right) 
    \, ,
\end{align}
which is similar to $\mathcal{L}$ in (\ref{eq: local Lindbladian}), differing only in terms that don't affect the stationarity of $\sigma$.

Fig.~\ref{fig:circuit} shows two examples of such dynamics. There are three stabilizers: $S_1 = Z_1Z_2Z_3Z_4$, $S_2 = Z_3 Z_4$ and $S_3 = Z_4 Z_5$. In the first example, consider Pauli operator $X_3$ that act on $q_3$, there are two stabilizers that anticommute with $X_3$, so jump operators related to $X_3$ would be $X_3(n_1, n_2) = \frac{1}{4}X_3(I + n_1 S_1)(I + n_2 S_2) $. The first dashed box in Fig.~\ref{fig:circuit} together with gates applied on $a_1$ and $a_2$ show how to implement $\mathcal{L}_{X_3(-1,-1)}$. Similarly, the second box and all other gates outside the boxes show how to implement $\mathcal{L}_{X_4(1,-1,-1)}$. The channel $R$ resets the ancilla qubits to $\ket{0}$, and is implemented as follows: \begin{equation}
    R_i[\rho] = \mathrm{tr}_i \rho \otimes |0\rangle\langle 0|_i.
\end{equation}

\begin{figure}[t]
    \centering
    \begin{quantikz} [row sep={0.7cm,between origins}]
    \lstick{$\ket{q_1}$} & \ctrl{5} & & \qw & \qw & \qw & \qw & \qw & \qw & \qw &  \\
    \lstick{$\ket{q_2}$} & \qw & \ctrl{4}   & \qw & \qw & \qw & \qw & \qw & \qw & \qw &  \\
    \lstick{$\ket{q_3}$} & \qw & \qw  & \ctrl{4}   & \qw & \qw & \targ{}\gategroup[6,steps=1,style={dashed,rounded corners, inner sep=2pt},background,label style={label position=below,anchor=north,yshift=-0.2cm}]{{\sc $X_3$}} & \qw & \qw & \qw &  \\
    \lstick{$\ket{q_4}$} & \qw & \qw  & \qw  & \ctrl{4} & \qw & \qw & \qw\gategroup[5,steps=2,style={dashed,rounded corners, inner sep=2pt},background,label style={label position=below,anchor=north,yshift=-0.2cm}]{{\sc $X_4$}} & \targ{} & \qw  &\\
    \lstick{$\ket{q_5}$} & \qw & \qw  & \qw  & \qw & \ctrl{3} & \qw  & \qw & \qw & \qw &\\
    \lstick{$\ket{a_1} = |{0}\rangle$}  & \targ{} & \targ{} & \targ{} & \targ{} & \qw & \ctrl{-3} & \gate{X} & \ctrl{-2} & \gate{R} & \qw \\
    \lstick{$\ket{a_2} = |{0}\rangle$}  & \qw & \qw & \targ{} & \targ{}  & \qw & \ctrl{-1} & \qw & \ctrl{-1} & \gate{R} &\qw \\
    \lstick{$\ket{a_3} = |{0}\rangle$}  & \qw  & \qw & \qw & \targ{} & \targ{} & \qw & \qw & \ctrl{-1} & \gate{R} &\qw
    \end{quantikz}%
    \caption{Two examples of how to implement dynamics in (\ref{eq: local Lindbladian}). There are three stabilizers: $S_1 = Z_1Z_2Z_3Z_4$, $S_2 = Z_3 Z_4$ and $S_3 = Z_4 Z_5$ so we need three ancilla qubits to extract syndromes by the CNOT gates outside the dashed boxes. The gates in the boxes apply $X_3$ and $X_4$ according to different syndromes. The reset channel $R$ forces ancilla qubits to their initial state $\ket{0}$.}
    \label{fig:circuit}
\end{figure}
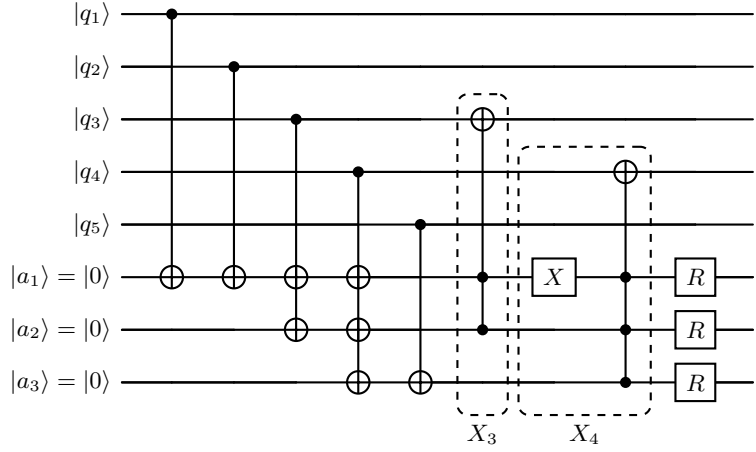

\subsection{Implementing a perfect Gibbs sampler}
Let us illustrate how to perform MFQEC for our Gibbs sampler, using $\le 3$-qubit gates, and reset.  For simplicity, we focus only on the step that ``corrects" $X$ errors on qubit $i$, as the story is analogous for $Z$ errors and other qubits. For now we also assume that the circuit has no unknown errors; we will relax that assumption shortly.   Let ancilla qubits $a_1,\ldots,a_{\Delta_C}$ store the ``measurement outcomes" for the $\Delta_C$ $Z$-checks $C_1,\dots,C_q$ containing physical qubit $i$.  The first step of MFQEC is to apply the unitary 
\begin{equation}
    U_1 = \prod_{q=1}^{\Delta_C} \prod_{j\in\mathrm{supp}(C_q)} \mathsf{CNOT}_{j \rightarrow a_q},
\end{equation}
where $\mathsf{CNOT}_{i\rightarrow j} = \frac{1+Z_i}{2}+\frac{1-Z_i}{2}X_j$.  In the second step, we apply the $\mathsf{SORT}_{\Delta_C}$ channel, which shuffles the ancilla qubits such that the zeros are to the left and the ones are to the right. We later provide an implementation for $\mathsf{SORT}$. In order to realize a perfect Gibbs sampler, we apply a coherent gate that implements the following controlled-unitary using one additional ancilla qubit $c$, which is again initialized in state $|0\rangle$:
\begin{equation}
    U_2 = \left[\prod_{j=1}^{\Delta_C} \mathrm{e}^{\mathrm{i}\theta_j \big(1-Z_{a_j}\big) X_i Z_c/2 }\right]X_c \, ,
\end{equation}
where we choose each angle $\theta_j$ such that
\begin{equation}
    \sin^2 \left(\sum_{k=j}^{\Delta_C}\theta_k\right) = \mathrm{e}^{2\beta \min(0,\Delta_C-2j)} \, . \label{eq:sin2theta}
\end{equation}
With these rotation angles, the cumulative probability of applying $X_i$ matches those precisely set by the Metropolis algorithm. In particular, we \emph{always} apply feedback if doing so reduces the energy, i.e. the number of violated parity checks; however, we also introduce our own errors with a small probability. 
Once this is done, we apply the reset channel to all ancilla qubits. Tracing out $c$ gives us a reduced density matrix that contains superpositions of modes rotated by $\pm \theta$, which is the same as an incoherent Pauli $X$ error with probability $\sin^2\theta$.  

Given the Gibbs sampler above, we can readily convert it into a genuine passive decoder in the presence of unwanted errors.  If, for example, there are single-qubit incoherent $X$ errors that occur at rate $\gamma$, then we modify (\ref{eq:sin2theta}) to \begin{equation}
    \sin^2 \left(\sum_{k=j}^{\Delta_C}\theta_k\right) = \left(\mathrm{e}^{2\beta \min(2j,\Delta_C)}-1\right)\gamma.
\end{equation}
Combining our modified sampler with the intrinsic errors reproduces a perfect Gibbs sampler, which we have seen is a passive quantum memory.

\subsection{Fault tolerance}

Now let us relate the Gibbs sampling dynamics above to quantum error correction. For simplicity, we focus on correcting single-qubit Pauli errors, with generic correction strategies described in \cite{aqm}.    If $Q$ is a Pauli string acting on O(1) sites, then adding such an error modifies our Lindbladian to:
\begin{equation}
    \mathcal{L}(\rho) \to \mathcal{L}(\rho) + g\left( Q\,\rho\, Q - \rho \right)
    \, ,
    \label{eqn:nonunitary error}
\end{equation}
where $g>0$ is the rate of the Pauli error, and $Q$ is a single-qubit Pauli that doesn't commute with some of the stabilizers. Since $Q = \sum_{\mathbf{n}} Q(\mathbf{n})$, we have
\begin{equation}
    g\left( Q\,\rho\, Q - \rho \right) = 
    g\sum_{\mathbf{m},\mathbf{n}} \left( Q(\mathbf{m})\, \rho\, Q(\mathbf{n})^\dagger - \frac12 \{ Q(\mathbf{n})^\dagger Q(\mathbf{m}), \rho \} \right)
    \,.
\end{equation}
Because \begin{equation} \label{eq: offdiagonal terms}
    Q(\mathbf{m})\, \sigma\, Q(\mathbf{n})^\dagger - \frac12 \{ Q(\mathbf{n})^\dagger Q(\mathbf{m}), \sigma \} = \delta_{\mathbf{m}\mathbf{n}} \left( Q \Pi_Q(\mathbf{n})\, \sigma \, \Pi_Q(\mathbf{n}) Q- \frac12 \{ \Pi_Q(\mathbf{n}), \sigma \}\right),
\end{equation} when $\mathbf{m} \neq \mathbf{n}$, this part of error doesn't affect stationarity and doesn't need to be corrected. For the remaining terms that do have $\mathbf{m} = \mathbf{n}$, this part is in the same form as (\ref{eq: local Lindbladian}) with different coefficients. We can just increase these coefficients until the ratios between the coefficients are also the same as (\ref{eq: local Lindbladian}) so the stationarity of $\sigma$ still holds and the error is effectively corrected. As an example, one way to correct the error in (\ref{eqn:nonunitary error}) is to increase the coefficient $\gamma_{Q(\mathbf{n})}$ by 
\begin{equation} \label{eq: Pauli error protocol}
\delta \gamma_{Q(\mathbf{n})} = \text{sgn}\left(-\sum n\right) \left(\mathrm{e}^{-2\beta \sum n}-1 \right)g
\, .
\end{equation}

So far, we assumed all the gates we apply are perfect.   We will now assume that the system is subject to incoherent Pauli errors for each possible qubit (physical and ancilla). For all the \textsf{CNOT}-like gates, the corresponding errors are equivalent to applying random Pauli strings on both sides of the gates. For the reset operations, the errors would lead to the wrong initial states of the ancilla qubits, which is also equivalent to applying Pauli gates to the correct initial states. We can move all the Pauli errors to the right-hand side of these gates and then implement a modified dynamics.  For example, suppose we wish to implement the process $\mathcal{L}_{P(\mathbf{n})}$ described above, which in general contained unitary operations corresponding to gates of the form $\prod_i \mathrm{e}^{\ii \frac{\pi}{4} (I_i -P_i ) (I_a - X_a)}$.  These gates become \begin{equation}
    \prod_i \mathrm{e}^{\ii \frac{\pi}{4} (I_i -P_i ) (I_a - X_a)} \rightarrow \prod_i Z_{\mathrm{error}} \mathrm{e}^{\ii \frac{\pi}{4} (I_i -P_i ) (I_a - X_a)} Z_{\mathrm{error}} = \prod_i \mathrm{e}^{\ii \frac{\pi}{4} (I_i \pm P_i ) (I_a \pm X_a)}, 
\end{equation}where the $\pm$ signs at the end correspond to whether or not the error commuted or anticommuted with the $P_i$ or $X_a$.   We can handle other errors analogously: $\mathrm{e}^{\ii\frac{\pi}{2} (I - P)\prod_a \frac{1}{2} (I_a +n_a Z_a)} \rightarrow \mathrm{e}^{\ii\frac{\pi}{2} (I \pm P)\prod_a \frac{1}{2} (I_a \pm n_a Z_a)}$ after moving all the Pauli errors to the right, which is equivalent to implementing $\mathcal{L}_{P(\mathbf{n}')}$ with $\mathbf{n}' \neq \mathbf{n}$ and then applying the Pauli errors. Recall that $\mathcal{L}_{P(\mathbf{n})}$ was defined in (\ref{eq: local Lindbladian}). Putting all of this together, we find that our Lindbladian $\mathcal{L}_{P(\mathbf{n})}$ is modified to
\begin{align} \label{eq: gate errors}
   \mathcal{L}_{P(\mathbf{n})}(\rho) &\rightarrow \sum _{Q_e,\mathbf{n}'} p(Q_e,\mathbf{n}'| P(\mathbf{n}))\left( Q_e P(\mathbf{n}')\rho P(\mathbf{n}')^\dagger Q_e - \frac{1}{2}\{ \Pi_P(\mathbf{n}'), \rho \} \right) \notag \\
    &\rightarrow \mathcal{L}_{P(\mathbf{n})}^{\mathrm{ideal}}(\rho) + \sum_{Q_e,\mathbf{m}} \delta p_{Q_e P(\mathbf{m}),P(\mathbf{n})} \mathcal{L}_{Q_e P(\mathbf{m})}(\rho) 
    ,
\end{align}
where in the first line, $Q_e$ are the Pauli operations (including $I$, or having no error) that have been moved to the right and $p(Q_e,\mathbf{n}'| P(\mathbf{n}))$ is the probability of having Pauli ``error" $Q_e$ and applying $P(\mathbf{n}')$ when the ``apparent" syndrome measurements returned $\mathbf{n}$. In the second line, we rewrite the expression to emphasize that when $p(Q_e,\mathbf{n}^\prime|P(\mathbf{n}))$ is small (i.e. the gates are high fidelity), the Lindbladian is quite close to the ``ideal case".
As the ancilla qubit feedback only corrects for single qubit errors, we emphasize that $Q_e$ will contain only single qubit terms once we trace out the state of the ancilla qubits (as is appropriate, once they are reset during MFQEC).

In the presence of errors throughout our circuit, it is useful to expand the sum over $\mathbf{n}$ above to incorporate all of the $X$ or $Z$ type stabilizers that act on a given site (not just the ones that anticommute with Pauli $P$).  After all, the error $Q_e$ may be a $Z$ type error if we intended to correct an $X$ type error.  So the most generic type of Lindbladian modeling our MFQEC protocol takes the form of \begin{equation}
    \mathcal{L} = \sum_{P,\mathbf{n}} \gamma_{P(\mathbf{n})}\mathcal{L}^{\mathrm{ideal+error}}_{P(\mathbf{n})}  + \sum_{P,\mathbf{n}}\delta \gamma_{Q_e P(\mathbf{n})} \mathcal{L}_{Q_e P(\mathbf{n})}(\rho) 
\end{equation}where $\mathcal{L}^{\mathrm{ideal+error}}_{P(\mathbf{n})}$ includes the physical qubit errors discussed in (\ref{eqn:nonunitary error}), and \begin{equation}
    \delta \gamma_{Q_eP(\mathbf{n})} = \sum_{Q^\prime P(\mathbf{n}^\prime)} \delta p_{Q_e P(\mathbf{m}),Q^\prime(\mathbf{n^\prime})}\cdot \gamma_{Q^\prime(\mathbf{n}^\prime)}.
\end{equation}  We will consider an ideal protocol where we apply $Y$ errors deliberately at some rate, to ensure that all $\gamma_{P(\mathbf{n})}>0$, which will shortly prove convenient.  Our goal is to show that $\mathcal{L}$ is an \emph{exact} Gibbs sampler, which will be satisfied so long as the detailed balance condition \cite{aqm} \begin{equation} \label{eq:ugly}
(\gamma+\delta \gamma)_{P(\mathbf{n}_{\mathrm{acomm}},\mathbf{n}_{\mathrm{comm}})} = (\gamma+\delta \gamma)_{P(-\mathbf{n}_{\mathrm{acomm}},\mathbf{n}_{\mathrm{comm}})}\mathrm{e}^{-2\beta \sum_{\text{acomm }a} n_a} 
\end{equation}
holds. Here $\mathbf{n}_{\mathrm{acomm/comm}}$ denote the stabilizer values of the anticommuting/commuting stabilizers that act on the same site as single-qubit Pauli $P$.  (\ref{eq:ugly}) is clearly solved if 
\begin{equation}
    (\gamma+\delta \gamma)_{P(\mathbf{n}_{\mathrm{acomm}},\mathbf{n}_{\mathrm{comm}})}  = \gamma_0 \times \max\left(1, \mathrm{e}^{-2\beta \sum_{\text{acomm }a} n_a}\right) = \gamma^{\mathrm{Gibbs}}_{P(\mathbf{n}_{\mathrm{acomm}},\mathbf{n}_{\mathrm{comm}})} \, , \label{eq:solvegibbs}
\end{equation}where $\gamma_0>0$ is some fixed positive constant.  Since $\gamma^{\mathrm{Gibbs}}$ is positive, we deduce that there exists a \emph{finite, $N$-independent} value of $\gamma_0$ such that for sufficiently small error rate $\delta p$, (\ref{eq:solvegibbs}) admits a solution for which all of the rates $\gamma_{P(\mathbf{n})}>0$, because for sufficiently small $\delta p$, the matrix $(I+\delta p)^{-1}$ is invertible and has bounded matrix elements.

\subsection{Sort channel implementation}

The Gibbs sampler that we described in the main text requires a non-unitary $\mathsf{SORT}$ channel to work.  Here we provide an explicit implementation for this channel.  It can be efficiently realized with sorting networks \cite{Art_of_CS, bundala2013, Codish_2014, harder2022}, which are circuits constructed using a 2-input binary comparator as a primitive: see Fig. \ref{fig:comparator} for a circuit depiction and Table \ref{tab:sorting networks} for an optimal gate-cost analysis. The use of native Toffoli (\textsf{CCNOT}) gates in Rydberg systems \cite{Shi_2018, Yin_2020} would allow one to bypass the two-qubit gate decomposition and significantly reduce the overhead.

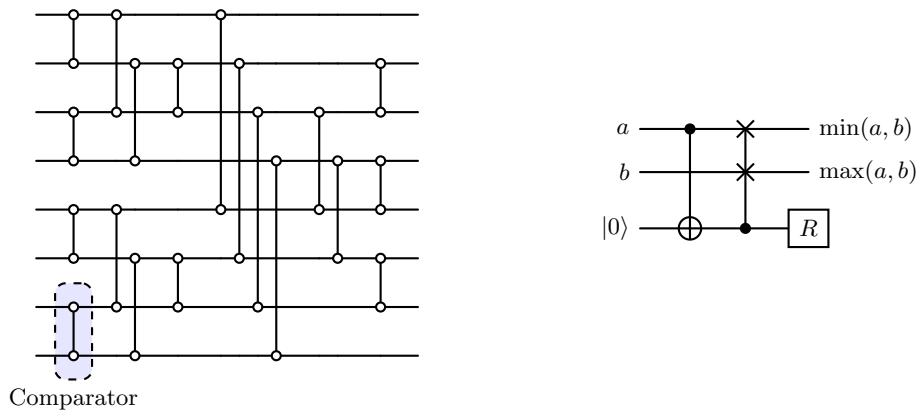
\begin{figure}[t]
    \centering
    \begin{quantikz}[column sep=0.3em]
        & [1em]\ctrl[open]{1} & [1em]\ctrl[open]{2} &&[1em]& [1em]\ctrl[open]{4} &&&&[1em]&&[1em]&[1em]  \\
        & \ctrl[open]{0} && \ctrl[open]{2} & \ctrl[open]{1} && \ctrl[open]{4} &&&&& \ctrl[open]{1} &  \\
        & \ctrl[open]{1} & \ctrl[open]{0} && \ctrl[open]{0} &&& \ctrl[open]{4} && \ctrl[open]{2} && \ctrl[open]{0} &  \\
        & \ctrl[open]{0} && \ctrl[open]{0} &&&&& \ctrl[open]{4} && \ctrl[open]{2} & \ctrl[open]{1} &  \\
        & \ctrl[open]{1} & \ctrl[open]{2} &&& \ctrl[open]{0} &&&& \ctrl[open]{0} && \ctrl[open]{0} &  \\
        & \ctrl[open]{0} && \ctrl[open]{2} & \ctrl[open]{1} && \ctrl[open]{0} &&&& \ctrl[open]{0} & \ctrl[open]{1} &  \\
        & \ctrl[open]{1}\gategroup[2,steps=1,style={dashed,rounded corners,fill=blue!10, inner xsep=2pt},background,label style={label position=below,anchor=north,yshift=-0.2cm}]{Comparator} & \ctrl[open]{0} && \ctrl[open]{0} &&& \ctrl[open]{0} &&&& \ctrl[open]{0} &  \\
        & \ctrl[open]{0} && \ctrl[open]{0} &&&&& \ctrl[open]{0} &&&& 
    \end{quantikz}
    \hspace{2cm}
    \begin{quantikz}
        \lstick{$a$} & \ctrl{2} & \swap{1} & \rstick{$\min(a,b)$} \\
        \lstick{$b$} & & \swap{1} & \rstick{$\max(a,b)$} \\
        \lstick{$\ket{0}$} & \targ{} & \ctrl{0} & \gate{R}
    \end{quantikz}
    \caption{\textbf{Left:} An optimal sorting network for 8 inputs is depicted using 19 comparators with a circuit depth of 6 \cite{Batcher_1968}. \textbf{Right:} The 2-input comparator gadget (boxed in the left circuit) is shown, which sorts the binary inputs $a,b$ using an ancilla, a \textsf{CNOT} gate and a Fredkin (\textsf{CSWAP}) gate. The Fredkin gate can be decomposed into 7 \textsf{CNOT} gates and single-qubit rotations \cite{Cruz_2024}.}
    \label{fig:comparator}
\end{figure}

\begin{table}[t]
    \centering
    \def\arraystretch{1.5}
    \setlength\tabcolsep{0.35em}
    \begin{tabular}{|c|c|c|c|c|c|c|c|c|c|c|}
    \hline
    $n^{}_{\rm in}$ & 3  & 4  & 5  & 6  & 7   & 8   & 9   & 10  & 11  & 12  \\ \hline
    Depth            & 3  & 3  & 5  & 5  & 6   & 6   & 7   & 7   & 8   & 8   \\ \hline
    Size             & 3  & 5  & 9  & 12 & 16  & 19  & 25  & 29  & 35  & 39  \\ \hline
    2-qubit gates       & 24 & 40 & 72 & 96 & 128 & 152 & 200 & 232 & 280 & 312 \\ \hline
    \end{tabular}
    \caption{The depth and size (number of comparators) of optimal sorting networks for a range of inputs is tabulated. The number of 2-qubit gates ($8\times$ size) upon gate decomposition is also shown.}
    \label{tab:sorting networks}
\end{table}

As a concrete example, let us now compare  passive vs. active quantum error correction for the smallest (guaranteed) self-correcting HGP code, again focusing on the case for $X$ errors as the $Z$ error discussion is analogous.  To keep the discussion a little bit simpler, we focus on just a simple-majority ``greedy decoder", which we expect will perform at least as well as the Gibbs sampler (which sometimes introduces errors). We draw a random classical code from the regular $(7,8)$-LDPC ensemble which exhibits $\delta<1/6$ expansion with high probability (Theorem \ref{thm:random expansion}). We then take the hypergraph product of this classical code with itself to obtain a $(8,15)$-LDPC quantum CSS code. Since the check weight is 15, the syndrome extraction circuit to couple measure/ancilla qubits with physical qubits via two-qubit gates has depth at least 15. At this point, the procedures for active and passive error correction differ. For active error correction, we would now simply measure all ancilla qubits in the computational basis and feed the results into a classical decoding algorithm. The output of this decoding algorithm then allows us to undo the error or alternatively track the Pauli frame of the code in software. For passive error correction, we essentially need to run the decoding algorithm as a quantum circuit itself. Since each physical qubit participates in no more than 8 checks, according to Table \ref{tab:sorting networks}, we need to run a depth-6 sorting network on 8 ancilla qubits per physical qubit. Each comparator contains a Fredkin gate (Fig. \ref{fig:comparator}), which can be further decomposed into 2 \textsf{CNOT}s and a Toffoli. With the ability to perform native Toffoli gates, our ancilla sorting circuit will thus have depth 24; we note that further parallelization may reduce this depth. After sorting, we then apply a \textsf{CNOT} gate between the 4th ancilla qubit from the right (control) and the physical qubit (target), which implements the majority vote. The total circuit depth for passive decoding after syndrome extraction is hence 25, assuming native Toffoli gates.  Using a direct majority-vote circuit, rather than sorting, may further reduce the gate cost \cite{Kheirandish_2020}; e.g. a 5-input majority voter would use 50 2-qubit gates instead of 72 for the sorting network.

For disconnected physical qubits that do not share any checks, we run simultaneous decoding in parallel. To run this entire decoding step in one round, we can introduce 14 additional ancilla qubits per syndrome qubit to which we copy the syndrome values via \textsf{CNOT} gates, thus enabling simultaneous decoding on all physical qubits.  The copying of each measurement outcome to many different ancillas will add at most circuit depth 14.

We note that additional overhead will be required if we wanted to implement the full Gibbs sampler.

\bibliography{thebib}
\end{document}